\title{Persuasive Selection in Signaling Games\thanks{First draft: Oct 2025. I thank my advisor, Marek Pycia, for his guidance and support throughout this project. For their comments, I thank Christian Ewerhart, Jindi Huang, Nick Netzer, Armin Schmutzler, Joel Watson, and talk audiences at the University of Zurich, the 9th Swiss Theory Day. All errors are my own.}}
\author{Haoyuan Zeng\thanks{University of Zurich. Email: haoyuan.zeng@econ.uzh.ch.}}
\date{\today \\ \href{https://haoyuanzeng.github.io/papers/PS-Zeng.pdf}{Latest version here.}}
\begin{document}

\maketitle

\begin{abstract}
    This paper introduces a novel criterion, persuasiveness, to select equilibria in signaling games. In response to the Stiglitz critique, persuasiveness focuses on the comparison across equilibria. An equilibrium is more persuasive than an alternative if the set of types of the sender who prefer the alternative would sequentially deviate to the former once other types have done so---that is, if an unraveling occurs. Persuasiveness has strong selective power: it uniquely selects an equilibrium outcome in monotone signaling games. Moreover, in non-monotone signaling games, persuasiveness refines predictions beyond existing selection criteria. Notably, it can also select equilibria in cheap-talk games, where standard equilibrium refinements for signaling games have no selective power.
\end{abstract}

\keywords{Equilibrium Selection, Persuasiveness, Signaling Games, Stiglitz Critique}

\JELcodes{C70, C72, D82}

\newpage

\section{Introduction}

Many economic interactions can be modeled as a signaling game, where an informed sender sends a message to an uninformed receiver. The receiver responds by taking an action that is payoff-relevant to both players. The seminal paper on signaling games is \citet{spenceJobMarketSignaling1973}'s job market signaling model. Since then, a vast body of literature has applied signaling games across a wide range of fields, including advertising, bargaining, finance, industrial organization, and reputation.\footnote{For a comprehensive survey of the signaling games literature, see \citet{rileySilverSignalsTwentyFive2001} and \citet{sobelSignalingGames2009}. Applications of signaling games span several fields: in advertising, see \citet{nelsonAdvertisingInformation1974}; in bargaining, see \citet{fudenbergSequentialBargainingIncomplete1983} and \citet{sobelMultistageModelBargaining1983}; in finance, see \citet{lelandInformationalAsymmetriesFinancial1977} and \citet{johnDividendsDilutionTaxes1985}; in industrial organization, see \citet{milgromLimitPricingEntry1982}; and in reputation, see \citet{barroRulesDiscretionReputation1983}.}

It is well-known that signaling games often lead to many sequential equilibria \citep{krepsSequentialEquilibria1982}. The multiplicity of equilibria limits the usefulness of the model in analyzing the underlying economic problem, as it fails to yield precise predictions regarding the outcome of strategic interactions. Moreover, for some problems at least, many equilibria seem implausible due to the beliefs associated with messages off the equilibrium path, i.e., off-path beliefs.

A number of equilibrium refinements have been proposed to address the issue of equilibrium multiplicity in signaling games. The classic approach to refining equilibria in signaling games is to formalize plausible restrictions on off-path beliefs, such as the intuitive criterion \citep{choSignalingGamesStable1987} and the divinity criterion \citep{banksEquilibriumSelectionSignaling1987}, which are based on the concepts of strategic stability and forward induction \citep{kohlbergStrategicStabilityEquilibria1986}. The idea is that if sending an off-path message can be interpreted as a signal of certain types of the sender who would be better off by sending this message than following the proposed equilibrium, then this equilibrium fails to pass the equilibrium selection criterion. It is well recognized that this type of reasoning is imperfect.
\begin{quote}
    \textit{...the justification for these restrictions are that they are more ``intuitive''; they go with stories about how players might reason at off-path information sets. I put scare quotes around ``intuitive,'' because what is and isn't intuitive is largely subjective; it is your job to judge which of these restrictions appeals to your intuition and which do not.} \citep[p.~647]{krepsMicroeconomicFoundationsII2023}
\end{quote}

In particular, it has been criticized by Joseph Stiglitz \citep[p.~202]{choSignalingGamesStable1987} for logical inconsistencies stemming from restricting off-path beliefs while holding on-path beliefs fixed. Such inconsistencies lead to the fact that the intuitive criterion always selects the Pareto-dominant separating equilibrium outcome (the Riley outcome \citep{rileyInformationalEquilibrium1979}) in a two-type Spencian model, regardless of how unlikely it is that the worker is of low type. In other words, even when asymmetric information is nearly absent from the firm's perspective---because the low-type worker is extremely rare---the high-type worker nevertheless separates by attaining a high level of education. This separating outcome is substantially worse than the pooling equilibrium outcome, in which the firm would treat them almost as if they were high-type without requiring any costly education. We elaborate on why the intuitive criterion eliminates the pooling equilibrium in Example~\ref{ex:stiglitz} below, where we discuss the Stiglitz critique in detail.

The Stiglitz critique points out that the meaning of messages (beliefs)---whether on-path or off-path---should always be interpreted in equilibrium. As a result, imposing restrictions on off-path beliefs generally also affects on-path beliefs. When both beliefs are consistent with each other, they constitute equilibrium beliefs in an alternative equilibrium. Thus, equilibrium selection reduces to a comparison across alternative equilibria. Based on this idea, \citet*{mailathBeliefBasedRefinementsSignalling1993} introduce the concept of undefeated equilibrium, which is immune to the Stiglitz critique. Nevertheless, in monotone signaling games, such as the job market signaling model \citep{spenceJobMarketSignaling1973}, their criterion typically fails to select a unique equilibrium outcome.

This paper introduces a novel criterion to select equilibria in signaling games, which is called \emph{persuasiveness}. In response to the Stiglitz critique, persuasiveness focuses on the comparison across equilibria. Because different types of the sender may obtain different payoffs in distinct equilibria, they may prefer different equilibria. Accordingly, it is reasonable for the receiver to expect that each type of the sender would like to play the equilibrium which gives them a higher payoff. Persuasiveness formalizes this logic of forward induction by considering how the receiver interprets messages across different equilibria.

To illustrate, consider the two-type Spencian model introduced earlier. When the low-type worker is sufficiently unlikely, both types of the worker prefer the pooling equilibrium to the separating equilibrium, because it yields a higher equilibrium payoff for each type of the worker. Hence, it is reasonable for the firm to expect that both types of the worker would like to play the pooling equilibrium rather than the separating equilibrium. Upon observing a zero education level, the firm therefore finds it more \emph{persuasive} to interpret this message according to the pooling equilibrium---believing that both types choose zero education---than according to the separating equilibrium, in which only the low-type worker does so. In this sense, the pooling equilibrium is more \emph{persuasive} than the separating equilibrium.

More generally, let \(\sigma\) be the putative equilibrium. The problem of equilibrium selection can be conceptualized as the receiver posing the following question to themselves: is there a message \(\overline{m}\) on-path of an alternative equilibrium \(\overline{\sigma}\) for which it is more \emph{persuasive} to interpret \(\overline{m}\) in \(\overline{\sigma}\) than what \(\sigma\) prescribes? Interpreting \(\overline{m}\) in the context of \(\overline{\sigma}\) follows directly from the Stiglitz critique, which emphasizes that the meaning of a message should always be interpreted in equilibrium. In the two-type Spencian model, where the low-type worker is sufficiently unlikely, \(\sigma\) corresponds to the separating equilibrium, \(\overline{\sigma}\) to the pooling equilibrium, and \(\overline{m} = 0\) to the zero education level.

In order for \(\overline{\sigma}\) to serve as a challenger to \(\sigma\), there must exist at least one type of the sender who strictly prefers \(\overline{\sigma}\) to \(\sigma\)---that is, whose equilibrium payoff in \(\overline{\sigma}\) is strictly higher than in \(\sigma\). In the absence of such a type, every type of the sender who sends \(\overline{m}\) in \(\overline{\sigma}\) would instead prefer \(\sigma\) to \(\overline{\sigma}\), implying that the receiver should not expect \(\overline{m}\) to be sent in \(\overline{\sigma}\). When such a type exists, the receiver reasons about the set of types of the sender who send \(\overline{m}\) in \(\overline{\sigma}\) by dividing them into two groups. The first group consists of the set of types of the sender who prefer the challenger \(\overline{\sigma}\) to the putative equilibrium \(\sigma\), while the second group consists of the set of types of the sender who strictly prefer \(\sigma\) to \(\overline{\sigma}\). 

When the second group is empty, \(\overline{\sigma}\) is more \emph{persuasive} than \(\sigma\) in the sense that every type of the sender who sends \(\overline{m}\) in \(\overline{\sigma}\) prefers \(\overline{\sigma}\) to \(\sigma\), and the receiver should expect them to do so. In the two-type Spencian model, where the low-type worker is sufficiently unlikely, the pooling equilibrium \(\overline{\sigma}\) can challenge the separating equilibrium \(\sigma\) with an alternative interpretation of the zero education level \(\overline{m} = 0\)---namely, that this message \(\overline{m}\) is sent by both types rather than solely by the low type. In this scenario, both types of the worker belong to the first group, while the second group is empty. The pooling equilibrium is therefore more \emph{persuasive} than the separating equilibrium, as both types of the worker prefer the pooling equilibrium to the separating equilibrium.

When the second group is non-empty (see Example~\ref{ex:undefeated} below), we check whether the interpretation of \(\overline{m}\) in \(\overline{\sigma}\) rather than in \(\sigma\) provides a rationale for why every type of the sender in the second group would ultimately prefer to play the challenger \(\overline{\sigma}\) rather than the putative equilibrium \(\sigma\), even though they initially prefer \(\sigma\). Consider the simple case where the second group contains only one type \(t'\). When the receiver observes a message \(m'\) sent by type \(t'\) in \(\sigma\), the receiver expects that this message is not sent by the set of types of the sender in the first group, because they would prefer to send \(\overline{m}\) in \(\overline{\sigma}\) instead. If, relative to the equilibrium payoff that type \(t'\) would obtain in the challenger \(\overline{\sigma}\), type \(t'\) is worse off when sending the message \(m'\) in \(\sigma\)---given that the receiver best responds to \(m'\) under the conditional belief that excludes all types in the first group---then type \(t'\) will also have an incentive to deviate to \(\overline{\sigma}\), even though they initially prefer \(\sigma\). In this case, the second group \emph{unravels} as a result of the first group's deviation.

When the second group contains multiple types, we examine them sequentially according to a specified ranking. For each type \(t'\) in this group, we compare their equilibrium payoff in the challenger \(\overline{\sigma}\) with the payoff they would obtain by sending some message \(\tilde{m}'\) in \(\sigma\), assuming the receiver best responds to \(\tilde{m}'\) under the conditional belief that excludes all types in the first group, as well as all types ranked above \(t'\). If, under this belief, type, type \(t'\) is worse off in \(\sigma\) than in \(\overline{\sigma}\), then type \(t'\) will also have an incentive to deviate to \(\overline{\sigma}\). In this case, we say that the interpretation of \(\overline{m}\) in \(\overline{\sigma}\) rather than in \(\sigma\) triggers an \emph{unraveling} (Definition~\ref{def:unraveling}).

The unraveling starts with the highest-ranked type, who finds it better off deviating to \(\overline{\sigma}\) than playing \(\sigma\) if the receiver believes only the second group would play \(\sigma\). Conditional on higher-ranked types having deviated, the unraveling then proceeds to the highest-ranked undeviated type, who similarly prefers deviating to \(\overline{\sigma}\) if the receiver believes only the remaining undeviated types in the second group would play \(\sigma\). This iterative argument continues until all types in the group have deviated. The reasoning is similar to the classical analysis of voluntary disclosure in \citet{grossmanDisclosureLawsTakeover1980}, \citet{grossmanInformationalRoleWarranties1981}, \citet{milgromGoodNewsBad1981}, and \citet{verrecchiaDiscretionaryDisclosure1983}.

Unraveling provides the receiver with a rationale for why every type of the sender who sends \(\overline{m}\) in \(\overline{\sigma}\) would ultimately prefer to play \(\overline{\sigma}\) rather than \(\sigma\), regardless of their initial preference. It is thus more persuasive to interpret \(\overline{m}\) in \(\overline{\sigma}\) rather than in \(\sigma\). Note that we have a trivial unraveling when the second group is empty. Hence, we say that \(\overline{\sigma}\) is \emph{more persuasive} than \(\sigma\) if there exists a message \(\overline{m}\) on the equilibrium path of \(\overline{\sigma}\) such that the interpretation of \(\overline{m}\) in \(\overline{\sigma}\) rather than in \(\sigma\) triggers an unraveling.

An equilibrium \(\overline{\sigma}\) is \emph{most persuasive} if \(\overline{\sigma}\) is more persuasive than any other equilibrium \(\sigma\) that is not payoff-equivalent for the sender. In other words, the most persuasive equilibrium \(\overline{\sigma}\) can always challenge any other equilibrium with a new interpretation of some message such that every type of the sender who sends that message in \(\overline{\sigma}\) would like to deviate to that message and play \(\overline{\sigma}\). If the most persuasive equilibrium is unique up to payoff equivalence for the sender, then no other equilibrium can challenge the most persuasive equilibrium with new interpretations of any messages. Hence, when there exists a unique most persuasive equilibrium, the interpretations of all messages in the game are determined, in the sense that no other equilibrium can provide a more persuasive interpretation of any message.

Unlike previous equilibrium refinements, our selection criterion emphasizes how the receiver interprets messages rather than how the sender signals through off-path messages. In particular, it does not require the receiver to detect that the sender is deviating from \(\sigma\) by sending an off-path message. Instead, by introspection, the receiver finds it more persuasive to interpret \(\overline{m}\) in \(\overline{\sigma}\) rather than in \(\sigma\). For instance, in the two-type Spencian model, the zero education level is on-path of the separating equilibrium. By introspection, the firm finds it more persuasive to interpret this message as being sent by both types when the low-type worker is sufficiently unlikely. Persuasiveness therefore has selective power even when \(\overline{m}\) is on the equilibrium path of \(\sigma\) (see also Example~\ref{ex:discrete} below). The logic of forward induction underlying persuasiveness is that the receiver expects the sender would like to play the equilibrium which gives them a higher payoff---a natural assumption given that the sender moves first.

Persuasiveness has strong selective power. In monotone signaling games, such as the job market signaling model \citep{spenceJobMarketSignaling1973}, it uniquely selects the equilibrium outcome that provides the highest type of the sender with the maximum equilibrium payoff (lexicographically maximum outcome). Next, we illustrate by examples that persuasiveness has stronger selective power than other equilibrium refinements in some non-monotone signaling games (see Table~\ref{tab:comparison} for a summary of the comparison). We also discuss the limitations of persuasiveness, noting that cyclicality can emerge in the absence of a unique most persuasive equilibrium. We explain why it may be reasonable to consider the \emph{least persuasive} equilibrium in some cases. Lastly, we demonstrate that persuasiveness can have good selective power even in cheap-talk games, where standard equilibrium refinements for signaling games have no selective power.

The paper is organized as follows. Section~\ref{sec:setup} introduces the setup. Section~\ref{sec:persuasive} discusses the Stiglitz critique and formally defines persuasiveness. Section~\ref{sec:monotone-signaling-games} studies persuasiveness in monotone signaling games. Section~\ref{sec:discussion} discusses the extensions and limitations of our selection criterion. Section~\ref{sec:related-literature} reviews the related literature. Section~\ref{sec:conclusion} concludes the paper. All proofs are relegated to Appendix~\ref{sec:proofs}. In Appendix~\ref{sec:refinements}, we offer an intuitive explanation of some equilibrium refinements in the previous literature.

\section{Setup}\label{sec:setup}

The signaling game $G$ is described as follows. There is a sender (\(S\)) and a receiver (\(R\)). It is common knowledge that Nature draws the type \(t\) of the sender from a non-empty set of types $T$ according to a prior probability distribution $p \in \Delta\left(T\right)$ with full support. The sender privately observes \(t\) and chooses a message $m$ from a non-empty set of messages $M$. The receiver only observes the message \(m\) and chooses an action $a$ from a non-empty set of actions $A$. Both the sender's and the receiver's payoffs depend on the message, the action, and the sender's type, i.e., $u_{S},u_{R}:T\times M\times A\to\mathbb{R}$.

A strategy for the sender \(\sigma_{S}\) maps types to distributions over messages, i.e., $\sigma_{S}:T\to\Delta\left(M\right)$. Let $\sigma_{S}\left(\left.m\right|t\right)$ denote the probability of the type $t$ sender choosing the message $m$. Upon receiving the sender's message, the receiver updates their belief \(\mu\) over the sender's types, i.e., $\mu:M\to\Delta\left(T\right)$. Let $\mu\left(\left.t\right|m\right)$ denote the receiver's posterior belief about the sender being type $t$ after receiving the message $m$. A strategy for the receiver \(\sigma_{R}\) maps messages to distributions over actions, i.e., $\sigma_{R}:M\to\Delta\left(A\right)$. Let $\sigma_{R}\left(\left.a\right|m\right)$ denote the probability of action $a$ being played upon receiving the message $m$. Let $\text{BR}\left(m,\mu\right)$ be the set of actions that are best responses to the message $m$ given the belief $\mu$. That is,
\[
\text{BR}\left(m,\mu\right)=\arg\max_{a\in A}\sum_{t\in T}u_{R}\left(t,m,a\right)\mu\left(\left.t\right|m\right).
\]
Let \(\text{supp}\left(\cdot\right)\) be the support of a function, i.e., the set of points at which the function is non-zero. For example, if $m \in \text{supp}\left(\sigma_{S}\left(t\right)\right)$, then $\sigma_{S}\left(\left.m\right|t\right) > 0$.

A sequential equilibrium $\sigma$ is a strategy-belief profile
\[
\sigma = \left(\sigma_{S}\left(\left.m\right|t\right),\sigma_{R}\left(\left.a\right|m\right),\mu\left(\left.t\right|m\right)\right)_{t \in T, m\in M}
\]
that satisfies sequential rationality and consistency.\footnote{Since the signaling game is a two-period game with observed actions and independent types, any perfect Bayesian equilibrium is also a sequential equilibrium \citep{fudenbergPerfectBayesianEquilibrium1991}.}
\begin{definition}
    $\sigma=\left(\sigma_{S},\sigma_{R},\mu\right)$ is a \emph{sequential equilibrium} if,
    \begin{enumerate}
        \item Sequential Rationality: $\forall t \in T$,
        \[
        \text{supp}\left(\sigma_{S}\left(t\right)\right)\subseteq\arg\max_{m\in M}\sum_{a\in A}u_{S}\left(t,m,a\right)\sigma_{R}\left(\left.a\right|m\right),
        \]
        where $\text{supp}\left(\sigma_{R}\left(m\right)\right)\subseteq\text{BR}\left(m,\mu\right) \ \forall m \in M$.
        \item Consistency: $\forall t \in T$, $\forall m\in M$,
        \[
        \mu\left(\left.t\right|m\right)=\frac{\sigma_{S}\left(\left.m\right|t\right)p\left(t\right)}{\sum_{t' \in T}\sigma_{S}\left(\left.m\right|t'\right)p\left(t'\right)}
        \]
        conditional on $\sum_{t' \in T}\sigma_{S}\left(\left.m\right|t'\right)p\left(t'\right)>0$.
\end{enumerate}
\end{definition}

Let $\text{SE}\left(G\right)$ denote the set of sequential equilibria in the game $G$. Let $\text{PSE}\left(G\right)$ denote the set of pure-strategy sequential equilibria in the game $G$. Let $u_{S}\left(t,\sigma\right)$ denote the expected payoff of the type $t$ sender in the equilibrium $\sigma$.

A message \(m\) is off the equilibrium path of \(\sigma\) if \(\sum_{t \in T} \sigma_{S}\left(\left.m\right|t\right)p\left(t\right) = 0\), i.e., no type of the sender sends \(m\) with positive probability in \(\sigma\). We also call \(m\) an off-path message (with respect to \(\sigma\)). The belief \(\mu\left(\cdot|m\right)\) of the receiver after observing \(m\) is called an off-path belief if \(m\) is off the equilibrium path of \(\sigma\).

\section{Persuasive Selection}\label{sec:persuasive}

In this section, we start with the intuitive criterion \citep{choSignalingGamesStable1987}, one of the most widely used equilibrium refinements in signaling games. Using Example~\ref{ex:stiglitz}, we illustrate how it is subject to the Stiglitz critique---a limitation that also applies to related selection criteria \citep{banksEquilibriumSelectionSignaling1987,grossmanPerfectSequentialEquilibrium1986}. Then, we introduce the concept of persuasiveness, which is immune to the Stiglitz critique.

The intuitive criterion is based on the idea of forward induction \citep{kohlbergStrategicStabilityEquilibria1986}: if sending an off-path message can be interpreted as a signal of certain types of the sender who would be better off by sending this message than following the proposed equilibrium, then this equilibrium fails to pass the intuitive criterion. Verifying whether an equilibrium \(\sigma\) fails the intuitive criterion involves the following two steps.\footnote{This explanation of the intuitive criterion follows from \citet{munoz-garciaIntuitiveDivinityCriterion2011}.}

\begin{itemize}
    \item Step 1: Which types of the sender \emph{could benefit} by sending an off-path message \(m\)?
    
    We denote the set of such types as \(D\). Formally,
    \[
    D = \left\{\left. t \in T \right| u_{S}\left(t, \sigma\right) \leq \max_{a \in \text{BR}\left(m, \Delta\left(T\right)\right)} u_{S}\left(t, m, a\right)\right\},
    \]
    where \(\text{BR}\left(m, \Delta\left(T\right)\right) = \cup_{\mu \in \Delta\left(T\right)}\text{BR}\left(m, \mu\right)\).

    \item Step 2: If deviations only come from the set of types of the sender identified in Step 1, is the \emph{lowest} payoff from deviating higher than their equilibrium payoff for some type of the sender?

    Formally, if there exists \(t \in D\) such that
    \[
    \min_{a \in \text{BR}\left(m, \Delta\left(D\right)\right)} u_{S}\left(t, m, a\right) > u_{S}\left(t, \sigma\right),
    \]
    then this equilibrium \(\sigma\) fails the intuitive criterion.
\end{itemize}

The reader is referred to \citet{choSignalingGamesStable1987} for a detailed discussion of the intuitive criterion. The authors are well aware of the subtleties involved.
\begin{quote}
    \textit{``Despite the name we have given it, the Intuitive Criterion is not completely intuitive.''} \citep[p.~202]{choSignalingGamesStable1987}
\end{quote}

Here, we highlight that Step 1 performs a \emph{best-case} scenario analysis for the sender, while Step 2 performs a \emph{worst-case} scenario analysis. In Step 1, we choose the \emph{best} action for the sender as long as the receiver's action is a best response to some belief---equivalently, we select the \emph{most favorable} receiver's belief for the sender. In Step 2, by contrast, we choose the \emph{worst} action for the sender---equivalently, we select the \emph{least favorable} receiver's belief for the sender. Consequently, beliefs jump from one extreme to the other when moving from Step 1 to Step 2. This asymmetry in the treatment of beliefs underlies the Stiglitz critique, as illustrated in Example~\ref{ex:stiglitz} below. As we will see later, it also accounts for the intuitive criterion's weak selective power among pooling equilibria (see below Example~\ref{ex:hiding}).

\subsection{The Stiglitz Critique}\label{sec:stiglitz-critique}

We consider a simple two-type version of \citet{spenceJobMarketSignaling1973}'s job market signaling model.

\begin{example}[Two-Type Spencian Game]\label{ex:stiglitz}
    There are two types of a worker (sender), \(T = \left\{t_{L}, t_{H}\right\}\), where \(t_{L} = 1\) and \(t_{H} = 2\). We call \(t_{L}\) the low-type worker, and \(t_{H}\) the high-type worker. The prior probability that the worker is low-type is \(p \in \left(0,1\right)\). The worker chooses an education level \(m \in M = [0,\infty)\). The cost of acquiring education is \(c\left(t, m\right) = \frac{m}{t}\). A firm (receiver) observes the education level of the worker and offers a wage of \(a \in A = \mathbb{R}_{+}\) in a competitive market. In equilibrium, the firm always earns zero profit, i.e., \(\sigma_{R}\left(m\right) = \mathbb{E}_{\mu}\left[\left.t\right|m\right]\).\footnote{Strictly speaking, we should model this as a game with two firms, who would then engage in a Bertrand competition for the worker. A single firm with the payoff function \(u_{R}\left(t, m, a\right) =-\left(t-a\right)^{2}\) yields the similar behavior.} The payoff of a worker of type \(t\) who chooses an education level \(m\) and receives a wage of \(a\) is given by \(u_{S}\left(t,m,a\right) = a - c\left(t,m\right)\).

    It is well-known that this game has a Pareto-dominating separating equilibrium, i.e., the Riley equilibrium \(\sigma^{\text{Riley}}\) \citep{rileyInformationalEquilibrium1979}, in which the low-type worker chooses \(m=0\) and the high-type worker chooses \(m=1\). The firm offers a wage of \(a=1\) to a worker with \(m=0\) and \(a=2\) to a worker with \(m=1\). The payoffs of the low-type and the high-type workers in this equilibrium are \(u_{S}\left(t_{L},\sigma^{\text{Riley}}\right) = 1\) and \(u_{S}\left(t_{H},\sigma^{\text{Riley}}\right) = 1.5\), respectively.

    This game also has a pooling equilibrium \(\sigma^{\text{Pooling}}\), in which both types of workers choose \(m=0\) and the firm offers an expected wage of \(a=\mathbb{E}_{\mu^{\text{Pooling}}}\left[\left.t\right|m=0\right] = 2 - p\). The payoffs of the low-type and the high-type workers in this equilibrium are \(u_{S}\left(t_{L},\sigma^{\text{Pooling}}\right) =u_{S}\left(t_{H},\sigma^{\text{Pooling}}\right) = 2 - p\).\footnote{There exist other equilibria, but they are not relevant for our discussion here because they are generally deemed implausible by any equilibrium refinement.}

    For both equilibria, after an off-path message, the firm believes that the worker is low-type with probability one. We can summarize the equilibrium strategies and payoffs of the low-type and high-type workers in the two equilibria in Table~\ref{tab:stiglitz}.
    \begin{table}[ht!]
        \centering
        \bgroup
        \def\arraystretch{1.5}
        \begin{tabular}{ccc}
        \toprule
            & \(t_L\) & \(t_H\) \\
        \midrule
        \(\sigma_{S}^{\text{Riley}}\left(t\right)\)    &  0 & 1  \\
        \(\sigma_{S}^{\text{Pooling}}\left(t\right)\)    & 0 & 0 \\
        \bottomrule
        \end{tabular}
        \quad
        \begin{tabular}{ccc}
        \toprule
            & \(t_L\) & \(t_H\) \\
        \midrule
        \(u_{S}\left(t,\sigma^{\text{Riley}}\right)\)    &  1 & 1.5  \\
        \(u_{S}\left(t,\sigma^{\text{Pooling}}\right)\)    & \(2-p\) & \(2-p\)  \\
        \bottomrule
        \end{tabular}
        \egroup
        \caption{Equilibrium Strategies and Payoffs of the Worker in Example~\ref{ex:stiglitz}\label{tab:stiglitz}}
    \end{table}
\end{example}

The intuitive criterion uniquely selects the Riley outcome in \(\sigma^{\text{Riley}}\) irrespective of \(p\) \citep{choSignalingGamesStable1987}.\footnote{The intuitive criterion uniquely selects the Riley outcome instead of the Riley equilibrium \(\sigma^{\text{Riley}}\) itself. There can be multiple Riley equilibria which produce exactly the same Riley outcome on path. However, they can differ on the off-path beliefs, which cannot be uniquely pinned down in general.} Notice that when \(p\) is close to zero, i.e., the low-type worker is very unlikely, the pooling outcome in \(\sigma^{\text{Pooling}}\) Pareto-dominates the Riley outcome in \(\sigma^{\text{Riley}}\) for each type of the worker. Intuitively, when \(p\) is close to zero, the firm offers a wage close to \(2\) in the pooling equilibrium. The high-type worker does not want to incur a cost to separate themselves from the low-type worker, because it is almost the same as if there were no low-type worker at all. In particular, when \(p = 0\), the pooling outcome degenerates to the equilibrium outcome when there is no low-type worker.\footnote{In contrast, the Riley outcome stays the same as long as \(p>0\). When \(p=0\), the equilibrium payoff of the high-type worker jumps from 1.5 to 2. There is generally a discontinuity in the Riley outcome as one of the prior probabilities goes to zero for any finite type space \citep*{mailathBeliefBasedRefinementsSignalling1993}.} It seems counterintuitive that the pooling equilibrium fails the intuitive criterion, even though its outcome Pareto-dominates the Riley outcome for each type of the worker. This raises the question of why the intuitive criterion rejects the pooling equilibrium when \(p\) approaches zero.

To see this, consider an off-path education level \(m' \in (p, 2p)\). In Step 1 of the intuitive criterion, the firm would at most offer a wage of 2 after observing \(m'\). The payoff of the low-type worker if deviating to \(m'\) is at most \(2 - m' < 2 - p\), while the payoff of the high-type worker if deviating to \(m'\) is at most \(2 - \frac{m'}{2} > 2 - p\). Then, only the high-type worker could benefit by deviating to \(m'\). In Step 2, since the firm believes that only the high-type worker would choose \(m'\), the firm would offer a wage of 2 and the high-type worker would benefit from this deviation. Hence, the pooling equilibrium fails the intuitive criterion.

The Stiglitz critique points out that the above reasoning is flawed. The two steps of the intuitive criterion are usually motivated as the high-type worker making an implicit ``speech'' to the firm by deviating to \(m'\). However, this ``speech'' induces an inconsistent ``story'' when taking into account the low-type worker. Following the logic of the intuitive criterion, the firm would now believe that the worker is low-type with probability one after observing \(m = 0\) because the high-type worker would have deviated to \(m'\). Then, the low-type worker would not keep choosing \(m = 0\) as in the pooling equilibrium because the firm would only offer a wage of 1 after observing \(m = 0\). Instead, the low-type worker has an incentive to deviate to \(m'\) and and imitate the high-type worker, as this deviation yields a payoff of \(2 - m' > 1\) when \(p\) is close to zero. If the firm anticipates the low-type worker's response after the high-type worker's deviation, the firm would offer an expected wage of \(2-p\) instead of \(2\). As a result, the high-type worker would not want to deviate to \(m'\), because they incur a cost by choosing \(m' > 0\) but the wage is the same as before in the pooling equilibrium, which invalidates the reason why the pooling equilibrium fails the intuitive criterion.

These logical inconsistencies in the intuitive criterion stem from restricting off-path beliefs while holding on-path beliefs fixed. In Example~\ref{ex:stiglitz}, the intuitive criterion postulates that the off-path belief of the firm after observing \(m'\) is that the worker is high-type with probability one. However, it also implicitly assumes that the on-path belief of the firm after observing \(m=0\) is that the worker is high-type with probability \(1-p\). Apparently, these two beliefs cannot hold simultaneously. This implies that when we impose restrictions on off-path beliefs, consistency requires further adjustments on on-path beliefs. When these beliefs are consistent with each other, we are effectively looking at another equilibrium. Hence, to address the Stiglitz critique, we should always interpret the meaning of messages (beliefs after observing \(m=0\) or \(m'\))---whether on-path or off-path---in equilibrium. Equilibrium selection concerns identifying the most appropriate interpretation of a given message among multiple equilibria.

In Example~\ref{ex:stiglitz}, when the firm interprets \(m'\) in equilibrium, the firm can at most offer an expected wage of \(2-p\).\footnote{This wage is consistent with another pooling equilibrium at \(m'\).} Hence, the high-type worker would not benefit from deviating to \(m'\), and strictly prefers to play the pooling equilibrium. Given this, how should we conceptualize the selection between the pooling equilibrium and the Riley equilibrium?

Note that the low-type worker always chooses \(m=0\) in both equilibria. Then, the high-type worker is effectively facing the decision of whether to separate from the low-type worker by choosing \(m=1\) or to pool with the low-type worker by choosing \(m=0\). When \(p > 0.5\), the high-type worker strictly prefers separation to pooling. When \(p \leq 0.5\), the high-type worker weakly prefers pooling to separation.\footnote{When \(p=0.5\), the high-type worker is indifferent between separation and pooling, but the low-type worker always strictly prefers pooling to separation. In such a case, we select the weakly Pareto-dominating equilibrium, i.e., the pooling equilibrium.} Since the worker moves first, it is reasonable for the firm to expect the high-type worker to go with the choice that gives them a higher payoff. As a result, when \(p > 0.5\), the firm should interpret \(m=0\) in the Riley equilibrium rather than in the pooling equilibrium; when \(p \leq 0.5\), the firm should interpret \(m=0\) in the pooling equilibrium rather than in the Riley equilibrium. Since \(m = 0\) is on-path of both equilibria, the reasoning is not about how the high-type worker signals through off-path education levels as in the intuitive criterion but rather how the firm expects what the high-type worker would do. The interpretation of \(m=0\) changes with the equilibrium.

The above reasoning motivates a selection criterion that neither selects the pooling outcome nor the Riley outcome irrespective of \(p\). Instead, we select the equilibrium outcome that gives the high-type worker a higher payoff, which we will call the lexicographically maximum outcome (lex max outcome) later (Definition~\ref{def:LMSE}). We call the equilibrium that produces this outcome the lexicographically maximum sequential equilibrium (LMSE). In Example~\ref{ex:stiglitz}, the LMSE is the Riley equilibrium when \(p > 0.5\) and the pooling equilibrium when \(p \leq 0.5\). To address the Stiglitz critique, the selection criterion should build on how the receiver interprets messages in different equilibria and finding a good interpretation among those, which will be formalized as the concept of \emph{persuasiveness} now.

\subsection{Persuasiveness}\label{sec:persuasiveness}

To illustrate the logic of persuasiveness, we consider a three-type version of \citet{spenceJobMarketSignaling1973}'s job market signaling model used in \citet*{mailathBeliefBasedRefinementsSignalling1993}.

\begin{example}[Three-Type Spencian Game]\label{ex:undefeated}
    We follow the setup of Example~\ref{ex:stiglitz} except that there are three types of a worker, \(T = \left\{t_{L}, t_{M}, t_{H}\right\}\), where \(t_{L} = 1\), \(t_{M} = 2\), and \(t_{H} = 3\). We call \(t_{L}\) the low-type worker, \(t_{M}\) the medium-type worker, and \(t_{H}\) the high-type worker. The prior probabilities are \(p\left(t_{L}\right) = 0.35\), \(p\left(t_{M}\right) = 0.20\), and \(p\left(t_{H}\right) = 0.45\).

    We focus on the following equilibria with different information revealed, and summarize in Table~\ref{tab:undefeated} below the equilibrium strategies and payoffs of the low-type, medium-type, and high-type workers in these equilibria, presented in the same format as Example~\ref{ex:stiglitz}.\footnote{To make the comparison easier, we round numbers to two decimal places when necessary.} For instance, in the equilibrium \(\sigma^{1}\), the low-type and medium-type workers choose \(m=0\), while the high-type worker chooses \(m = 3.27\), which is the minimum level of education the high-type worker has to choose in order to separate themselves from the low-type and medium-type workers. The firm offers an expected wage of \(a= \mathbb{E}_{\mu^{1}}\left[\left.t\right|m=0\right] = \frac{0.35 \times 1 + 0.2 \times 2}{0.35 + 0.2} = 1.36\) to a worker with \(m=0\) and \(a=3\) to a worker with \(m=3.27\). The payoffs of the low-type, medium-type, and high-type workers in this equilibrium are \(u_{S}\left(t_{L},\sigma^{1}\right) = u_{S}\left(t_{M},\sigma^{1}\right) = 1.36\), and \(u_{S}\left(t_{H},\sigma^{1}\right) = 1.91\), respectively. The other equilibria can be similarly interpreted. In particular, the equilibrium \(\overline{\sigma}\) is the LMSE, where the high-type worker attains the highest equilibrium payoff.

    \begin{table}[ht!]
        \centering
        \bgroup
        \def\arraystretch{1.5}
        \begin{tabular}{cccc}
        \toprule
            & \(t_L\) & \(t_M\) & \(t_H\) \\
        \midrule
        \(\sigma_{S}^{1}\left(t\right)\)    &  0 & 0 & 3.27 \\
        \(\sigma_{S}^{\text{Riley}}\left(t\right)\)    &  0 & 1 & 3 \\
        \(\sigma_{S}^{\text{Pooling}}\left(t\right)\)  &  0 & 0 & 0 \\
        \(\overline{\sigma}_{S}\left(t\right)\)    & 0 & 1.1 & 1.1 \\
        \bottomrule
        \end{tabular}
        \quad
        \begin{tabular}{cccc}
        \toprule
            & \(t_L\) & \(t_M\) & \(t_H\) \\
        \midrule
        
        \(u_{S}\left(t,\sigma^{1}\right)\)    &  1.36 & 1.36 & 1.91  \\
        \(u_{S}\left(t,\sigma^{\text{Riley}}\right)\)    &  1 & 1.5 & 2  \\
        \(u_{S}\left(t,\sigma^{\text{Pooling}}\right)\)  &  2.1 & 2.1 & 2.1  \\
        \(u_{S}\left(t,\overline{\sigma}\right)\)    & 1 & 1.85 & 2.13  \\
        \bottomrule
        \end{tabular}
        \egroup
        \caption{Equilibrium Strategies and Payoffs of the Worker in Example~\ref{ex:undefeated}\label{tab:undefeated}}
    \end{table}
\end{example}

Note that the intuitive criterion has limited selective power when there are more than two types. The D1 criterion is applied more often instead, because it uniquely selects the Riley outcome \citep{choStrategicStabilityUniqueness1990}.\footnote{See \citet{munoz-garciaIntuitiveDivinityCriterion2011} for a detailed explanation of why the intuitive criterion cannot uniquely select the Riley outcome.} Because the D1 criterion is also subject to the Stiglitz critique for the same reason as the intuitive criterion, we see in this example again that the Riley outcome is uniquely selected even when it is Pareto-dominated by the pooling outcome in \(\sigma^{\text{Pooling}}\) for each type of the worker. However, we will not argue that the pooling outcome should be selected instead of the Riley outcome in this case as in Example~\ref{ex:stiglitz}. Instead, we will show that the LMSE \(\overline{\sigma}\) is \emph{more persuasive} than both the Riley equilibrium \(\sigma^{\text{Riley}}\) and the pooling equilibrium \(\sigma^{\text{Pooling}}\), and hence the lex max outcome should be selected.

It is easy to see that the LMSE \(\overline{\sigma}\) Pareto-dominates the Riley equilibrium \(\sigma^{\text{Riley}}\) for the medium-type and high-type workers. Given \(\sigma^{\text{Riley}}\), we consider what happens when the medium-type and high-type workers both deviate to choose \(\overline{m} = 1.1\). When the firm interprets \(\overline{m}\) in \(\overline{\sigma}\), the firm would offer the corresponding equilibrium wage in \(\overline{\sigma}\), and both types of workers would indeed be better off by deviating to \(\overline{m}\) than following the Riley equilibrium. In this sense, we say that \(\overline{\sigma}\) is \emph{more persuasive} than \(\sigma^{\text{Riley}}\).

However, the above reasoning does not work when comparing the LMSE \(\overline{\sigma}\) with the pooling equilibrium \(\sigma^{\text{Pooling}}\), because the high-type worker prefers \(\overline{\sigma}\) to \(\sigma^{\text{Pooling}}\), while the medium-type worker prefers \(\sigma^{\text{Pooling}}\) to \(\overline{\sigma}\). To argue that \(\overline{\sigma}\) is still \emph{more persuasive} than \(\sigma^{\text{Pooling}}\), we first consider what could happen to the medium-type worker when they do not deviate to \(\overline{m}\) and instead continue to send \(m'=0\), as in \(\sigma^{\text{Pooling}}\),  while the high-type worker deviates to \(\overline{m}\).

As in Example~\ref{ex:stiglitz}, the firm expects that the worker would like to play the equilibrium which gives them a higher payoff. Therefore, upon observing \(m' = 0\), the firm would no longer interprets this message as part of the pooling equilibrium, as the high-type worker has deviated to \(\overline{m}\). The firm would infer that a worker choosing \(m' = 0\) must be either medium-type or low-type---precisely as in \(\sigma^{1}\). Consequently, the medium-type worker would obtain the same payoff as in \(\sigma^{1}\) (1.36), which is lower than the payoff obtained in \(\overline{\sigma}\) (1.85) when following the high-type worker and deviating to \(\overline{m}\). Hence, the medium-type worker has no incentive to keep sending \(m' = 0\), since doing so would only reveal that they are not high-type. In other words, although the medium-type worker initially prefers \(\sigma^{\text{Pooling}}\) to \(\overline{\sigma}\), they ultimately find it optimal to deviate to \(\overline{m}\) once the high-type worker has done so. We summarize this reasoning by saying that the interpretation of \(\overline{m}\) in \(\overline{\sigma}\) rather than in \(\sigma^{\text{Pooling}}\) triggers an \emph{unraveling}.

Conversely, we consider what would happen to the high-type worker when they deviate to \(\overline{m}\), while the medium-type worker continues to send \(m' = 0\), as in \(\sigma^{\text{Pooling}}\). Since the high-type worker obtains the highest equilibrium payoff by sending \(\overline{m}\) in \(\overline{\sigma}\), the firm should believe that the worker could be high-type after observing \(\overline{m}\). When the firm interprets \(\overline{m}\) in \(\overline{\sigma}\), the worker could also be medium-type. However, the high-type worker has an incentive to deviate to \(\overline{m}\) regardless of whether the firm believes the medium-type worker would also deviate. In particular, if the firm believes that only the high-type worker would deviate to \(\overline{m}\), the firm would offer a wage higher than that in \(\overline{\sigma}\), making the high-type worker even better off. Hence, the high-type worker's deviation to \(\overline{m}\) is unaffected by the medium-type worker's choice. There will be no unraveling for the high-type worker.

When the firm expects that the worker would like to play the equilibrium that give them a higher payoff, the high-type worker would like to deviate to \(\overline{m}\) and play \(\overline{\sigma}\) because there is no unraveling, while the medium-type worker would ultimately deviate to \(\overline{m}\) and deviate to \(\overline{\sigma}\) because of unraveling. Taken together, unraveling provides a consistent story for the firm, explaining why both the medium-type and high-type workers would like to deviate to \(\overline{m}\) and play \(\overline{\sigma}\) instead of \(\sigma^{\text{Pooling}}\) irrespective of their initial preferences for \(\overline{\sigma}\) or \(\sigma^{\text{Pooling}}\). In this sense, we say that \(\overline{\sigma}\) is \emph{more persuasive} than \(\sigma^{\text{Pooling}}\).

Now we formalize the concept of persuasiveness in any signaling game \(G\). In response to the Stiglitz critique, persuasiveness builds on how the receiver interprets messages in different equilibria. Consider two equilibria \(\overline{\sigma} = \left(\overline{\sigma}_{S},\overline{\sigma}_{R},\overline{\mu}\right), \sigma = \left(\sigma_{S},\sigma_{R},\mu\right)\in \text{SE}\left(G\right)\). We can view \(\overline{\sigma}\) and the set \(\{\sigma^1, \sigma^{\text{Riley}}, \sigma^{\text{Pooling}}\}\) in Example~\ref{ex:undefeated} as specific instances of \(\overline{\sigma}\) and \(\sigma \in \{\sigma^1, \sigma^{\text{Riley}}, \sigma^{\text{Pooling}}\}\). In order for \(\overline{\sigma}\) to serve as a challenger to the putative equilibrium \(\sigma\), we start with a message \(\overline{m}\) that is on-path of the challenger \(\overline{\sigma}\) such that there exists a type \(\overline{t}\) who obtains a strictly higher payoff in \(\overline{\sigma}\) than in \(\sigma\). If no such message exists, every type of the sender will prefer \(\sigma\) to \(\overline{\sigma}\). Then, the receiver should not expect that \(\overline{\sigma}\) is played instead of \(\sigma\). In Example~\ref{ex:undefeated}, we have \(\overline{m} = 1.1\), and \(\overline{t} = t_{H}\).

Following the analysis in Example~\ref{ex:undefeated}, we first divide the set of types of the sender who sends \(\overline{m}\) in \(\overline{\sigma}\) into two groups based on their preferences for \(\overline{\sigma}\) or \(\sigma\).\footnote{When some type of the sender is indifferent, we put them in the group that prefers \(\overline{\sigma}\), which makes persuasiveness more selective (in a reasonable way). For instance, in Example~\ref{ex:stiglitz}, when \(p = 0.5\), the high-type worker is indifferent between separation and pooling. Still, the pooling equilibrium is more persuasive than the Riley equilibrium because the low-type worker strictly prefers pooling.} We allow mixed strategies and define the two groups as follows:
\begin{align*}
    T^{\overline{\sigma} \geq \sigma}_{\overline{m}} &=\left\{ \left.t \in T\right|\overline{m}\in\text{supp}\left(\overline{\sigma}_{S}\left(t\right)\right), u_{S}\left(t,\overline{\sigma}\right) \geq u_{S}\left(t,\sigma\right)\right\}\\
    T_{\overline{m}}^{\overline{\sigma} < \sigma} &=\left\{ \left.t \in T\right|\overline{m}\in\text{supp}\left(\overline{\sigma}_{S}\left(t\right)\right), u_{S}\left(t, \overline{\sigma}\right) < u_{S}\left(t, \sigma\right)\right\}.
\end{align*}

In Example~\ref{ex:undefeated}, when comparing \(\overline{\sigma}\) with either \(\sigma^1\) or \(\sigma^{\text{Riley}}\), we have \(T^{\overline{\sigma} \geq \sigma^{1}}_{\overline{m}} = T^{\overline{\sigma} \geq \sigma^{1}}_{\overline{m}} = \left\{t_{H}, t_{M}\right\}\), and \(T^{\overline{\sigma} < \sigma^{1}}_{\overline{m}} = T^{\overline{\sigma} < \sigma^{\text{Riley}}}_{\overline{m}} = \emptyset\). When comparing \(\overline{\sigma}\) to \(\sigma^{\text{Pooling}}\), we have \(T^{\overline{\sigma} \geq \sigma^{\text{Pooling}}}_{\overline{m}} = \left\{t_{H}\right\}\) and \(T^{\overline{\sigma} < \sigma^{\text{Pooling}}}_{\overline{m}} = \left\{t_{M}\right\}\). Building on the preceding discussion, we characterize the unraveling dynamics in Example~\ref{ex:undefeated} as follows.

\begin{definition}\label{def:unraveling}
    The interpretation of \(\overline{m}\) in \(\overline{\sigma}\) rather than in \(\sigma\) triggers an \emph{unraveling}, if (1) there exists \(\overline{t} \in T^{\overline{\sigma} \geq \sigma}_{\overline{m}}\) such that \(u_{S}\left(\overline{t},\overline{\sigma}\right)>u_{S}\left(\overline{t},\sigma\right)\); and (2) there exists a ranking of types \( f : T^{\overline{\sigma} < \sigma}_{\overline{m}} \to \mathbb{R} \) such that for all \(t' \in T_{\overline{m}}^{\overline{\sigma} < \sigma}\), and all $m'\in\text{supp}\left(\sigma_{S}\left(t'\right)\right)$,
    \begin{equation}\label{eq:unraveling}
    u_{S}\left(t',\overline{\sigma}\right)\geq\max_{a'\in \text{BR}\left(m',\mu'\right)}u_{S}\left(t',m',a'\right),
    \end{equation}
    where
    \begin{align*}
    \mu'\left(\left.t\right|m'\right) & = \begin{cases}
        \frac{\mu\left(\left.t\right|m'\right)}{\sum_{\hat{t} \in U^{\sigma>\overline{\sigma}}_{m'}}\mu\left(\left.\hat{t}\right|m'\right)} & \text{if } t \in U^{\sigma>\overline{\sigma}}_{m'},\\
        0 & \text{otherwise.}
    \end{cases}\\
    U^{\sigma>\overline{\sigma}}_{m'} & = T^{\sigma}_{m'} \setminus \left(F^{\overline{\sigma} < \sigma}_{\overline{m}}\left(t'\right) \cup T^{\overline{\sigma} \geq \sigma}_{\overline{m}}\right) \\
    T^{\sigma}_{m'} & =\left\{ \left.t \in T\right|m'\in\text{supp}\left(\sigma_{S}\left(t\right)\right)\right\} \\
    F^{\overline{\sigma} < \sigma}_{\overline{m}}\left(t'\right) & =\left\{ \left.t \in T^{\overline{\sigma} < \sigma}_{\overline{m}}\right|f\left(t\right) > f\left(t'\right)\right\}.
    \end{align*}
\end{definition}

Note that if \(T^{\overline{\sigma} < \sigma}_{\overline{m}} = \emptyset\), the comparison between \(\overline{\sigma}\) and \(\sigma\) leads to a trivial unraveling. This corresponds to the situation in Example~\ref{ex:undefeated} when comparing \(\overline{\sigma}\) with either \(\sigma^{1}\) or \(\sigma^{\text{Riley}}\). In such cases, all types who send \(\overline{m}\) in \(\overline{\sigma}\) prefer \(\overline{\sigma}\) to \(\sigma\), and hence they all have an incentive to deviate to \(\overline{m}\) and play \(\overline{\sigma}\) instead of \(\sigma\). The more substantive case arises when \(T^{\overline{\sigma} < \sigma}_{\overline{m}} \neq \emptyset\), as in Example~\ref{ex:undefeated} when comparing \(\overline{\sigma}\) to \(\sigma^{\text{Pooling}}\). In this case, there exists a type \(t'\) who sends \(\overline{m}\) in \(\overline{\sigma}\) but strictly prefers \(\sigma\) to \(\overline{\sigma}\), i.e., \(t' \in T^{\sigma < \overline{\sigma}}_{\overline{m}}\).

Unraveling occurs when any such type \(t'\), despite strictly preferring \(\sigma\) to \(\overline{\sigma}\), nonetheless has an incentive to deviate to \(\overline{m}\) and play \(\overline{\sigma}\) once certain ``other'' types of the sender have already deviated to \(\overline{m}\). These ``other'' types consist of (i) types who prefer \(\overline{\sigma}\) to \(\sigma\), i.e., \(T^{\overline{\sigma} \geq \sigma}_{\overline{m}}\), and (ii) types ranked higher than \(t'\), i.e., \(F^{\overline{\sigma} < \sigma}_{\overline{m}}\left(t'\right)\). If the sender of type \(t'\) adheres to \(\sigma\) and sends \(m'\), their payoff is at most equal to the equilibrium payoff in \(\overline{\sigma}\) if the receiver believes that only types other than those ``other'' types would still play \(\sigma\), forming a conditional belief \(\mu'\) derived from \(\mu\) by excluding those ``other'' types after observing \(m'\). As a result, they would like to deviate to \(\overline{\sigma}\). Unraveling proceeds sequentially from the highest-ranked type to the lowest-ranked type. The logic is similar to the classical analysis of voluntary disclosure in \citet{grossmanDisclosureLawsTakeover1980}, \citet{milgromGoodNewsBad1981} and \citet{verrecchiaDiscretionaryDisclosure1983}. Here, lower-ranked types voluntarily deviate to \(\overline{\sigma}\) once all higher-ranked types have already done so.

For instance, in Example~\ref{ex:undefeated}, the firm forms a belief such that \(\mu^{\text{Pooling}'}\left(\left.t_{H}\right|\tilde{m}'=0\right) = 0\) when the medium-type worker adheres to the pooling equilibrium. Under this belief, the medium-type worker's payoff (1.36) is lower than the equilibrium payoff in \(\overline{\sigma}\) (1.85). As a result, the medium-type worker would prefer to deviate to \(\overline{\sigma}\), even though they initially prefer \(\sigma^{\text{Pooling}}\) to \(\overline{\sigma}\). Hence, the interpretation of \(\overline{m}\) in \(\overline{\sigma}\) rather than in \(\sigma^{\text{Pooling}}\) triggers an unraveling.

Unraveling provides a consistent story for the receiver, explaining why all types who send \(\overline{m}\) in \(\overline{\sigma}\) would like to deviate to \(\overline{m}\) and play \(\overline{\sigma}\) instead of \(\sigma\) irrespective of their initial preferences for \(\overline{\sigma}\) or \(\sigma\). Hence, the receiver should interpret \(\overline{m}\) in \(\overline{\sigma}\) instead of in \(\sigma\), and we therefore say that \(\overline{\sigma}\) is more persuasive than \(\sigma\).

\begin{definition}\label{def:persuasive}
$\overline{\sigma}\in\text{SE}\left(G\right)$ is \emph{more persuasive} than $\sigma\in\text{SE}\left(G\right)$, if there exists a message \(\overline{m}\) on-path of \(\overline{\sigma}\) such that the interpretation of \(\overline{m}\) in \(\overline{\sigma}\) rather than in \(\sigma\) triggers an unraveling.
\end{definition}

Like the intuitive criterion and the D1 criterion, persuasiveness cannot distinguish between equilibria which are payoff-equivalent for the sender.\footnote{Equilibria which differ only in the receiver's off-path beliefs are payoff-equivalent for the sender. There might also exist rare cases where two equilibria differ in the sender's strategies but still generate the same payoff.} Hence, we define the most persuasive equilibrium as follows.

\begin{definition}\label{def:payoff-equivalence}
Two equilibria $\overline{\sigma}, \sigma\in\text{SE}\left(G\right)$ are \emph{payoff-equivalent} for the sender if \(u_{S}\left(t, \overline{\sigma}\right) = u_{S}\left(t, \sigma\right)\) for all \(t \in T\).
\end{definition}

\begin{definition}\label{def:most-persuasive}
$\overline{\sigma}\in\text{SE}\left(G\right)$ is \emph{most persuasive} if it is more persuasive than any other equilibrium $\sigma\in\text{SE}\left(G\right)$ that is not payoff-equivalent for the sender.
\end{definition}

If \(\overline{\sigma}\) is most persuasive, then for any other equilibrium \(\sigma\), we can alway find a message \(\overline{m}\) such that every type who sends \(\overline{m}\) in \(\overline{\sigma}\) would like to play \(\overline{\sigma}\) instead of \(\sigma\) because of unraveling. In Example~\ref{ex:undefeated}, we argue that the LMSE is more persuasive than both the Riley equilibrium and the pooling equilibrium. More generally, one can establish that the LMSE is the unique most persuasive equilibrium among all equilibria in monotone signaling games; we formalize this result in the next section. In some non-monotone signaling games, however, a most persuasive equilibrium may fail to exist or be unique (see Section~\ref{sec:limitations}).

\section{Monotone Signaling Games}\label{sec:monotone-signaling-games}

We study a class of monotone signaling games that satisfy the following assumptions, which includes \citet{spenceJobMarketSignaling1973}'s job market signaling model as an application.

\begin{assumption}\label{assumption:continuity-concavity}
    Continuity and Concavity:
    \begin{itemize}
        \item $T = \left\{1, 2, \dots, n\right\}$ is finite. 
        \item $M$ and $A$ are closed intervals of \(\mathbb{R}\).
        \item $u_{S}$ and $u_{R}$ are continuous in $m$ and $a$.
        \item $u_{R}$ is strictly concave in $a$.
    \end{itemize}
\end{assumption}

Assumption~\ref{assumption:continuity-concavity} ensures that \(\text{BR}\left(m, \mu\right)\), i.e., the receiver's best response correspondence after observing the message \(m\) under the belief \(\mu\), is always well-defined. In particular, it is a single-valued continuous function in \(m\).

\begin{assumption}\label{assumption:monotonicity}
    Monotonicity: 
    \begin{itemize}
        \item If \(a' > a\), then \(u_{S}\left(t,m,a'\right) > u_{S}\left(t,m,a\right)\) for all \(t\) and \(m\).
        \item \(\frac{\partial u_{R}}{\partial a}\) is a strictly increasing function of \(t\).
    \end{itemize}
\end{assumption}

Assumption~\ref{assumption:monotonicity} describes the sender's and the receiver's preferences. The sender prefers a higher action from the receiver. The receiver prefers to take a higher action when they believe the sender is of a higher type. In the job market signaling model, this means that the worker prefers a higher wage from the firm, and the firm prefers to offer a higher wage when they believe the worker is of a higher type (more productive).

\begin{assumption}\label{assumption:single-crossing}
    Single-Crossing:
    \begin{itemize}
        \item[] If \(m < m'\) and \(t < t'\), then \\
         \(u_{S}\left(t, m, a\right) \leq u_{S}\left(t, m', a'\right)\) implies that \(u_{S}\left(t', m, a\right) < u_{S}\left(t', m', a'\right)\).
    \end{itemize}
\end{assumption}

Assumption~\ref{assumption:single-crossing} is the Spence-Mirrlees single-crossing condition, which guarantees that the indifference curves of different types of the sender through a fixed message-action pair intersect only once. It captures the idea that higher messages are less costly for higher types to send than for lower types. In the job market signaling model, this means that the cost of acquiring higher education levels is decreasing in the worker's type.

For the next assumption, we introduce additional notation. For any non-empty subset \(K\) of \(T\), the \(K\)-conditional belief \(p_{K} \in \Delta\left(T\right)\) is defined as:
\[p_{K}\left(t\right) = \begin{cases}
    \frac{p\left(t\right)}{\sum_{t' \in K} p\left(t'\right)} & \text{if } t \in K,\\
    0 & \text{otherwise.}
\end{cases}\]
To simplify the notation, we write the receiver's best response to the message \(m\) under the \(K\)-conditional belief \(p_{K}\) as \(\text{BR}\left(m, K\right)\), i.e., \(\text{BR}\left(m, K\right) = \text{BR}\left(m, p_{K}\right)\).

\begin{assumption}\label{assumption:extreme-messages}
    Low-Cost and High-Cost Messages: Let \(m^{l} = \min\left\{m \in M\right\}\) and \(m^{h} = \max\left\{m \in M\right\}\).
    
    \begin{itemize}
        \item \(\forall t \in T\), \(\forall m \in M\), \(u_{S}\left(t, m^{l}, \text{BR}\left(m^{l}, \left\{1\right\}\right)\right) \geq u_{S}\left(t, m, \text{BR}\left(m, \left\{1\right\}\right)\right)\).
        \item \(\forall t \in T\setminus \left\{n\right\}\), \(u_{S}\left(t, m^{h}, \text{BR}\left(m^{h}, \left\{n\right\}\right)\right) < u_{S}\left(t, m^{l}, \text{BR}\left(m^{l}, \left\{1\right\}\right)\right).\)
    \end{itemize}
\end{assumption}

Assumption~\ref{assumption:extreme-messages} describes the message space. It states that the lowest message \(m^{l}\) is the cheapest message for all types of the sender to send, and the highest message \(m^{h}\) is the most expensive message such that no type of the sender, except possibly the highest type, would want to send it. In the job market signaling model, the worker incurs no cost for having zero education, while the level of education can go to infinity, which is too costly for every one.

\begin{definition}\label{def:monotone-signaling-game}
    A \emph{monotone signaling game} \(G_{\text{S}}\) is a signaling game that satisfies Assumptions~\ref{assumption:continuity-concavity}-\ref{assumption:extreme-messages}.
\end{definition}

Similar assumptions (including A\ref{assumption:message-concavity} below) have been applied in many general treatments of this class of games \citep*{rileyInformationalEquilibrium1979,choStrategicStabilityUniqueness1990,mailathBeliefBasedRefinementsSignalling1993}. Given that the message and action spaces are intervals, pure-strategy equilibria always exist. We denote the set of pure-strategy equilibria in the monotone signaling game \(G_{\text{S}}\) as \(\text{PSE}\left(G_{\text{S}}\right)\). Our analysis focuses on pure-strategy equilibria in \(G_{\text{S}}\), and we begin by formally defining the LMSE.

\begin{definition}\label{def:LMSE}
    \(\overline{\sigma} \in \text{PSE}\left(G_{\text{S}}\right)\) \emph{lexicographically dominates} (lex-dominates) \(\sigma \in \text{PSE}\left(G_{\text{S}}\right)\), if there exists \(\overline{t} \in T\) such that:
    \begin{itemize}
        \item \(u_{S}\left(\overline{t}, \overline{\sigma}\right) > u_{S}\left(\overline{t}, \sigma\right)\).
        \item \(u_{S}\left(t, \overline{\sigma}\right) \geq u_{S}\left(t, \sigma\right) \ \forall t > \overline{t}\).
    \end{itemize}
\end{definition}

\(\overline{\sigma} \in \text{PSE}\left(G_{\text{S}}\right)\) is a \emph{lexicographically maximum sequential equilibrium} (LMSE) if there exists no other equilibrium \(\sigma \in \text{PSE}\left(G_{\text{S}}\right)\) that lex-dominates \(\overline{\sigma}\). The equilibrium outcome of a LMSE is the \emph{lexicographically maximum} outcome (lex max outcome). A LMSE exists because pure-strategy equilibria exist in \(G_{\text{S}}\) \citep*{mailathBeliefBasedRefinementsSignalling1993}.

We now present our main results, which formalize the intuitions illustrated in the preceding examples and establish, step by step, that the lex max outcome is generically the unique most persuasive equilibrium outcome.

\begin{theorem}\label{thm:persuasive}
    In any game \(G_{\text{S}}\), the LMSE is most persuasive.
\end{theorem}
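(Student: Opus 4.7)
The plan is to show that, for any pure-strategy equilibrium $\sigma \in \text{PSE}(G_{\text{S}})$ not payoff-equivalent to the LMSE $\overline{\sigma}$, some on-path message $\overline{m}$ of $\overline{\sigma}$ triggers an unraveling in the sense of Definition~\ref{def:unraveling}. First I would locate $\overline{t}$ as the largest type at which the two equilibria give different sender payoffs. The LMSE property rules out $u_{S}(\overline{t},\sigma) > u_{S}(\overline{t},\overline{\sigma})$---otherwise $\sigma$ would lex-dominate $\overline{\sigma}$---so $u_{S}(\overline{t},\overline{\sigma}) > u_{S}(\overline{t},\sigma)$ and $u_{S}(t,\overline{\sigma}) = u_{S}(t,\sigma)$ for every $t > \overline{t}$. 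Taking $\overline{m} := \overline{\sigma}_{S}(\overline{t})$ immediately yields condition (1) of Definition~\ref{def:unraveling}, since $\overline{t} \in T^{\overline{\sigma} \geq \sigma}_{\overline{m}}$ with strict preference.

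For condition (2), I would exploit that Assumption~\ref{assumption:single-crossing} forces every pure-strategy equilibrium of $G_{\text{S}}$ to have a weakly monotone sender strategy, so $T^{\overline{\sigma}}_{\overline{m}}$ is an interval of consecutive types containing $\overline{t}$. Combined with the LMSE comparison above $\overline{t}$, this forces $T^{\overline{\sigma} < \sigma}_{\overline{m}} \subseteq \{t : t < \overline{t}\}$. I would rank this set by $f(t) = t$, processing the highest types first, so that when checking the unraveling inequality at any $t'$, the excluded set $F^{\overline{\sigma} < \sigma}_{\overline{m}}(t') \cup T^{\overline{\sigma} \geq \sigma}_{\overline{m}}$ contains $\overline{t}$ and every type of $T^{\overline{\sigma}}_{\overline{m}}$ strictly above $t'$. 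By Assumption~\ref{assumption:monotonicity}, removing these higher types from the conditional belief at $m' = \sigma_{S}(t')$ weakly lowers the receiver's best response, and hence $t'$'s attainable payoff.

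The core step is to establish, for each such $t'$, the inequality $u_{S}(t',\overline{\sigma}) \geq \max_{a' \in \text{BR}(m',\mu')} u_{S}(t',m',a')$. My approach is proof by contradiction. If this fails for some $t'$, I would construct an auxiliary pure-strategy sequential equilibrium $\sigma^{\star}$ in which every type $t > \overline{t}$ plays as in $\overline{\sigma}$, the types in $T^{\overline{\sigma} \geq \sigma}_{\overline{m}}$ continue to pool at $\overline{m}$, and $t'$ (together with the lower-ranked types that analogously wish to switch) moves to its $\sigma$-message. Single-crossing preserves the incentives of the higher types---removing lower senders from the $\overline{m}$-pool only weakly raises the best response there---while Assumption~\ref{assumption:extreme-messages} supplies sufficiently pessimistic off-path beliefs at any newly off-path messages. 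Then $\sigma^{\star}$ would leave the payoffs of types above $\overline{t}$ unchanged while strictly raising the payoff of some type in $\{1,\dots,\overline{t}\}$, contradicting the LMSE property of $\overline{\sigma}$. The main obstacle is precisely this construction---verifying that $\sigma^{\star}$ is a bona fide sequential equilibrium and that the payoffs above $\overline{t}$ are indeed preserved through the reassignment.
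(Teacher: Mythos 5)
Your skeleton matches the paper's: you select the same \(\overline{t}\) and \(\overline{m} = \overline{\sigma}_{S}(\overline{t})\), use the identity ranking, and aim to contradict the LMSE property when the unraveling inequality \eqref{eq:unraveling} fails. Your cutoff observation is in the right direction, though the paper proves a sharper statement (within the pool \(T^{\overline{\sigma}}_{\overline{m}}\), the types weakly preferring \(\overline{\sigma}\) form exactly the upper segment), which is what it uses to reduce the conditional belief \(\mu'\) at \(m'\) to the belief conditioned on \(T^{\sigma}_{m'}\cap\{t\le j\}\).

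The genuine gap is the contradiction step, which is the heart of the proof and which you yourself flag as the main obstacle. The patchwork profile \(\sigma^{\star}\) you describe --- upper types frozen at their \(\overline{\sigma}\) play, the pool \(T^{\overline{\sigma}\ge\sigma}_{\overline{m}}\) kept at \(\overline{m}\), and \(t'\) together with lower types reassigned to their \(\sigma\)-messages --- is generally not a sequential equilibrium, and the payoff \(u_{S}\left(t',m',\text{BR}\left(m',\mu'\right)\right)\) is not realized by it: \(\mu'\) is supported on \(U^{\sigma>\overline{\sigma}}_{m'}\), and nothing in your construction guarantees that exactly those types send \(m'\) in \(\sigma^{\star}\), nor that types above \(\overline{t}\) resist deviating to \(m'\) once it carries the more favorable belief \(\mu'\). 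The paper does something structurally different: it passes to the truncated game \(G^{j}_{\text{S}}\) with type space \(\{1,\dots,j\}\) and builds a \(j\)-truncated equilibrium \(\sigma^{j}\) satisfying \(u_{S}\left(j,\sigma^{j}\right)\ge u_{S}\left(j,m_{j},\text{BR}\left(m_{j},\mu'\right)\right)\) by an induction that pools types one at a time at \emph{newly chosen} messages \(m^{k+1}_{k+1}\), located by intermediate-value arguments using continuity (A\ref{assumption:continuity-concavity}) and the extreme-message bounds (A\ref{assumption:extreme-messages}), with a case split according to whether type \(k+1\) prefers to join the lower pool or the upper pool. Only then does it invoke Lemma~\ref{lemma:LMSE}, which rests on the Mailath--Okuno-Fujiwara--Postlewaite splicing corollary, to lift the truncated equilibrium to a full equilibrium that lex-dominates \(\overline{\sigma}\). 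That splicing result --- not a direct ``freeze the upper types'' argument --- is what justifies leaving the payoffs above \(j\) intact; your sketch replaces it with an assertion. As written, the proposal reproduces the paper's proof sketch but not its proof.
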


Proof Sketch: We need to show that the LMSE is more persuasive than any other equilibrium that is not payoff-equivalent for the sender. Given any other equilibrium \(\sigma\), we first identify the message \(\overline{m}\) on the equilibrium path of the LMSE that can be used to show that the interpretation of this message in the LMSE rather than in \(\sigma\) can trigger an unraveling. We pin down this message \(\overline{m}\) as the equilibrium message sent by the highest type who strictly prefers the LMSE to \(\sigma\). Next, we show that there exists a cutoff type among the set of types who send \(\overline{m}\) in the LMSE such that all types weakly above this cutoff type prefer the LMSE to \(\sigma\), while all types strictly below this cutoff type strictly prefer \(\sigma\) to the LMSE. To show the existence of an unraveling, we show that when the unraveling condition \eqref{eq:unraveling} is violated at some type, we can construct another equilibrium that lex-dominates the LMSE, which is a contradiction. See Appendix~\ref{thm:proof-persuasive} for the details of the proof.

Theorem~\ref{thm:persuasive} shows that for any alternative equilibrium that is not payoff-equivalent to the LMSE, the LMSE can always challenge this equilibrium with a new interpretation of some message. Under this new interpretation, all types who send this message in the LMSE would like to deviate to this message and play the LMSE instead of this alternative equilibrium because of unraveling. A natural question is whether an alternative equilibrium could similarly challenge the LMSE. The following result demonstrates that no such equilibrium exists.

\begin{theorem}\label{thm:utility-uniqueness}
    In any game \(G_{\text{S}}\), the most persuasive equilibrium is unique up to payoff equivalence for the sender.
\end{theorem}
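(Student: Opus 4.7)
The plan is to argue by contradiction: suppose that $\sigma^* \in \text{PSE}(G_{\text{S}})$ is most persuasive but not payoff-equivalent for the sender to the LMSE $\overline{\sigma}$, and derive a contradiction by exhibiting an equilibrium that lex-dominates $\overline{\sigma}$. Since $\sigma^*$ is most persuasive and not payoff-equivalent to $\overline{\sigma}$, Definition~\ref{def:most-persuasive} yields that $\sigma^*$ is more persuasive than $\overline{\sigma}$. Applying Definitions~\ref{def:persuasive} and~\ref{def:unraveling}, I extract a message $\overline{m}^*$ on the equilibrium path of $\sigma^*$, a pivotal type $\overline{t}^* \in T^{\sigma^* \geq \overline{\sigma}}_{\overline{m}^*}$ with $u_S(\overline{t}^*, \sigma^*) > u_S(\overline{t}^*, \overline{\sigma})$, and a ranking $f$ on $T^{\sigma^* < \overline{\sigma}}_{\overline{m}^*}$ satisfying the unraveling inequality~\eqref{eq:unraveling} at every $t' \in T^{\sigma^* < \overline{\sigma}}_{\overline{m}^*}$ and every $m' \in \text{supp}(\overline{\sigma}_S(t'))$.

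Next, I construct a pure-strategy equilibrium $\tilde{\sigma}$ by ``grafting'' the behavior of $\sigma^*$ at the pool $T^{\sigma^*}_{\overline{m}^*} = T^{\sigma^* \geq \overline{\sigma}}_{\overline{m}^*} \cup T^{\sigma^* < \overline{\sigma}}_{\overline{m}^*}$ onto $\overline{\sigma}$: every type in $T^{\sigma^*}_{\overline{m}^*}$ sends $\overline{m}^*$ and receives the wage $\text{BR}(\overline{m}^*, T^{\sigma^*}_{\overline{m}^*})$, thereby matching their $\sigma^*$ payoff, while every other type retains its $\overline{\sigma}$-prescription. Beliefs are updated at the messages vacated by the migrated types, and off-path beliefs are set conservatively (for instance, concentrated on the lowest remaining type) to deter deviations. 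The unraveling inequality~\eqref{eq:unraveling} directly implies that no type $t' \in T^{\sigma^* < \overline{\sigma}}_{\overline{m}^*}$ wishes to revert to its $\overline{\sigma}$-message, since the resulting payoff under the updated conditional belief is at most $u_S(t', \sigma^*)$. Sequential rationality for the remaining types follows from the fact that $\overline{\sigma}$ is already an equilibrium, combined with single-crossing (Assumption~\ref{assumption:single-crossing}), monotonicity of the receiver's best response (Assumption~\ref{assumption:monotonicity}), and the extreme-message bounds (Assumption~\ref{assumption:extreme-messages}).

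To conclude, I verify that $\tilde{\sigma}$ lex-dominates $\overline{\sigma}$. By construction, $u_S(\overline{t}^*, \tilde{\sigma}) = u_S(\overline{t}^*, \sigma^*) > u_S(\overline{t}^*, \overline{\sigma})$. For any type $t > \overline{t}^*$, either $t \notin T^{\sigma^*}_{\overline{m}^*}$---in which case $u_S(t, \tilde{\sigma}) = u_S(t, \overline{\sigma})$ by construction---or $t \in T^{\sigma^*}_{\overline{m}^*}$. In the latter case, by choosing $\overline{t}^*$ to be the highest type in $T^{\sigma^*}_{\overline{m}^*}$ that strictly prefers $\sigma^*$ to $\overline{\sigma}$, the monotone pooling structure of pure-strategy equilibria in $G_{\text{S}}$ together with single-crossing forces $t$ to also weakly prefer $\sigma^*$, so $u_S(t, \tilde{\sigma}) \geq u_S(t, \overline{\sigma})$. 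Hence $\tilde{\sigma}$ lex-dominates $\overline{\sigma}$, contradicting the LMSE property of $\overline{\sigma}$.

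The main obstacle is to discharge the two implicit claims used in the construction: (i) that $\tilde{\sigma}$ is a valid sequential equilibrium---which requires carefully controlling the belief adjustments at the messages vacated by the migrated pool so that no remaining type acquires a profitable deviation---and (ii) that, within the pool $T^{\sigma^*}_{\overline{m}^*}$, every type above the chosen $\overline{t}^*$ indeed weakly prefers $\sigma^*$ to $\overline{\sigma}$. Both rest on the monotone pooling structure of pure-strategy equilibria in $G_{\text{S}}$ and on single-crossing, and both adapt the kinds of belief- and preference-tracking arguments already used in the proof of Theorem~\ref{thm:persuasive}.
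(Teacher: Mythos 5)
There is a genuine gap at the heart of your argument: the claim that, within the pool \(T^{\sigma^*}_{\overline{m}^*}\), every type above \(\overline{t}^*\) weakly prefers \(\sigma^*\) to \(\overline{\sigma}\). In fact the opposite cutoff structure holds, and this is precisely the point on which the paper's proof turns. Since \(\overline{\sigma}\) is the LMSE, the type \(\overline{t}^*\) with \(u_S(\overline{t}^*,\sigma^*)>u_S(\overline{t}^*,\overline{\sigma})\) must pool in \(\sigma^*\) with strictly higher types, at least one of which, say \(k\), satisfies \(u_S(k,\sigma^*)<u_S(k,\overline{\sigma})\); otherwise the truncation at the top of the pool together with the Mailath--Okuno-Fujiwara--Postlewaite corollary would produce an equilibrium lex-dominating \(\overline{\sigma}\). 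Moreover, the pooling message \(\overline{m}^*\) must lie strictly below the message \(\overline{m}_k\) that such a type \(k\) sends in \(\overline{\sigma}\) (by monotonicity and single-crossing), and single-crossing then forces every type in the pool who strictly prefers \(\overline{\sigma}\) to lie \emph{above} every type who strictly prefers \(\sigma^*\) --- in particular above your \(\overline{t}^*\). Consequently your grafted profile \(\tilde{\sigma}\) gives type \(k>\overline{t}^*\) the payoff \(u_S(k,\sigma^*)<u_S(k,\overline{\sigma})\), so \(\tilde{\sigma}\) does not lex-dominate \(\overline{\sigma}\) via \(\overline{t}^*\), and no other choice of pivotal type repairs this, because no type above \(k\) strictly gains. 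The contradiction you are aiming for never materializes. (The only case your construction covers is \(T^{\sigma^*<\overline{\sigma}}_{\overline{m}^*}=\emptyset\), which is exactly the case the paper rules out in its opening two observations.)

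The paper's actual route is different and avoids building any new equilibrium: it shows directly that the unraveling condition \eqref{eq:unraveling} must \emph{fail} for every ranking \(f\). Taking \(t^*\) to be the highest-ranked type in \(T^{\hat{\sigma}<\overline{\sigma}}_{\hat{m}}\cap T^{\overline{\sigma}}_{\overline{m}_k}\), the reversed cutoff structure implies that the types excluded from the conditional belief \(\mu'\) at \(\overline{m}_k\) are exactly those below the cutoff \(k'\), so the surviving conditional belief first-order stochastically dominates the equilibrium belief at \(\overline{m}_k\); hence \(t^*\) obtains at least \(u_S(t^*,\overline{\sigma})>u_S(t^*,\hat{\sigma})\) by staying put, the unraveling stops, and \(\hat{\sigma}\) cannot be more persuasive than the LMSE. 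If you want to salvage your write-up, you should replace the grafting-and-lex-domination step with an argument of this form: identify the type in the pool who strictly prefers the LMSE and show that deleting the lower types from the receiver's belief can only help them, rather than hurt them.
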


Proof Sketch: We prove by contradiction. Suppose there exist two most persuasive equilibria, the LMSE and \(\hat{\sigma}\) that are not payoff-equivalent for the sender. Then, we can find a message \(\hat{m}\) on the equilibrium path of \(\hat{\sigma}\) such that the interpretation of this message in \(\hat{\sigma}\) rather than in the LMSE triggers an unraveling. In particular, there exists a type \(i\) who sends \(\hat{m}\) in \(\hat{\sigma}\) and strictly prefers \(\hat{\sigma}\) to the LMSE. However, there also exist types above \(i\) who send \(\hat{m}\) in \(\hat{\sigma}\) and strictly prefer the LMSE to \(\hat{\sigma}\), which means we do not have the cutoff structure we see in the proof of Theorem~\ref{thm:persuasive}. Then, we show that the unraveling condition \eqref{eq:unraveling} always fails when it comes to the highest-ranked type among the types identified above irrespective of the ranking function, which is a contradiction. See Appendix~\ref{thm:proof-utility-uniqueness} for the details of the proof.

Theorems~\ref{thm:utility-uniqueness} establishes that no alternative equilibrium can challenge the LMSE by providing a new interpretation of any message. Moreover, Theorems~\ref{thm:persuasive} and \ref{thm:utility-uniqueness} together imply that the interpretations of all messages in the game are determined by the LMSE, in the sense that no other equilibrium offers a more persuasive interpretation of any message.

We characterize the sender's equilibrium payoff in the most persuasive equilibrium. However, it does not guarantee a unique outcome. In order to ensure uniqueness, we assume the following as in \citet{choStrategicStabilityUniqueness1990} and \citet*{mailathBeliefBasedRefinementsSignalling1993}.

\begin{assumption}\label{assumption:message-concavity}
    \(u_{S}\left(t, m, \text{BR}\left(m, p\right)\right)\) is strictly quasi-concave in \(m\) for all \(t \in T\) and \(p \in \Delta\left(T\right)\).
\end{assumption}

\begin{theorem}\label{thm:outcome-uniqueness}
    Assume A5. Generically in the space of prior \(p \in \Delta\left(T\right)\), the lex max outcome is the unique most persuasive equilibrium outcome in the game \(G_{\text{S}}\).
\end{theorem}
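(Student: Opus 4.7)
The plan is to combine Theorems~\ref{thm:persuasive} and \ref{thm:utility-uniqueness} with Assumption~\ref{assumption:message-concavity} and a genericity argument in $p$. By Theorem~\ref{thm:persuasive} the LMSE is most persuasive; by Theorem~\ref{thm:utility-uniqueness} any two most persuasive equilibria deliver the same expected payoff to every type of the sender. It therefore suffices to show that, for $p$ in a generic subset of $\Delta(T)$, any two pure-strategy equilibria of $G_{\text{S}}$ that are payoff-equivalent for the sender induce the same on-path outcome (same messages sent by each type, hence the same receiver actions).

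First I would invoke the standard block structure of $\text{PSE}(G_{\text{S}})$ for monotone signaling games satisfying A\ref{assumption:continuity-concavity}--A\ref{assumption:message-concavity}: every pure-strategy equilibrium partitions $T$ into consecutive intervals $K_1 < K_2 < \cdots < K_J$ (single-crossing A\ref{assumption:single-crossing} forces each pooling group to be an interval of types); all types in $K_j$ send a common message $m_j$; the receiver plays $\text{BR}(m_j, p_{K_j})$; and, anchored by the lowest-cost message $m^l$ from A\ref{assumption:extreme-messages}, the remaining $m_j$ are tied down by the chain of downward incentive constraints between adjacent blocks. Under A\ref{assumption:message-concavity}, each map $m \mapsto u_S(t, m, \text{BR}(m, p_{K_j}))$ is strictly quasi-concave, so its level sets contain at most two points.

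Now let $\overline{\sigma}, \sigma \in \text{PSE}(G_{\text{S}})$ be two payoff-equivalent pure-strategy equilibria with block partitions $\mathcal{K}$ and $\mathcal{L}$. If $\mathcal{K} = \mathcal{L}$, on-path beliefs coincide; requiring two distinct messages $m_j \neq m_j'$ to give the same value of $u_S(t, \cdot, \text{BR}(\cdot, p_{K_j}))$ for every $t \in K_j$ is over-determined whenever $|K_j| \geq 2$, and even for singleton blocks the further IC constraints linking $m_j'$ to adjacent blocks generically rule out a second solution except on a lower-dimensional subset of $\Delta(T)$. If $\mathcal{K} \neq \mathcal{L}$, payoff equivalence across all types translates---through the explicit recursion defining the $m_j, m_j'$ from IC constraints and the explicit dependence of $p_{K_j}, p_{L_j}$ on $p$---into a finite system of non-trivial analytic equations in $p$ whose solution set is again lower-dimensional. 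Taking the finite union over the finitely many pairs of block partitions (and compatible message chains) yields a meager, Lebesgue-measure-zero "bad" set of priors; on its complement, the lex max outcome is uniquely selected.

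The main obstacle is verifying non-degeneracy of the coincidence equations, especially in the $\mathcal{K} \neq \mathcal{L}$ case: one must show that distinct partitions generically yield distinct sender payoff vectors. This rests on A\ref{assumption:monotonicity}---because $\partial u_R / \partial a$ is strictly increasing in $t$, $\text{BR}(m, p_K)$ strictly moves with any perturbation of $p$ that shifts the mean type in $K$. Thus perturbing $p$ along a direction that alters the conditional mean of some block in $\mathcal{K}$ but not of its counterpart in $\mathcal{L}$ shifts the associated equilibrium actions---and hence the payoffs of the affected types---non-trivially, ruling out payoff-vector coincidences except on a strictly lower-dimensional subset of $\Delta(T)$ and delivering the desired genericity.
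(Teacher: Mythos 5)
Your overall architecture matches the paper's: reduce via Theorems~\ref{thm:persuasive} and \ref{thm:utility-uniqueness} to showing that payoff-equivalent most persuasive equilibria send the same messages, split into "same partition" versus "different partition," and handle the different-partition case by perturbing the prior so that the conditional posteriors, hence the receiver's actions and the sender's payoffs, shift and break the coincidence. That part is essentially the paper's argument.

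The genuine gap is in your treatment of singleton blocks under a common partition. If type $j$ is separated in both equilibria and sends $\overline{m}_j \neq \hat{m}_j$, the receiver's response is $\text{BR}(m,\{j\})$, which does not depend on $p$ at all; so the coincidence $u_S(j,\overline{m}_j,\text{BR}(\overline{m}_j,\{j\})) = u_S(j,\hat{m}_j,\text{BR}(\hat{m}_j,\{j\}))$ is a condition on primitives that no perturbation of the prior can disturb, and strict quasi-concavity (A\ref{assumption:message-concavity}) explicitly permits the level set to contain exactly two points. Your appeal to "further IC constraints linking $m_j'$ to adjacent blocks" does not close this: those constraints are inequalities and generically hold with slack, so they do not pin down the message. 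The paper closes this case with a non-generic argument that you are missing: both most persuasive equilibria are LMSE, strict quasi-concavity gives $u_S(j,m,\text{BR}(m,\{j\})) > u_S(j,\overline{\sigma}^j)$ for every $m$ strictly between $\overline{m}_j$ and $\hat{m}_j$, and single-crossing forces $u_S(j-1,\cdot)$ to differ at the two endpoints, so at least one endpoint strictly deters type $j-1$; continuity then yields a nearby intermediate message that strictly improves type $j$ while preserving all lower types' incentive constraints, producing a $j$-truncated equilibrium that contradicts Lemma~\ref{lemma:LMSE}. Without this step (or a substitute), your proof does not establish uniqueness of the outcome even generically.
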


The genericity of the result implies that the measure of the priors under which uniqueness fails is zero. See Appendix~\ref{thm:proof-outcome-uniqueness} for the details of the proof.

Under assumptions similar to A\ref{assumption:continuity-concavity}-A\ref{assumption:message-concavity}, \citet{choStrategicStabilityUniqueness1990} uniquely select the Riley outcome. When taking into account the Stiglitz critique, persuasiveness uniquely selects the lex max outcome. \citet{choStrategicStabilityUniqueness1990} introduce the following example to show that their result does not hold when the message space is discrete in monotone signaling games.

\begin{example}[Discrete Spencian Game]\label{ex:discrete}
    We follow the setup of Example~\ref{ex:stiglitz}. There are two types of a worker, \(T = \left\{t_{L}, t_{H}\right\}\), where \(t_{L} = \frac{2}{3}\) and \(t_{H} = 1\). The prior probabilities are \(p\left(t_{L}\right) = p\left(t_{H}\right) = 0.5\). The set of education levels is discrete, \(M = \left\{m_{0}, m_{1}\right\}\), where \(m_{0} = 0\) and \(m_{1} = 1\). The firm is the same as in Example~\ref{ex:stiglitz}. The payoff functions of workers are given by \(u_{S}\left(t_{L}, m, a\right) = a - \frac{m}{2}\) and \(u_{S}\left(t_{H}, m, a\right) = a - \frac{m}{4}\).

    There are essentially three equilibria in this game, and we summarize in Table~\ref{tab:discrete} below the equilibrium strategies and payoffs of the low-type and high-type workers in these equilibria, presented in the same format as Example~\ref{ex:stiglitz}.\footnote{The term ``essentially'' indicates that any other equilibria can differ only with respect to off-path beliefs.} Here, we allow mixed strategies. \(\sigma_{S}^{1}\left(t_{H}\right) = \frac{1}{3} m_{0} + \frac{2}{3} m_{1}\) implies that the high-type worker chooses \(m_{0}\) with probability \(\frac{1}{3}\) and \(m_{1}\) with probability \(\frac{2}{3}\).
    \begin{table}[ht!]
        \centering
        \bgroup
        \def\arraystretch{1.5}
        \begin{tabular}{ccc}
        \toprule
            & \(t_L\) & \(t_H\) \\
        \midrule
        \(\sigma_{S}^{\text{Riley}}\left(t\right)\)    &  \(m_{0}\) & \(m_{1}\)  \\
        \(\sigma_{S}^{1}\left(t\right)\)    & \(m_{0}\) & \(\frac{1}{3} m_{0} + \frac{2}{3} m_{1}\)  \\
        \(\sigma_{S}^{\text{Pooling}}\left(t\right)\)    & \(m_{0}\) & \(m_{0}\) \\
        \bottomrule
        \end{tabular}
        \quad
        \begin{tabular}{ccc}
        \toprule
            & \(t_L\) & \(t_H\) \\
        \midrule
        \(u_{S}\left(t,\sigma^{\text{Riley}}\right)\)    &  \(\frac{2}{3}\) & \(\frac{3}{4}\)  \\
        \(u_{S}\left(t,\sigma^{1}\right)\)    & \(\frac{3}{4}\) & \(\frac{3}{4}\)  \\
        \(u_{S}\left(t,\sigma^{\text{Pooling}}\right)\)    & \(\frac{5}{6}\) & \(\frac{5}{6}\)  \\
        \bottomrule
        \end{tabular}
        \egroup
        \caption{Equilibrium Strategies and Payoffs of the Worker in Example~\ref{ex:discrete}\label{tab:discrete}}
    \end{table}
\end{example}

All equilibria pass the D1 criterion (and the intuitive criterion), while the most persuasive equilibrium is the LMSE \(\sigma^{\text{Pooling}}\), which Pareto-dominates the other two equilibria. In this example, the high-type worker finds it too costly to separate from the low-type worker. However, it is not possible for the higher-type worker to signal their preference through an off-path message when applying the D1 criterion (or the intuitive criterion), because \(m_{0}\) is on-path of every equilibrium. In contrast, persuasiveness builds on how the receiver interprets messages in different equilibria, regardless of whether these messages (\(m_0\)) are on-path or off-path of either \(\sigma^{\text{Riley}}\) or \(\sigma^{1}\). In this example, the firm expects that the high-type worker would like to deviate to \(m_{0}\) and play \(\sigma^{\text{Pooling}}\) instead of any other equilibria because both types of the worker strictly benefit, which is captured by the fact that the LMSE \(\sigma^{\text{Pooling}}\) is most persuasive. \citet{choStrategicStabilityUniqueness1990} also introduce another example in their paper to show that their result does not hold when the action space is discrete, while we can still find a unique most persuasive equilibrium, i.e., the LMSE. Both examples suggest that persuasiveness has more selection power than the D1 criterion (and the intuitive criterion) when the message space or action space is discrete in monotone signaling games.\footnote{Extending Theorems~\ref{thm:persuasive} and \ref{thm:utility-uniqueness} to monotone signaling games with compact message and action spaces is non-trivial. We need to allow mixed-strategy equilibria to ensure the existence of equilibrium. Persuasiveness relies on the comparison of the payoffs of different types of the sender across different equilibria. The main challenge lies in bounding the payoff that a particular type of sender could obtain under certain adjustments to the belief, as specified in Definition~\ref{def:unraveling}.}

\section{Discussion}\label{sec:discussion}

So far, we have focused on monotone signaling games.\footnote{To the author's best knowledge, monotone signaling games are the only class of signaling games where the previous selection criteria have strong selection power. For general signaling games, although there exist equilibria that pass the intuitive criterion and the D1 criterion, the set of such equilibria is difficult to characterize. The set of equilibrium outcomes that pass these criteria is typically non-singleton and could be potentially large (see Example~\ref{ex:hiding} below). The applicability of these criteria is less clear in general signaling games. Hence, this paper focuses on monotone signaling games, and illustrates the applicability of persuasiveness to general signaling games by examples.} We now show the applicability of persuasiveness to more general signaling games by looking at examples that are discussed in the previous literature. In Section~\ref{sec:non-monotone}, we show that persuasiveness can have more selection power than other criteria in some non-monotone signaling games. In Section~\ref{sec:limitations}, we turn to other examples in which no existing criteria have strong selection power. These examples illustrate that cyclicality arises when there does not exist a unique most persuasive equilibrium outcome. We argue that persuasiveness offers insights into equilibrium selection. In such cases, it may be reasonable to consider the \emph{least} persuasive equilibrium. In Section~\ref{sec:cheap-talk}, we show how persuasiveness can be applied to cheap-talk games, where other criteria designed for signaling games have no selection power.

\subsection{Non-Monotone Signaling Games}\label{sec:non-monotone}

We first look at the famous Beer-Quiche example introduced by \citet{choSignalingGamesStable1987} to motivate the intuitive criterion.

\begin{example}[Beer-Quiche Game]\label{ex:beer-quiche}
    There are two types of the sender, \(T = \left\{t_w, t_s\right\}\), where \(t_w\) is a wimp type and \(t_s\) is a surly type. The prior probabilities are \(p\left(t_w\right) = 0.1\) and \(p\left(t_s\right) = 0.9\). The sender can choose to have either beer or quiche for breakfast, i.e., \(M = \left\{\text{beer}, \text{quiche}\right\}\). The receiver can choose to either challenge the sender to a duel or not, i.e., \(A = \left\{\text{duel}, \text{don't}\right\}\). The payoffs are given in Figure~\ref{fig:beer-quiche} below, where the first entry in each pair is the sender's payoff and the second entry is the receiver's payoff. For instance, if the sender is of type \(t_w\), has beer for breakfast, and the receiver chooses to duel, then the sender's payoff is 0 and the receiver's payoff is 1.
    \begin{figure}[ht!]
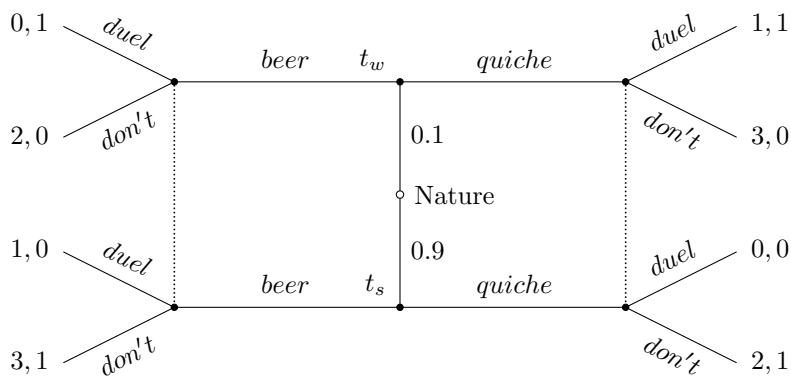

		\centering
		\begin{istgame}[font=\footnotesize]
			\xtdistance{15mm}{20mm}
			\istroot(0)[chance node]<0>{Nature}
			\istb<grow=north>{0.1}[r]
			\istb<grow=south>{0.9}[r]
			\endist
			\xtdistance{30mm}{15mm}
			\istroot(1)(0-1)<160>{\(t_w\)}
			\istb<grow=east>{quiche}[a]
			\istb<grow=west>{beer}[a]
			\endist
			\istroot(2)(0-2)<160>{\(t_s\)}
			\istb<grow=east>{quiche}[a]
			\istb<grow=west>{beer}[a]
			\endist
			\xtdistance{15mm}{15mm}
			\istroot'[east](q1)(1-1)
			\istb{duel}[above,sloped]{1,1}
			\istb{don't}[below,sloped]{3,0}
			\endist
			\istroot[west](b1)(1-2)
			\istb{duel}[above,sloped]{0,1}
			\istb{don't}[below,sloped]{2,0}
			\endist
			\istroot'[east](q2)(2-1)
			\istb{duel}[above,sloped]{0,0}
			\istb{don't}[below,sloped]{2,1}
			\endist
			\istroot[west](b2)(2-2)
			\istb{duel}[above,sloped]{1,0}
			\istb{don't}[below,sloped]{3,1}
			\endist
			\xtInfoset(q1)(q2)
			\xtInfoset(b1)(b2)
		\end{istgame}
        \caption{Beer-Quiche Game\label{fig:beer-quiche}}
	\end{figure}

    There are essentially two pooling equilibria \(\sigma^{\text{Beer}}\) and \(\sigma^{\text{Quiche}}\) in this game. In \(\sigma^{\text{Beer}}\), both types of the sender choose beer, and the receiver chooses to duel after observing quiche and not to duel after observing beer. In \(\sigma^{\text{Quiche}}\), both types of the sender choose quiche, and the receiver chooses not to duel after observing quiche and to duel after observing beer. Both the intuitive criterion and the D1 criterion selects \(\sigma^{\text{Beer}}\). It is easy to check that \(\sigma^{\text{Beer}}\) is more persuasive than \(\sigma^{\text{Quiche}}\), because the type \(t_{s}\) prefers \(\sigma^{\text{Beer}}\), and the type \(t_{w}\) would also like to choose beer because of unraveling.
\end{example}

In Example~\ref{ex:beer-quiche}, persuasiveness selects the same equilibrium outcome as both the intuitive criterion and the D1 criterion. However, this is not always the case in any non-monotone signaling game. In Example~\ref{ex:hiding} below, which is introduced by \citet{choSignalingGamesStable1987} to discuss their limitations, we show that persuasiveness can have more selection power than both the intuitive criterion and the D1 criterion.

\begin{example}[Hiding Game]\label{ex:hiding}
    There are two types of the sender, \(T = \left\{t_1, t_2\right\}\). The prior probabilities are \(p\left(t_1\right) = p\left(t_2\right) = 0.5\). The sender selects either message \(m_1\) or message \(m_2\), i.e., \(M = \left\{m_1, m_2\right\}\). The game terminates immediately following the choice of \(m_1\), without any subsequent action from the receiver. Only if \(m_2\) is chosen does the receiver have the opportunity to select an action from the set \(A = \left\{a_1, a_2, a_3\right\}\). The payoffs are given in Figure~\ref{fig:hiding} below, where the first entry in each pair is the sender's payoff and the second entry is the receiver's payoff. For instance, if the sender is of type \(t_1\), chooses \(m_2\), and the receiver chooses \(a_1\), then the sender's payoff is \(-1\) and the receiver's payoff is 3.
    \begin{figure}[ht!]
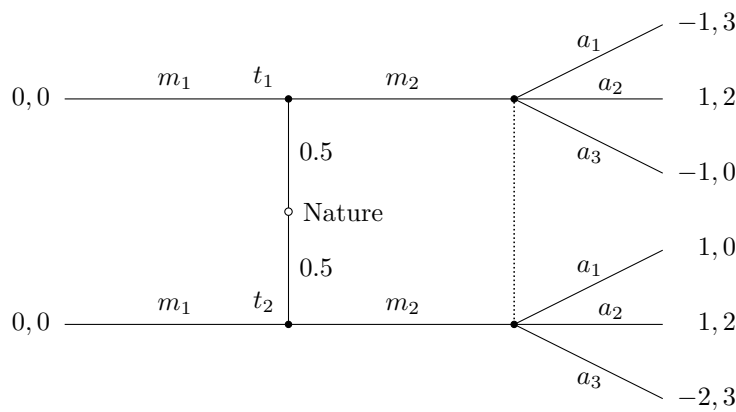

		\centering
		\begin{istgame}[font=\footnotesize]
			\xtdistance{15mm}{20mm}
			\istroot(0)[chance node]<0>{Nature}
			\istb<grow=north>{0.5}[r]
			\istb<grow=south>{0.5}[r]
			\endist
			\xtdistance{30mm}{15mm}
			\istroot(1)(0-1)<160>{\(t_1\)}
			\istb<grow=east>{m_2}[a]
			\istb<grow=west>{m_1}[a]{0,0}[l]
			\endist
			\istroot(2)(0-2)<160>{\(t_2\)}
			\istb<grow=east>{m_2}[a]
			\istb<grow=west>{m_1}[a]{0,0}[l]
			\endist
			\xtdistance{20mm}{10mm}
			\istroot'[east](q1)(1-1)
			\istb{a_1}[a]{-1,3}
			\istb{a_2}[ar]{\phantom{-}1,2}
			\istb{a_3}[b]{-1,0}
			\endist
			\istroot'[east](q2)(2-1)
			\istb{a_1}[a]{\phantom{-}1,0}
			\istb{a_2}[ar]{\phantom{-}1,2}
			\istb{a_3}[b]{-2,3}
			\endist
			\xtInfoset(q1)(q2)
		\end{istgame}
        \caption{Hiding Game\label{fig:hiding}}
	\end{figure}

    In this game, both types of the sender try to hide their types from the receiver, as we can see from the payoff structure in Figure~\ref{fig:hiding}. There are essentially two pooling equilibria \(\sigma^{m_1}\) and \(\sigma^{m_2}\) in this game. In \(\sigma^{m_1}\), both types of the sender choose \(m_1\), and the receiver chooses \(a_3\) after observing \(m_2\). In \(\sigma^{m_2}\), both types of the sender choose \(m_2\), and the receiver chooses \(a_2\) after observing \(m_2\). Notice that \(\sigma^{m_2}\) strictly Pareto-dominates \(\sigma^{m_1}\), as it yields higher payoffs for both the sender and the receiver. However, both equilibria pass the intuitive criterion and the D1 criterion. This is because both criteria emphasize how \emph{one} type of the sender would like to \emph{signal} to the receiver through an off-path message, but they do not account for situations in which \emph{two} types of the sender would like to \emph{hide} from the receiver only by \emph{jointly} sending an off-path message.\footnote{If we test \(\sigma^{m_1}\) by applying the two steps of the intuitive criterion as described at the start of Section~\ref{sec:persuasive}, then we get \(D = \left\{t_1, t_2\right\} \) in Step 1 because both types of the sender could benefit by deviating to \(m_2\). However, in Step 2, no type of the sender can profitably deviate to \(m_2\) under the least favorable belief of the receiver.} In contrast, persuasiveness selects \(\sigma^{m_2}\) as the most persuasive equilibrium, because both types of the sender would like to jointly deviate to \(m_2\) and play \(\sigma^{m_2}\) instead of \(\sigma^{m_1}\), which is captured by a trivial unraveling, i.e., the unraveling condition \eqref{eq:unraveling} is vacuously satisfied in this example. Note that persuasiveness not only considers the incentive of the sender to reveal their type as in monotone signaling games, but also takes into account the incentive of the sender to hide their types by pooling together as in this example.
\end{example}

In Table~\ref{tab:comparison} below, we summarize the selection results of some equilibrium refinements in the examples discussed so far. In Appendix~\ref{sec:refinements}, we offer an intuitive explanation for those criteria and show how they apply to the examples. Please refer to the original papers for the details of each selection criterion.
\begin{table}[ht!]
    \centering
    \caption{Comparison of Equilibrium Refinements in Examples\label{tab:comparison}}
    \vspace{0.2cm}
    \begin{tabular}{lcccc}
    \toprule
                            & Intuitive \& D1                & G-P     & Undefeated   & Persuasive \\
    \midrule
    Ex.~\ref{ex:stiglitz} (Two-Type) & \(\sigma^{\text{Riley}}\) & None    & LMSE          & LMSE      \\
    Ex.~\ref{ex:undefeated} (Three-Type)       & \(\sigma^{\text{Riley}}\) & None    & LMSE, Pooling         & LMSE       \\
    Ex.~\ref{ex:discrete} (Discrete)       & All & All    & All         & LMSE       \\
    Ex.~\ref{ex:beer-quiche} (Beer-Quiche)          & \(\sigma^{\text{Beer}}\)                      & \(\sigma^{\text{Beer}}\)    & \(\sigma^{\text{Beer}}, \sigma^{\text{Quiche}}\) & \(\sigma^{\text{Beer}}\)       \\
    Ex.~\ref{ex:hiding} (Hiding)            & \(\sigma^{m_1}, \sigma^{m_2}\)              & \(\sigma^{m_2}\) & \(\sigma^{m_2}\)      & \(\sigma^{m_2}\)    \\
    \bottomrule
    \end{tabular}
    \\
    \vspace{-0.2cm}
    \begin{footnotesize}
        \begin{enumerate}[leftmargin=2cm, rightmargin=2cm]
            \item The intuitive criterion cannot select \(\sigma^{\text{Riley}}\) in Ex.~\ref{ex:undefeated}, because there are more than two types of the sender. Otherwise, the intuitive criterion and the D1 criterion select the same equilibria in all other examples.
            \item G-P: Perfect Sequential Equilibrium \citep{grossmanPerfectSequentialEquilibrium1986}.
            \item Undefeated Equilibrium \citep*{mailathBeliefBasedRefinementsSignalling1993}.
        \end{enumerate}
    \end{footnotesize}
\end{table}

The table shows that persuasiveness uniquely selects the most persuasive outcome across all examples, while other criteria fail to yield a unique prediction in certain cases. It suggests that persuasiveness has strong selection power even beyond monotone signaling games. Note that all examples discussed so far have a unique most persuasive equilibrium outcome. However, this is not always the case in general signaling games, as we discuss next.

\subsection{Limitations}\label{sec:limitations}

In this section, we examine two examples in which no existing criteria have strong selection power. They also do not admit a unique most persuasive equilibrium outcome. The first example, attributed to Kreps, is introduced by \citet{grossmanPerfectSequentialEquilibrium1986} to highlight the coordination problem between the sender and the receiver. The second example is proposed by \citet*{mailathBeliefBasedRefinementsSignalling1993} to illustrate the difficulty of imposing plausible restrictions on off-path beliefs. When applying persuasiveness to both examples, the issues translate into the non-uniqueness and non-existence of the most persuasive equilibrium, respectively. However, we argue that persuasiveness is a useful concept in these examples by providing insights into equilibrium selection.

\begin{example}[Coordination Game]\label{ex:coordination}
    There are two types of the sender, \(T = \left\{t_1, t_2\right\}\). The prior probabilities are \(p\left(t_1\right) = p\left(t_2\right) = 0.5\). The sender selects either message \(m_1\) or message \(m_2\), i.e., \(M = \left\{m_1, m_2\right\}\). The receiver can choose to take either action \(a_1\) or action \(a_2\), i.e., \(A = \left\{a_1, a_2\right\}\). The payoffs are given in Figure~\ref{fig:coordination} below.
    \begin{figure}[ht!]
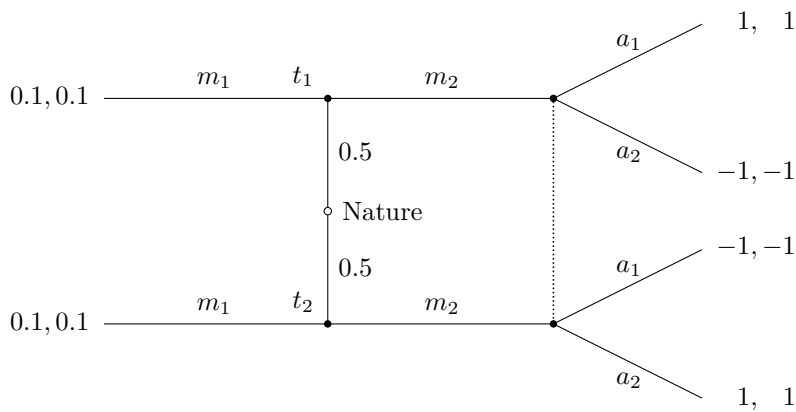

		\centering
		\begin{istgame}[font=\footnotesize]
			\xtdistance{15mm}{20mm}
			\istroot(0)[chance node]<0>{Nature}
			\istb<grow=north>{0.5}[r]
			\istb<grow=south>{0.5}[r]
			\endist
			\xtdistance{30mm}{15mm}
			\istroot(1)(0-1)<160>{\(t_1\)}
			\istb<grow=east>{m_2}[a]
			\istb<grow=west>{m_1}[a]{0.1,0.1}[l]
			\endist
			\istroot(2)(0-2)<160>{\(t_2\)}
			\istb<grow=east>{m_2}[a]
			\istb<grow=west>{m_1}[a]{0.1,0.1}[l]
			\endist
			\xtdistance{20mm}{20mm}
			\istroot'[east](q1)(1-1)
			\istb{a_1}[a]{\phantom{-}1,\phantom{-}1}
			\istb{a_2}[b]{-1,-1}
			\endist
			\istroot'[east](q2)(2-1)
			\istb{a_1}[a]{-1,-1}
			\istb{a_2}[b]{\phantom{-}1,\phantom{-}1}
			\endist
			\xtInfoset(q1)(q2)
		\end{istgame}
        \caption{Coordination Game\label{fig:coordination}}
	\end{figure}

    There are essentially three equilibria in this game, and we summarize in Table~\ref{tab:coordination} below the equilibrium strategies and payoffs of both types of the sender in these equilibria, presented in the same format as Example~\ref{ex:stiglitz}.
    \begin{table}[ht!]
        \centering
        \bgroup
        \def\arraystretch{1.5}
        \begin{tabular}{ccc}
        \toprule
            & \(t_1\) & \(t_2\) \\
        \midrule
        \(\sigma_{S}^{1}\left(t\right)\)    &  \(m_{2}\) & \(m_{1}\)  \\
        \(\sigma_{S}^{2}\left(t\right)\)    & \(m_{1}\) & \(m_{2}\)  \\
        \(\sigma_{S}^{\text{Pooling}}\left(t\right)\)    & \(m_{1}\) & \(m_{1}\) \\
        \bottomrule
        \end{tabular}
        \quad
        \begin{tabular}{ccc}
        \toprule
            & \(t_1\) & \(t_2\) \\
        \midrule
        \(u_{S}\left(t,\sigma^{1}\right)\)    &  \(1\) & \(0.1\)  \\
        \(u_{S}\left(t,\sigma^{2}\right)\)    & \(0.1\) & \(1\)  \\
        \(u_{S}\left(t,\sigma^{\text{Pooling}}\right)\)    & \(0.1\) & \(0.1\)  \\
        \bottomrule
        \end{tabular}
        \egroup
        \caption{Equilibrium Strategies and Payoffs of the Sender in Example~\ref{ex:coordination}\label{tab:coordination}}
    \end{table}

    The first two equilibria, \(\sigma^{1}\) and \(\sigma^{2}\), are separating equilibria in which the receiver takes different actions that best respond to different types of the sender in each equilibrium. The third equilibrium, \(\sigma^{\text{Pooling}}\), is a pooling equilibrium in which both types of the sender send \(m_1\), and the receiver randomizes between the two actions with equal probabilities after observing \(m_2\). All equilibria pass the intuitive criterion and the D1 criterion.\footnote{\(\sigma^{1}\) and \(\sigma^{2}\) are both undefeated equilibria and perfect sequential equilibria.}

    The coordination problem in this example arises because, a priori, there is no compelling reason for the sender and receiver to coordinate on either \(\sigma^{1}\) or \(\sigma^{2}\). Indeed, if coordination were achieved, both types of the sender would strictly prefer to send \(m_{2}\). Notice that if the sender could make a ``speech'' to the receiver, the ``speech'' would be used as a coordination device to select either \(\sigma^{1}\) or \(\sigma^{2}\), which creates a cheap-talk equilibrium in which both types of the sender obtain a payoff of 1 \citep{renyNaturalLanguageEquilibrium2025}. However, in the absence of such a ``speech,'' the sender and receiver cannot coordinate on either equilibrium.\footnote{It also shows that we cannot always rely on the implicit ``speech'' to interpret equilibrium selection criteria such as the intuitive criterion and the D1 criterion.} In this example, it becomes more reasonable to expect that the pooling equilibrium \(\sigma^{\text{Pooling}}\) would be played, because the receiver would like to randomize in order to avoid miscoordination after observing \(m_2\). Consequently, \(\sigma^{\text{Pooling}}\) can be interpreted as a ``safe'' equilibrium outcome.

    When applying persuasiveness to this example, the coordination problem is captured by the fact that \(\sigma^{1}\) is more persuasive than \(\sigma^{2}\), while at the same time \(\sigma^{2}\) is more persuasive than \(\sigma^{1}\). Although both equilibria more persuasive than the pooling equilibrium\(\sigma^{\text{Pooling}}\), they exhibit a cyclical relationship with one another. Hence, there does not exist a unique most persuasive equilibrium outcome. Non-uniqueness highlights the existence of multiple interpretations of \(m_2\) in either \(\sigma^{1}\) or \(\sigma^{2}\). In this example, the coordination problem translates into multiple interpretations of the same message due to the non-uniqueness of the most persuasive equilibrium. As we discussed above, the least persuasive equilibrium \(\sigma^{\text{Pooling}}\) is more appealing, which faces no issue of multiple interpretations of \(m_2\).
\end{example}

\begin{definition}\label{def:least-persuasive}
$\underline{\sigma}\in\text{SE}\left(G\right)$ is \emph{least persuasive} if any other equilibrium $\sigma\in\text{SE}\left(G\right)$, that is not payoff-equivalent for the sender, is more persuasive than $\underline{\sigma}$.
\end{definition}

The above example illustrates the coordination problem that arises when the interests of the sender and receiver are perfectly aligned. We now turn to a contrasting example in which their interests are misaligned. In this case, no most persuasive equilibrium exists. The interpretations of certain messages in some equilibria are ambiguous due to the presence of alternative, more persuasive equilibria. Consequently, upon observing such a message, the receiver may retain a reasonable doubt regarding the sender's type.

\begin{example}[Reasonable Doubt Game]\label{ex:doubt}
    There are three types of the sender, \(T = \left\{t_1, t_2, t_3\right\}\). The prior probabilities are \(p\left(t_1\right) = p\left(t_2\right) = p\left(t_3\right) = \frac{1}{3}\). The sender selects one message from \(M = \left\{m_1, m_2, m_3, m_4\right\}\). The game ends immediately after \(m_4\), otherwise the receiver can choose one action from \(A = \left\{a_1, a_2, a_3\right\}\). The payoffs are given in Figure~\ref{fig:doubt} below. There are essentially four equilibria in this game, and we summarize in Table~\ref{tab:doubt} below the equilibrium strategies and payoffs of the sender in all equilibria.
    \begin{figure}[ht!]
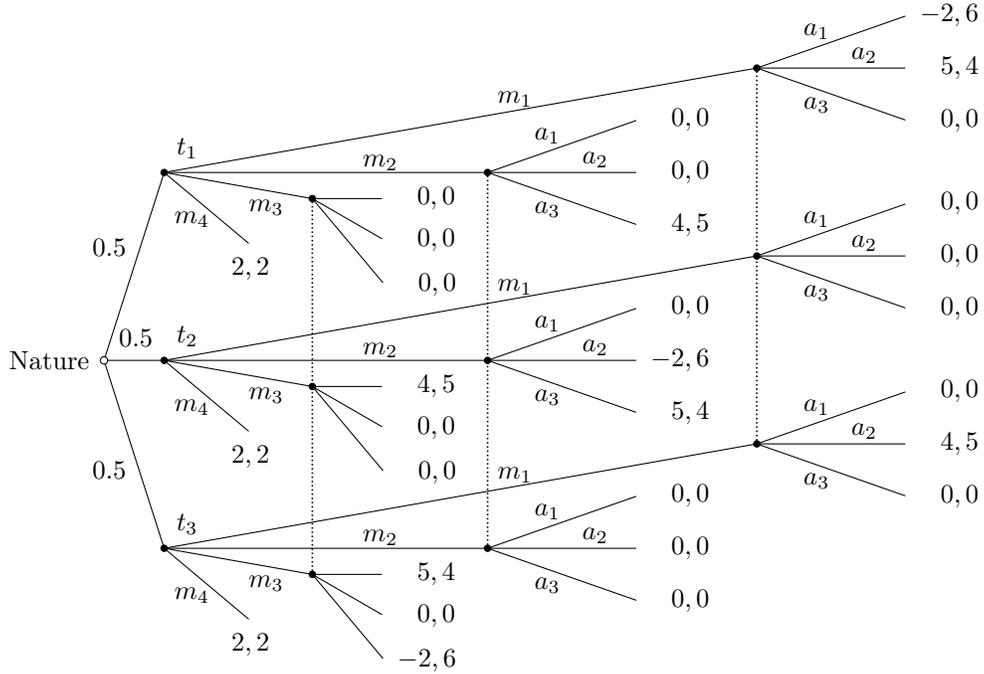

        \centering
        \begin{istgame}[font=\footnotesize]
            \setistgrowdirection'{east}
            \xtdistance{8mm}{25mm}
            \istroot(0)[chance node]<180>{Nature}
            \istb{0.5}[al]
            \istb{0.5}[a]
            \istb{0.5}[bl]
            \endist
            \istroot(1)(0-1)<60>{\(t_1\)}
            \istb<grow = 10, level distance=80mm>{m_1}[xshift=20pt, yshift=8pt]
            \istb<grow = 0, level distance=43mm>{m_2}[xshift=20pt, yshift=4pt]
            \istb<grow = -10, level distance=20mm>{m_3}[xshift=10pt, yshift=-8pt]
            \istb<grow = -40, level distance=15mm>{m_4}[xshift=-6pt, yshift=-4pt]{2, 2}[b]
            \endist
            \xtdistance{20mm}{7mm}
            \istroot(11)(1-1)
            \istbt{a_1}[xshift=-6pt, yshift=4pt]{-2, 6}
            \istbt{a_2}[xshift=12pt, yshift=5pt]{\phantom{-}5, 4}
            \istbt{a_3}[xshift=-6pt, yshift=-4pt]{\phantom{-}0, 0}
            \endist
            \istroot(12)(1-2)
            \istbt{a_1}[xshift=-6pt, yshift=4pt]{\phantom{-}0, 0}
            \istbt{a_2}[xshift=12pt, yshift=5pt]{\phantom{-}0, 0}
            \istbt{a_3}[xshift=-6pt, yshift=-4pt]{\phantom{-}4, 5}
            \endist
            \istroot(13)(1-3)
            \istbA<grow = 0, level distance=9.5mm>{}{\phantom{-}0, 0}
            \istbA<grow = -30, level distance=11mm>{}{\phantom{-}0, 0}
            \istbA<grow = -50, level distance=15mm>{}{\phantom{-}0, 0}
            \endist
            \istroot(2)(0-2)<60>{\(t_2\)}
            \istb<grow = 10, level distance=80mm>{m_1}[xshift=20pt, yshift=8pt]
            \istb<grow = 0, level distance=43mm>{m_2}[xshift=20pt, yshift=4pt]
            \istb<grow = -10, level distance=20mm>{m_3}[xshift=10pt, yshift=-8pt]
            \istb<grow = -40, level distance=15mm>{m_4}[xshift=-6pt, yshift=-4pt]{2, 2}[b]
            \endist
            \xtdistance{20mm}{7mm}
            \istroot(21)(2-1)
            \istbt{a_1}[xshift=-6pt, yshift=4pt]{\phantom{-}0, 0}
            \istbt{a_2}[xshift=12pt, yshift=5pt]{\phantom{-}0, 0}
            \istbt{a_3}[xshift=-6pt, yshift=-4pt]{\phantom{-}0, 0}
            \endist
            \istroot(22)(2-2)
            \istbt{a_1}[xshift=-6pt, yshift=4pt]{\phantom{-}0, 0}
            \istbt{a_2}[xshift=12pt, yshift=5pt]{-2, 6}
            \istbt{a_3}[xshift=-6pt, yshift=-4pt]{\phantom{-}5, 4}
            \endist
            \istroot(23)(2-3)
            \istbA<grow = 0, level distance=9.5mm>{}{\phantom{-}4,5}
            \istbA<grow = -30, level distance=11mm>{}{\phantom{-}0, 0}
            \istbA<grow = -50, level distance=15mm>{}{\phantom{-}0, 0}
            \endist
            \istroot(3)(0-3)<60>{\(t_3\)}
            \istb<grow = 10, level distance=80mm>{m_1}[xshift=20pt, yshift=8pt]
            \istb<grow = 0, level distance=43mm>{m_2}[xshift=20pt, yshift=4pt]
            \istb<grow = -10, level distance=20mm>{m_3}[xshift=10pt, yshift=-8pt]
            \istb<grow = -40, level distance=15mm>{m_4}[xshift=-6pt, yshift=-4pt]{2, 2}[b]
            \endist
            \xtdistance{20mm}{7mm}
            \istroot(31)(3-1)<-90>
            \istbt{a_1}[xshift=-6pt, yshift=4pt]{\phantom{-}0, 0}
            \istbt{a_2}[xshift=12pt, yshift=5pt]{\phantom{-}4,5}
            \istbt{a_3}[xshift=-6pt, yshift=-4pt]{\phantom{-}0, 0}
            \endist
            \istroot(32)(3-2)<-90>
            \istbt{a_1}[xshift=-6pt, yshift=4pt]{\phantom{-}0, 0}
            \istbt{a_2}[xshift=12pt, yshift=5pt]{\phantom{-}0, 0}
            \istbt{a_3}[xshift=-6pt, yshift=-4pt]{\phantom{-}0, 0}
            \endist
            \istroot(33)(3-3)<-90>
            \istbA<grow = 0, level distance=9.5mm>{}{\phantom{-}5, 4}
            \istbA<grow = -30, level distance=11mm>{}{\phantom{-}0, 0}
            \istbA<grow = -50, level distance=15mm>{}{-2, 6}
            \endist
            \xtInfoset(11)(31)
            \xtInfoset(12)(32)
            \xtInfoset(13)(33)
        \end{istgame}
        \caption{Reasonable Doubt Game\label{fig:doubt}}
    \end{figure}

    \begin{table}[ht!]
        \centering
        \bgroup
        \def\arraystretch{1.5}
        \begin{tabular}{cccc}
        \toprule
            & \(t_1\) & \(t_2\) & \(t_3\) \\
        \midrule
        \(\sigma_{S}^{1}\left(t\right)\)    &  \(m_1\) & \(m_4\) & \(m_1\) \\
        \(\sigma_{S}^{2}\left(t\right)\)    &  \(m_2\) & \(m_2\) & \(m_4\)\\
        \(\sigma_{S}^{3}\left(t\right)\)  &  \(m_4\) & \(m_3\) & \(m_3\) \\
        \(\sigma_{S}^{\text{Pooling}}\left(t\right)\)    & \(m_4\) & \(m_4\) & \(m_4\) \\
        \bottomrule
        \end{tabular}
        \quad
        \begin{tabular}{cccc}
        \toprule
            & \(t_1\) & \(t_2\) & \(t_3\) \\
        \midrule
        
        \(u_{S}\left(t,\sigma^{1}\right)\)    &  5 & 2 & 4  \\
        \(u_{S}\left(t,\sigma^{2}\right)\)    &  4 & 5 & 2  \\
        \(u_{S}\left(t,\sigma^{3}\right)\)  &  2 & 4 & 5  \\
        \(u_{S}\left(t,\sigma^{\text{Pooling}}\right)\)    & 2 & 2 & 2  \\
        \bottomrule
        \end{tabular}
        \egroup
        \caption{Equilibrium Strategies and Payoffs of the Sender in Example~\ref{ex:doubt}\label{tab:doubt}}
    \end{table}

    All equilibria pass the intuitive criterion and the D1 criterion.\footnote{No equilibrium is undefeated equilibrium or perfect sequential equilibrium.} Notice that the sender of type \(t_{i}\) attains the highest equilibrium payoff in \(\sigma^{i}\) for each \(i = 1, 2, 3\). However, for type \(t_i\) to obtain this payoff, type \(t_{i-1}\) must also be willing to send message \(m_i\); yet, in \(\sigma^{i-1}\), type \(t_{i-1}\) could achieve a higher equilibrium payoff by instead sending \(m_{i-1}\). Hence, when the receiver observes \(m_i\), they have a reasonable doubt that the sender might be of type \(t_{i}\), because only type \(t_{i}\) could achieve the highest equilibrium payoff by sending \(m_i\). This reasonable doubt could lead the receiver to take action \(a_i\) after observing \(m_i\) instead of action \(a_{i+1}\) in \(\sigma^{i}\). Then, type \(t_i\) would obtain a payoff of \(-2\) instead of 5 by sending \(m_i\).\footnote{When \(i = 4\), we relabel \(i = 1\). When \(i = 0\), we relabel \(i = 3\).} In this example, it becomes more reasonable to expect that the pooling equilibrium \(\sigma^{\text{Pooling}}\) would be played because of the existence of such reasonable doubts. Consequently, \(\sigma^{\text{Pooling}}\) can be interpreted as a ``safe'' equilibrium outcome.

    When applying persuasiveness to this example, those reasonable doubts are captured by the fact that \(\sigma^{i-1}\) is more persuasive than \(\sigma^{i}\), which leads to a cycle. They are all more persuasive than \(\sigma^{\text{Pooling}}\). Hence, there does not exist a most persuasive equilibrium. Non-existence casts doubt on the interpretation of \(m_i\) in the equilibrium \(\sigma^{i}\) because there exists another equilibrium \(\sigma^{i-1}\) that is more persuasive than \(\sigma^{i}\). As we discussed above, the least persuasive equilibrium \(\sigma^{\text{Pooling}}\) is more appealing, which is free from such reasonable doubts.
\end{example}

Neither the coordination game nor the reasonable doubt game admits a unique most persuasive equilibrium outcome. This non-uniqueness leads to multiple interpretations of the same message in different equilibria, while non-existence casts doubt on the interpretations of certain messages in some equilibria. In both examples, it appears more reasonable to expect that the least persuasive equilibrium would be played. This is not a coincidence. The term ``least persuasive'' can be somewhat misleading when a unique most persuasive equilibrium does not exist, as the argument that all other equilibria are more persuasive relies on the interpretations of messages that are subject to either multiple readings or reasonable doubts. Hence, the term ``least persuasive'' is a manifestation of the fact that this equilibrium is free from both issues and can be considered ``safe.''

In general, when a game does not admit a unique most persuasive equilibrium outcome, it is unclear which equilibrium will prevail. Nonetheless, the notion of persuasiveness remains useful, as it offers guidance for reasoning about equilibrium selection in such contexts. As demonstrated by the two examples, there are situations in which it is reasonable to focus on the least persuasive equilibrium.

\subsection{Cheap-Talk Games}\label{sec:cheap-talk}

Although we focus on (costly) signaling games in this paper, persuasiveness can also be applied to costless signaling games, i.e., cheap-talk games. Building on the setup of signaling games in Section~\ref{sec:setup}, cheap-talk games can be readily modeled by assuming that the set of messages \(M\) is infinite and that the payoff functions of both the sender and the receiver are independent of the message sent. We denote cheap-talk games as \(G_{\text{CT}}\). Note that the standard equilibrium refinements for signaling games, such as the intuitive criterion and the D1 criterion, have no selective power in cheap-talk games. This is because one can always support any equilibrium outcome with a ``noisy'' equilibrium in which all messages are sent with positive probability on the equilibrium path, so arguments that put plausible restrictions on off-path beliefs have no power to refine \citep[Section~3]{farrellMeaningCredibilityCheapTalk1993}. In contrast, persuasiveness emphasizes how the receiver interprets messages in equilibrium, which can select equilibria even when every message is on-path (see Example~\ref{ex:discrete} above).

We illustrate the selection power of persuasiveness in cheap-talk games with the following examples introduced in \citet{farrellMeaningCredibilityCheapTalk1993}.

\begin{example}[Cheap-Talk Games]\label{ex:cheap-talk}
    There are two types of the sender, \(T = \left\{t_1, t_2\right\}\). The prior probabilities are \(p\left(t_1\right) = p\) and \(p\left(t_2\right) = 1 - p\). The receiver has three different actions: \(a\left(t_1\right)\) and \(a\left(t_2\right)\) are best responses for the receiver when they are sufficiently confident that the sender is of type \(t_1\) or \(t_2\) respectively, and \(a\left(T\right)\) is the best response when the receiver has (close enough to) the prior probabilities in mind.\footnote{It is easy to construct payoff functions of the receiver that lead to Table~\ref{tab:cheap-talk}. We omit these constructions, as they are not essential to our analysis.} We consider the following three different cheap-talk games. The payoffs of the sender are given in Table~\ref{tab:cheap-talk} below.
    \begin{table}[ht!]
        \centering
        \bgroup
        \def\arraystretch{1.5}
        \begin{tabular}{ccc}
        \toprule
            & \(t_1\) & \(t_2\) \\
        \midrule
        \(a\left(t_1\right)\)    &  3 & 0  \\
        \(a\left(t_2\right)\)    &  0 & 3 \\
        \(a\left(T\right)\)    &  2 & 2  \\
        \bottomrule
        \multicolumn{3}{c}{\(G^{1}_{\text{CT}}\): I Will Tell You}
        \end{tabular}
        \quad
        \begin{tabular}{ccc}
        \toprule
            & \(t_1\) & \(t_2\) \\
        \midrule
        \(a\left(t_1\right)\)    &  1 & 0  \\
        \(a\left(t_2\right)\)    &  0 & 1 \\
        \(a\left(T\right)\)    &  2 & 2  \\
        \bottomrule
        \multicolumn{3}{c}{\(G^{2}_{\text{CT}}\): I Won't Tell You}
        \end{tabular}
        \quad
        \begin{tabular}{ccc}
        \toprule
            & \(t_1\) & \(t_2\) \\
        \midrule
        \(a\left(t_1\right)\)    &  2 & 1  \\
        \(a\left(t_2\right)\)    &  \(-1\) & 0 \\
        \(a\left(T\right)\)    &  0 & 2  \\
        \bottomrule
        \multicolumn{3}{c}{\(G^{3}_{\text{CT}}\): I Can't Tell You}
        \end{tabular}
        \egroup
        \caption{Payoffs of the Sender in Example~\ref{ex:cheap-talk}\label{tab:cheap-talk}}
    \end{table}

    In both \(G^{1}_{\text{CT}}\) and \(G^{2}_{\text{CT}}\), there are two equilibria: a babbling equilibrium where no information is transmitted and an informative equilibrium where the sender perfectly reveals their type. In \(G^{3}_{\text{CT}}\), there is a unique babbling equilibrium.

    We call \(G^{1}_{\text{CT}}\) the ``I Will Tell You'' game, because it is in both the sender's and receiver's interests to coordinate as in Example~\ref{ex:coordination}. Hence, the informative equilibrium is more appealing in \(G^{1}_{\text{CT}}\). In contrast, we call \(G^{2}_{\text{CT}}\) the ``I Won't Tell You'' game, because it is against both types of the sender's interests to reveal their types. Hence, the babbling equilibrium is more appealing in \(G^{2}_{\text{CT}}\). In both games, the more appealing equilibrium is the most persuasive equilibrium. In \(G^{3}_{\text{CT}}\), the babbling equilibrium is the unique equilibrium, which is vacuously the most persuasive equilibrium. We call it the ``I Can't Tell You'' game, because even if the sender of type \(t_1\) wants to tell the receiver that they are of type \(t_1\), they are unable to do so credibly as the sender of type \(t_2\) always imitates type \(t_1\).
\end{example}

Persuasiveness uniquely selects the most persuasive equilibrium in the above three cheap-talk games. It is related to \citet{farrellMeaningCredibilityCheapTalk1993}'s notion of neologism-proofness. However, this criterion is too demanding that the unique babbling equilibrium in \(G^{3}_{\text{CT}}\) is not neologism-proof. It captures the incentive of the sender of type \(t_1\) to reveal their type, but ignores the incentive of the sender of type \(t_2\) to mimic type \(t_1\) in an equilibrium---it does not account for whether the sender of type \(t_1\) can credibly reveal their type in an equilibrium.\footnote{It can also be seen as the Stiglitz critique in cheap-talk games \citep[p.~163]{rabinCommunicationRationalAgents1990}.} In contrast, persuasiveness considers both incentives and admits the unique babbling equilibrium in \(G^{3}_{\text{CT}}\).

Unfortunately, persuasiveness cannot uniquely select the most informative equilibrium in the Crawford-Sobel model \citep{crawfordStrategicInformationTransmission1982}\@. \citet*{chenSelectingCheapTalkEquilibria2008} propose an alternative criterion, NITS, which can uniquely select the most informative equilibrium under certain conditions. However, it is not clear how to extend their criterion to the current example or to more general cheap-talk games.\footnote{NITS is specifically designed for the Crawford-Sobel model. It requires a general notion of lowest type. It is not clear how to define the lowest type in general cheap-talk games such as the current example.}

\section{Related Literature}\label{sec:related-literature}

The literature on equilibrium selection is mostly based on the logic of forward induction. There are two main approaches. One is the axiomatic approach that begins with strategic stability initiated by \citet{kohlbergStrategicStabilityEquilibria1986}, which looks for equilibria that satisfy a list of axiomatic desiderata. Subsequent research seeks to refine and redefine the concept of stability \citep*{mertensStableEquilibriaReformulation1989,mertensStableEquilibriaReformulation1991, vandammeStableEquilibriaForward1989,hillasDefinitionStrategicStability1990,dilmeSequentiallyStableOutcomes2024}. Two recent papers that examine forward induction from a decision-theoretic and axiomatic approach are \citet{govindanForwardInduction2009,govindanAxiomaticEquilibriumSelection2012}. The other approach is the belief-based refinement, which is more directly motivated by putting plausible restrictions on off-path beliefs.\footnote{The two approaches are connected. Both the intuitive criterion and the divinity criterion are weaker than NWBR (Never a Weak Best Response), which is implied by strategic stability.} The two most well-known members of this family is the intuition criterion \citep{choSignalingGamesStable1987} and the divinity criterion \citep{banksEquilibriumSelectionSignaling1987}\@. \citet{choRefinementSequentialEquilibrium1987} extends the intuitive criterion to general extensive-form games\@. \citet{grossmanPerfectSequentialEquilibrium1986} refine the intuitive criterion. However, the G-P criterion suffers the non-existence problem in the job market signaling model \citep{spenceJobMarketSignaling1973}. Recent studies formalize the implicit ``speech'' used by \citet{choSignalingGamesStable1987} to motivate the intuitive criterion by adding cheap talk to signaling games \citep{clarkJustifiedCommunicationEquilibrium2021,renyNaturalLanguageEquilibrium2025}. Adding cheap talk to signaling games can refine equilibria; however, it may also select an equilibrium that does not exist in signaling games without cheap talk, as illustrated in Example~\ref{ex:coordination}.\footnote{For the equilibrium refinements in cheap-talk games, see \citet*{farrellMeaningCredibilityCheapTalk1993, rabinCommunicationRationalAgents1990, matthewsRefiningCheaptalkEquilibria1991,matthewsModelingCheapTalk1994,zapaterCredibleProposalsCommunication1997, chenSelectingCheapTalkEquilibria2008, semiratForwardneologismproofEquilibriumBetter2023,gordonEffectiveCommunicationCheapTalk2024}.}

There is also a strand of literature that studies equilibrium selection in a dynamic model from an evolutionary and learning perspective.\footnote{See also \citet{rabinDeviationsDynamicsEquilibrium1996}, \citet{umbhauerForwardInductionEvolutionary1997} and \citet{clarkJustifiedCommunicationEquilibrium2021}.} Building on the work of \citet*{kandoriLearningMutationLong1993} and \citet{youngEvolutionConventions1993}, \citet{noldekeDynamicModelEquilibrium1997} show that in a two-type Spencian game, the lex max outcome, which is uniquely selected by persuasiveness, is always contained in the unique recurrent set, while the Riley outcome may not be.\footnote{They use the term ``Hellwig equilibrium'' to refer to the best pooling equilibrium for the high-type sender \citep{hellwigRecentDevelopmentsTheory1987}. When the Riley equilibrium lex-dominates the Hellwig equilibrium, the Riley outcome is the unique outcome in the recurrent set. When the Hellwig equilibrium lex-dominates the Riley equilibrium, the Hellwig outcome is always contained in the unique recurrent set, while the Riley outcome  may not be. In both cases, the lex outcome is always selected.}

The Stiglitz critique \citep[p.~203]{choSignalingGamesStable1987} calls into question the general sorts of arguments used in these criteria to refine equilibria \citep{mailathReformulationCriticismIntuitive1988}, when they uniquely select the Riley outcome  in Example~\ref{ex:stiglitz}. In response to that, \citet*{mailathBeliefBasedRefinementsSignalling1993} propose the notion of undefeated equilibrium.\footnote{\citet{umbhauerForwardInductionConsistency1991} develops the consistent forward induction criterion which is similar to the undefeated equilibrium.} Persuasiveness is related to, yet distinct from, the undefeated equilibrium. Specifically, we do not require that every type of the sender who sends a message in a more persuasive equilibrium must prefer this equilibrium to the putative equilibrium (see Appendix~\ref{sec:refinements} for details).\footnote{We also do not require this message in the more persuasive equilibrium to be off-path of the putative equilibrium.} The unraveling condition \eqref{eq:unraveling} ensures that the set of types who initially strictly prefer the putative equilibrium to the more persuasive one would eventually deviate. This distinction is crucial for establishing the uniqueness of the most persuasive equilibrium in monotone signaling games (Theorem~\ref{thm:utility-uniqueness}), given that the undefeated equilibrium is typically not unique (see Table~\ref{tab:comparison} above).

The Stiglitz critique is similar to the motivation of Wilson equilibrium \citep{wilsonModelInsuranceMarkets1977} in insurance markets with adverse selection \citep{rothschildEquilibriumCompetitiveInsurance1976}\@. \citet[Section~3]{hellwigRecentDevelopmentsTheory1987} poses the question of why different equilibrium outcomes are selected in different contexts depending on which side---the informed or uninformed player---moves first, even though the underlying intuitive principles guiding these selections appear similar. Specifically, the Riley outcome is selected (by the intuitive criterion) when the customer proposes contracts, whereas the Wilson outcome arises when the insurance company does. This paper addresses the Stiglitz critique and provides an answer to this question. It shows that persuasiveness selects the same outcome in a signaling game where the customer moves first as the Wilson outcome in a screening game where the insurance company moves first, since both correspond to the lex max outcome.\footnote{To ensure the consistency between the signaling and screening games, we rule out cross-subsidization \citep{miyazakiRatRaceInternal1977}, because the customer cannot propose a menus of contracts in a signaling game. For a game theoretical foundation of the equilibrium selection in insurance markets with cross-subsidization, see \citet{mimraGameTheoreticFoundationWilson2011} and \citet{netzerGameTheoreticFoundation2014}.} Hence, the discrepancy in equilibrium selection in different contexts disappears, when we apply persuasiveness to signaling games.

The unraveling logic behind persuasiveness is related to the literature on information disclosure. See \citet{grossmanDisclosureLawsTakeover1980}, \citet{grossmanInformationalRoleWarranties1981}, \citet{milgromGoodNewsBad1981}, \citet{verrecchiaDiscretionaryDisclosure1983}, and \citet{madaraszCostContentInformation2025}. The iterative elimination of types of the sender from playing the putative equilibrium connects to the literature on the iterative elimination of strictly dominated strategies \citep*{abreuVirtualImplementationIteratively1992, kaponUsingDivideandConquerImprove2024}.\footnote{It is also related to the literature on the use of divide-and-conquer mechanisms to implement desirable social outcomes under all rationalizable strategy profiles. See also \citet*{segalNakedExclusionComment2000,spieglerExtractingInteractionCreatedSurplus2000, segalCoordinationDiscriminationContracting2003,winterIncentivesDiscrimination2004, boBribingVoters2007,eliazXgames2015,halacRaisingCapitalHeterogeneous2020,halacRankUncertaintyOrganizations2021}.}

\section{Concluding Remarks}\label{sec:conclusion}

This paper introduces a novel criterion, called persuasiveness, to select equilibria in signaling games. Persuasiveness is immune to the Stiglitz critique. It builds on how the receiver interprets messages in different equilibria, regardless of whether these messages are on-path or off-path. An equilibrium is more persuasive than an alternative one if there exists a message on its equilibrium path such that every type of the sender who sends that message in the equilibrium would like to deviate from the alternative equilibrium to that message because of unraveling. An equilibrium is most persuasive if it is more persuasive than any other equilibrium that is not payoff-equivalent for the sender. When there exists a unique most persuasive equilibrium, the interpretations of all messages in the game are determined, in the sense that no other equilibrium can provide a more persuasive interpretation of any message.

Persuasiveness has strong selective power. In monotone signaling games, it uniquely selects the most persuasive equilibrium outcome---the lex max outcome. In some non-monotone signaling games, it has stronger selective power than other existing equilibrium refinements. Furthermore, persuasiveness can have good selective power in cheap-talk games, where standard equilibrium refinements for signaling games have no selective power.

We conclude by posing three questions for future research. First, when a unique most persuasive equilibrium does not exist, how should equilibrium selection be approached? As illustrated by the two examples in Section~\ref{sec:limitations}, it may be plausible to expect that the least persuasive equilibrium prevails. Whether this observation generalizes to a broader class of signaling games, however, remains unclear. Second, can the concept of persuasiveness be extended to more general multi-stage games with multiple players? Intuitively, such an extension would require applying persuasiveness at each equilibrium history rather than to the equilibrium as a whole. It remains unclear whether this extension can be formulated in a concise manner and how it would relate to existing refinements for general extensive-form games \citep{choRefinementSequentialEquilibrium1987, govindanForwardInduction2009}. Third, can we identify a criterion that uniquely selects an equilibrium outcome in both the Spencian signaling games and Crawford-Sobel cheap-talk games, the two canonical models of communication under asymmetric information? Existing selection criteria for signaling games typically do not apply to cheap-talk games, and vice versa. Persuasiveness suggests the potential for a unified criterion, as it has selective power in both frameworks; nevertheless, it fails to uniquely select the most informative equilibrium in \citet{crawfordStrategicInformationTransmission1982}. Whether a unified criterion can be established that uniquely selects an equilibrium outcome in both classes of games remains an open question.

\bibliographystyle{econ}
\bibliography{PS-Zeng}

\appendix

\section{Proofs Omitted from the Main Text}\label{sec:proofs}
We first introduce some lemmas that will be used in the proofs of main theorems later.

\begin{lemma}\label{lemma:monotone-message}
    Let \(\sigma \in \text{PSE}\left(G_{\text{S}}\right)\). If \(t < t'\), then \(\sigma_{S}\left(t\right) \leq \sigma_{S}\left(t'\right)\).
\end{lemma}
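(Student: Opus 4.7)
The plan is to proceed by contradiction. Suppose $t<t'$ but $m_t:=\sigma_S(t)>\sigma_S(t')=:m_{t'}$. Because $\sigma$ is a pure-strategy profile and, by Assumption~\ref{assumption:continuity-concavity}, the best response is single-valued, the receiver's equilibrium actions $a_t:=\sigma_R(m_t)$ and $a_{t'}:=\sigma_R(m_{t'})$ are well-defined pure actions.

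The first step is to write down the two sequential-rationality inequalities implied by $m_t\in\text{supp}(\sigma_S(t))$ and $m_{t'}\in\text{supp}(\sigma_S(t'))$, namely
\begin{align*}
u_S(t,m_t,a_t) &\geq u_S(t,m_{t'},a_{t'}),\\
u_S(t',m_{t'},a_{t'}) &\geq u_S(t',m_t,a_t).
\end{align*}

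The second step is to invoke Assumption~\ref{assumption:single-crossing} (single-crossing) with the lower message $m_{t'}$ and the higher message $m_t$, paired respectively with the actions $a_{t'}$ and $a_t$. Since $m_{t'}<m_t$ and $t<t'$, the first displayed inequality above is precisely the hypothesis $u_S(t,m_{t'},a_{t'})\leq u_S(t,m_t,a_t)$, so single-crossing yields the strict conclusion $u_S(t',m_{t'},a_{t'})<u_S(t',m_t,a_t)$. This directly contradicts the second displayed inequality, establishing $\sigma_S(t)\leq\sigma_S(t')$.

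The argument is short and the only mild subtlety is bookkeeping: one must apply Assumption~\ref{assumption:single-crossing} with the dummy variables $(m,m')$ bound to $(m_{t'},m_t)$—the opposite orientation to the monotonicity statement being proved—since the contradiction hypothesis places the higher message with the lower type. Strict concavity of $u_R$ in $a$ is used only to ensure that $a_t$ and $a_{t'}$ are unique best responses so that the single-crossing inequalities can be written with fixed actions rather than mixtures.
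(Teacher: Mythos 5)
Your proof is correct and follows essentially the same route as the paper's: assume $\sigma_S(t)>\sigma_S(t')$, write the two sequential-rationality inequalities for the equilibrium messages and actions, and apply the single-crossing condition (Assumption~\ref{assumption:single-crossing}) with the message pair in the reversed orientation to derive a strict inequality contradicting type $t'$'s equilibrium condition. The additional remark on single-valued best responses is a harmless elaboration of what the paper leaves implicit.
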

\begin{proof}
    Suppose not. Let \(m = \sigma_{S}\left(t\right)\) and \(m' = \sigma_{S}\left(t'\right)\). Let \(a = \sigma_{R}\left(m\right)\) and \(a' = \sigma_{R}\left(m'\right)\). By the equilibrium condition of \(\sigma\), \(u_{S}\left(t, m, a\right) \geq u_{S}\left(t, m', a'\right)\) and \(u_{S}\left(t', m', a'\right) \geq u_{S}\left(t', m, a\right)\). By the single-crossing condition (A\ref{assumption:single-crossing}), \(m > m'\) implies that \(u_{S}\left(t', m, a\right) > u_{S}\left(t', m', a'\right)\), which is a contradiction.
\end{proof}

\begin{lemma}\label{lemma:reverse-single-crossing}
    Reverse Single-Crossing:
    \begin{itemize}
        \item[] If \(m < m'\) and \(t < t'\), then \\
         \(u_{S}\left(t', m, a\right) \geq u_{S}\left(t', m', a'\right)\) implies that \(u_{S}\left(t, m, a\right) > u_{S}\left(t, m', a'\right)\).
    \end{itemize}
\end{lemma}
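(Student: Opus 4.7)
The plan is to observe that this lemma is precisely the contrapositive of Assumption~\ref{assumption:single-crossing}, so the proof is a one-line logical manipulation with no calculation required.

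Fix $m < m'$ and $t < t'$ and consider the inner implication of A\ref{assumption:single-crossing}:
\[
u_{S}\left(t,m,a\right) \leq u_{S}\left(t,m',a'\right) \;\Longrightarrow\; u_{S}\left(t',m,a\right) < u_{S}\left(t',m',a'\right).
\]
Taking the contrapositive, the negation of the consequent $u_{S}\left(t',m,a\right) < u_{S}\left(t',m',a'\right)$ is $u_{S}\left(t',m,a\right) \geq u_{S}\left(t',m',a'\right)$, and the negation of the antecedent $u_{S}\left(t,m,a\right) \leq u_{S}\left(t,m',a'\right)$ is $u_{S}\left(t,m,a\right) > u_{S}\left(t,m',a'\right)$. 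Thus, under the same hypotheses $m < m'$ and $t < t'$, we obtain
\[
u_{S}\left(t',m,a\right) \geq u_{S}\left(t',m',a'\right) \;\Longrightarrow\; u_{S}\left(t,m,a\right) > u_{S}\left(t,m',a'\right),
\]
which is exactly the statement of Lemma~\ref{lemma:reverse-single-crossing}.

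There is no substantive obstacle here; the only subtlety is to be careful about the strict/weak inequality bookkeeping under negation, namely that the strict inequality $<$ in the consequent of A\ref{assumption:single-crossing} negates to the weak $\geq$, while the weak inequality $\leq$ in the antecedent negates to the strict $>$. This is precisely the asymmetry that appears in the statement of the lemma, and it is what makes the contrapositive deliver a useful comparison between lower types once a weak preference of a higher type is known.
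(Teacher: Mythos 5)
Your proof is correct and is essentially the same as the paper's: the paper phrases it as a one-line proof by contradiction (``suppose not,'' then apply A\ref{assumption:single-crossing} to reach a contradiction), which is exactly the contrapositive manipulation you carry out directly. Your careful bookkeeping of which inequalities become strict versus weak under negation matches the statement precisely.
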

\begin{proof}
    Suppose not. By the single-crossing condition (A\ref{assumption:single-crossing}), \(u_{S}\left(t, m, a\right) \leq u_{S}\left(t, m', a'\right)\) implies that \(u_{S}\left(t', m, a\right) < u_{S}\left(t', m', a'\right)\), which is a contradiction.
\end{proof}

For the next lemma, we introduce an additional notation. We study the game truncated from \(G_{\text{S}}\) by restricting the sender's types to be a subset of the original set. Let
    \[
    T^{j} = \left\{1, 2, \dots, j \right\} \quad \text{and} \quad p^j\left(t\right) = p_{T^{j}}\left(t\right).
    \]
The truncated game \(G_{\text{S}}^{j}\) is defined as the original \(G_{\text{S}}\) except that the sender's type space is \(T^{j}\) and the prior distribution \(p^{j}\) is the \(T^{j}\)-conditional belief \(p_{T^{j}}\). Given any equilibrium \(\sigma \in \text{PSE}\left(G_{\text{S}}\right)\), we can construct a \(j\)-truncated strategy \(\sigma^{j}\) of the truncated game \(G_{\text{S}}^{j}\) by simply deleting those types higher than \(j\). As long as no type higher than \(j\) is pooling with \(j\) in the equilibrium \(\sigma\), we have a \(j\)-truncated equilibrium \(\sigma^{j} \in \text{PSE}\left(G_{\text{S}}^{j}\right)\).

\begin{corollary*}[\citet*{mailathBeliefBasedRefinementsSignalling1993}]\label{lemma:mailath}
    Suppose \(\tilde{\sigma} \in \text{PSE}\left(G_{\text{S}}\right)\) and \(\hat{\sigma}^{j} \in \text{PSE}\left(G_{\text{S}}^{j}\right)\) for some \(j < n\). Suppose further that \(u_{S}\left(j, \hat{\sigma}^{j}\right) \geq u_{S}\left(j, \tilde{\sigma}\right)\). Then, there exists \(\sigma \in \text{PSE}\left(G_{\text{S}}\right)\) such that:
    \begin{align*}
        u_{S}\left(t, \sigma\right) &\geq u_{S}\left(t, \hat{\sigma}^{j}\right)  \quad \forall t \leq j,\\
        u_{S}\left(t, \sigma\right) &\geq u_{S}\left(t, \tilde{\sigma}\right)  \quad\forall t > j.
    \end{align*}
\end{corollary*}

\begin{lemma}\label{lemma:LMSE}
    Let \(\overline{\sigma} \in \text{PSE}\left(G_{\text{S}}\right)\) be the LMSE. Then, for any \(j \in T\), and any equilibrium \(\sigma^{j} \in \text{PSE}\left(G_{\text{S}}^{j}\right)\), we have \(u_{S}\left(j, \overline{\sigma}\right) \geq u_{S}\left(j, \sigma^{j}\right)\).
\end{lemma}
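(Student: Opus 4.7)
The plan is to argue by contradiction, reducing the statement directly to the splicing Corollary of \citet*{mailathBeliefBasedRefinementsSignalling1993} recorded just above. Fix $j \in T$ and $\sigma^{j} \in \text{PSE}\left(G_{\text{S}}^{j}\right)$, and suppose for contradiction that $u_{S}\left(j, \sigma^{j}\right) > u_{S}\left(j, \overline{\sigma}\right)$. The target is to produce some $\sigma \in \text{PSE}\left(G_{\text{S}}\right)$ that lex-dominates $\overline{\sigma}$, contradicting the defining property of the LMSE in Definition~\ref{def:LMSE}.

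First dispose of the boundary case $j = n$. Here $G_{\text{S}}^{n} = G_{\text{S}}$, so $\sigma^{n} \in \text{PSE}\left(G_{\text{S}}\right)$ directly. Since there are no types strictly above $n$, the ``weak-dominance-for-higher-types'' clause of Definition~\ref{def:LMSE} is vacuous, and the strict inequality $u_{S}\left(n, \sigma^{n}\right) > u_{S}\left(n, \overline{\sigma}\right)$ already shows that $\sigma^{n}$ lex-dominates $\overline{\sigma}$ with $\overline{t} = n$, an immediate contradiction.

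For $j < n$, I apply the splicing Corollary with $\tilde{\sigma} := \overline{\sigma}$ and $\hat{\sigma}^{j} := \sigma^{j}$. The contradiction hypothesis $u_{S}\left(j, \sigma^{j}\right) > u_{S}\left(j, \overline{\sigma}\right)$ certainly implies the weaker requirement $u_{S}\left(j, \hat{\sigma}^{j}\right) \geq u_{S}\left(j, \tilde{\sigma}\right)$ that the Corollary demands. It therefore delivers a $\sigma \in \text{PSE}\left(G_{\text{S}}\right)$ satisfying $u_{S}\left(t, \sigma\right) \geq u_{S}\left(t, \sigma^{j}\right)$ for all $t \leq j$ and $u_{S}\left(t, \sigma\right) \geq u_{S}\left(t, \overline{\sigma}\right)$ for all $t > j$. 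Evaluating the first chain at $t = j$ gives $u_{S}\left(j, \sigma\right) \geq u_{S}\left(j, \sigma^{j}\right) > u_{S}\left(j, \overline{\sigma}\right)$, so the strict improvement at $j$ is preserved through the splicing; the second chain supplies the required weak dominance at every higher type. Taking $\overline{t} = j$ in Definition~\ref{def:LMSE} then shows that $\sigma$ lex-dominates $\overline{\sigma}$, which is the desired contradiction.

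There is no real obstacle beyond careful index bookkeeping, because the substantive construction is entirely outsourced to the Mailath et al. splicing Corollary, which already handles the nontrivial task of gluing a truncated-game equilibrium on $T^{j}$ to an equilibrium on $T \setminus T^{j}$ while preserving incentive compatibility and weak payoff dominance on both blocks. The only point worth double-checking is that the strict inequality at type $j$ survives the splice, which it does because the Corollary guarantees $u_{S}\left(j, \sigma\right) \geq u_{S}\left(j, \sigma^{j}\right)$ (not merely $\geq u_{S}\left(j, \overline{\sigma}\right)$).
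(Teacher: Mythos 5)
Your proof is correct and follows essentially the same route as the paper's: a contradiction argument that handles $j=n$ directly from the definition of the LMSE and, for $j<n$, invokes the Mailath--Okuno-Fujiwara--Postlewaite splicing corollary with $\tilde{\sigma}=\overline{\sigma}$ and $\hat{\sigma}^{j}=\sigma^{j}$ to produce a lex-dominating equilibrium. The only difference is that you spell out the boundary case and the preservation of the strict inequality in slightly more detail than the paper does.
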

\begin{proof}
    Suppose not. If \(j = n\), then \(u_{S}\left(n, \overline{\sigma}\right) < u_{S}\left(n, \sigma^{n}\right)\), which contradicts the definition of \(\overline{\sigma}\) as the LMSE. If \(j < n\), we have \(u_{S}\left(j, \overline{\sigma}\right) < u_{S}\left(j, \sigma^{j}\right)\). Then, by the previous corollary, there exists a new equilibrium \(\tilde{\sigma} \in \text{PSE}\left(G_{\text{S}}\right)\) such that
    \begin{align*}
        u_{S}\left(t, \tilde{\sigma}\right) &\geq u_{S}\left(t, \sigma^{j}\right)  \quad \forall t \leq j,\\
        u_{S}\left(t, \tilde{\sigma}\right) &\geq u_{S}\left(t, \overline{\sigma}\right)  \quad \forall t > j.
    \end{align*}
    In particular, \(u_{S}\left(j, \tilde{\sigma}\right) \geq u_{S}\left(j, \sigma^{j}\right) > u_{S}\left(j, \overline{\sigma}\right)\). Hence, \(\tilde{\sigma}\) lex-dominates \(\overline{\sigma}\), which contradicts the definition of \(\overline{\sigma}\) as the LMSE.
\end{proof}

\subsection{Proof of Theorem~\ref{thm:persuasive}}\label{thm:proof-persuasive}
We denote the LMSE by \(\overline{\sigma} \in \text{PSE}\left(G_{\text{S}}\right)\). Consider any other equilibrium \(\sigma \in \text{PSE}\left(G_{\text{S}}\right)\), in which \(u_{S}\left(t, \sigma\right) \neq u_{S}\left(t, \overline{\sigma}\right)\) for some \(t \in T\). To show that \(\overline{\sigma}\) is most persuasive, we need to prove that \(\overline{\sigma}\) is more persuasive than \(\sigma\). We first identify the message \(\overline{m}\), whose interpretation in \(\overline{\sigma}\) rather than in \(\sigma\) can trigger an unraveling. We define \(\overline{m}\) as follows:
\begin{align*}
    \overline{t} &= \max\left\{ \left.t \in T\right|u_{S}\left(t, \overline{\sigma}\right)>u_{S}\left(t, \sigma\right)\right\}, \\
    \overline{m} &= \overline{\sigma}_{S}\left(\overline{t}\right), \\
    T^{\overline{\sigma}}_{\overline{m}} &= \left\{ \left.t \in T\right| \overline{\sigma}_{S}\left(t\right)= \overline{m} \right\}.
\end{align*}

We start by showing that there exists a cutoff type \(t^* \in T^{\overline{\sigma}}_{\overline{m}}\) such that for any \(t \in T^{\overline{\sigma}}_{\overline{m}}\) and \(t \geq t^*\), we have \(t \in T^{\overline{\sigma}\geq \sigma}_{\overline{m}}\), while for any \(t \in T^{\overline{\sigma}}_{\overline{m}}\) and \(t < t^*\), we have \(t \in T^{\overline{\sigma} < \sigma}_{\overline{m}}\). By the definition of \(\overline{\sigma}\) and \(\overline{t}\), we have \(u_{S}\left(t,\overline{\sigma}\right) = u_{S}\left(t,\sigma\right)\) for all \(t > \overline{t}\). Hence, if \(t > \overline{t}\) and \(t \in T^{\overline{\sigma}}_{\overline{m}}\), we have \(t \in T^{\overline{\sigma}\geq \sigma}_{\overline{m}}\). Suppose now that \(t' < \overline{t}\) and \(t' \in T^{\overline{\sigma}\geq \sigma}_{\overline{m}}\), we claim that for any \(t'' > t'\) and \(t'' < \overline{t}\), we have \(t''\) strictly prefers \(\overline{\sigma}\) to \(\sigma\). Otherwise, we have
\begin{align}
    u_{S}\left(t', \overline{m}, \overline{\sigma}_{R}\left(\overline{m}\right)\right) &= u_{S}\left(t', \overline{\sigma}\right) &&\geq & u_{S}\left(t', \sigma\right) &= u_{S}\left(t',m', \sigma_{R}\left(m'\right)\right) \label{eq:se-low}\\
    u_{S}\left(t'', \overline{m}, \overline{\sigma}_{R}\left(\overline{m}\right)\right) &= u_{S}\left(t'', \overline{\sigma}\right) && \leq  & u_{S}\left(t'', \sigma\right) &= u_{S}\left(t'', m'', \sigma_{R}\left(m''\right)\right) \label{eq:se-medium}\\
    u_{S}\left(\overline{t}, \overline{m}, \overline{\sigma}_{R}\left(\overline{m}\right)\right) &= u_{S}\left(\overline{t}, \overline{\sigma}\right) && > & u_{S}\left(\overline{t}, \sigma\right) &= u_{S}\left(\overline{t}, m, \sigma_{R}\left(m\right)\right) \label{eq:se}
\end{align}
First, \(\overline{m} \geq m''\), otherwise the single-crossing condition (A\ref{assumption:single-crossing}) and \eqref{eq:se-medium} imply that
\[
u_{S}\left(\overline{t}, \overline{m}, \overline{\sigma}_{R}\left(\overline{m}\right)\right) < u_{S}\left(\overline{t}, m'', \sigma_{R}\left(m''\right)\right) \leq u_{S}\left(\overline{t}, m, \sigma_{R}\left(m\right)\right),
\]
which contradicts \eqref{eq:se}. Second, \(\overline{m} \leq m''\), otherwise the reverse single-crossing condition (Lemma~\ref{lemma:reverse-single-crossing}) and \eqref{eq:se-medium} imply that
\[
u_{S}\left(t', \overline{m}, \overline{\sigma}_{R}\left(\overline{m}\right)\right) < u_{S}\left(t', m'', \sigma_{R}\left(m''\right)\right) \leq u_{S}\left(t',m', \sigma_{R}\left(m'\right)\right),
\]
which contradicts \eqref{eq:se-low}. Then, we have \(\overline{m} = m''\). By monotonicity (A\ref{assumption:monotonicity}), \eqref{eq:se-medium} implies that
\[
u_{S}\left(\overline{t}, \overline{m}, \overline{\sigma}_{R}\left(\overline{m}\right)\right) \leq u_{S}\left(\overline{t}, \overline{m}, \sigma_{R}\left(\overline{m}\right)\right) \leq u_{S}\left(\overline{t}, m, \sigma_{R}\left(m\right)\right),
\]
which contradicts \eqref{eq:se}.

Hence, there exists a cutoff type \(t^* \in T^{\overline{\sigma}}_{\overline{m}}\) such that for any \(t \in T^{\overline{\sigma}}_{\overline{m}}\) and \(t \geq t^*\), we have \(t \in T^{\overline{\sigma}\geq \sigma}_{\overline{m}}\), while for any \(t \in T^{\overline{\sigma}}_{\overline{m}}\) and \(t < t^*\), we have \(t \in T^{\overline{\sigma} < \sigma}_{\overline{m}}\).

Next, we show that the interpretation of \(\overline{m}\) in \(\sigma\) rather than in \(\sigma\) can trigger an unraveling under the original ranking, i.e., when \(f\) is the identity function. We prove by contradiction. Suppose not and the unraveling condition \eqref{eq:unraveling} is violated at some type \(j \in T^{\overline{\sigma} < \sigma}_{\overline{m}}\) at some message \(m_{j}\). Observe that \(j < n\). Following the notation in Definition~\ref{def:unraveling}, we have
\[
u_{S}\left(j, \overline{\sigma}\right) < u_{S}\left(j,m_{j},\text{BR}\left(m_{j},\mu'\right)\right) = u_{S}\left(j, m_{j}, \text{BR}\left(m_{j},T^{\sigma}_{m_{j}} \cap \left\{t \leq j\right\}\right)\right).
\]
The equality follows from the fact that for any \(t > j\), we have either \(t \in T^{\overline{\sigma}\geq \sigma}_{\overline{m}}\) or \(t \in F^{\overline{\sigma} < \sigma}_{\overline{m}} \left(j\right)\). Then, we look at the \(j\)-truncated game \(G_{\text{S}}^{j}\) where the sender's type space \(T^{j}\) is \(\left\{1, 2, \dots, j\right\}\) and the prior distribution \(p^j\) is the \(T^{j}\)-conditional belief \(p_{T^{j}}\). We claim that there exists an \(j\)-truncated equilibrium \(\sigma^{j} \in \text{PSE}\left(G_{\text{S}}^{j}\right)\) such that
\[
u_{S}\left(j, \sigma^{j}\right) \geq u_{S}\left(j,m_{j},\text{BR}\left(m_{j},\mu'\right)\right) > u_{S}\left(j, \overline{\sigma}\right),
\]
which leads to a contradiction with the fact that \(\overline{\sigma}\) is the LMSE. To show the claim, we consider the following two cases.

Case 1: Type 1 does not pool with type \(j\) in the equilibrium \(\sigma\). Let \(k^* \in T^{j}\) be the highest type who does not pool with type \(j\) in the equilibrium \(\sigma\). Then \(1 \leq k^* < j\) and a \(k^*\)-truncated equilibrium \(\sigma^{k^*}\) can be derived from \(\sigma\) by deleting those types higher than \(k^*\) and keeping everything else unchanged. Let \(m^{k}_{k}\) denote the message sent by type \(k\) at the \(k\)-truncated equilibrium \(\sigma^{k}\). By Lemma~\ref{lemma:monotone-message}, we have \(m^{k^*}_{k^*} < m_{j}\). For any \(k^* \leq k < j\), let \(I_{k} = \left\{\left. t \in T^{k}\right| \sigma^{k}_{S}\left(t\right) = m^{k}_{k}\right\}\) and \(J_{k} = \{k+1, \dots, j\}\). Then, \(\sigma^{k^*}_{R}\left(m^{k^*}_{k^*}\right) = \text{BR}\left(m^{k^*}_{k^*}, I_{k^*}\right)\) and \(\text{BR}\left(m_{j},\mu'\right) = \text{BR}\left(m_{j}, J_{k^*}\right)\). We use induction to show that given the existence of a \(k\)-truncated equilibrium \(\sigma^{k}\) such that
\[
u_{S}\left(t, \sigma^{k}\right) \geq u_{S}\left(t, m_{j}, \text{BR}\left(m_{j}, J_{k^*}\right)\right) \quad \forall t \in T^{k}
\]
and \(m^{k}_{k}< m_j\), we can either directly construct a \(j\)-truncated equilibrium \(\sigma^j\) that satisfies our claim or indirectly construct a \(k+1\)-truncated equilibrium \(\sigma^{k+1}\) such that
\[
u_{S}\left(t, \sigma^{k+1}\right) \geq u_{S}\left(t, m_{j}, \text{BR}\left(m_{j}, J_{k^*}\right)\right) \quad \forall t \in T^{k+1}
\]
and \(m^{k+1}_{k+1}< m_j\). When \(k+1 = j\), we also prove our claim. Given the definition of \(\sigma^{k^*}\) and the equilibrium condition of \(\sigma\), we know that
\[
u_{S}\left(t, \sigma^{k^*}\right) = u_{S}\left(t, \sigma\right) \geq u_{S}\left(t, m_{j},\sigma_{R}\left(m_{j}\right)\right) \geq u_{S}\left(t, m_{j}, \text{BR}\left(m_{j}, J_{k^*}\right)\right) \quad \forall t \in T^{k^*}.
\]
For any \(k^* \leq k < j\), we consider type \(k+1\). By definition, type \(k+1\) pools with type \(j\) at \(\sigma\), i.e., \(\sigma_{S}\left(k+1\right) = m_{j}\). There are two cases:
\begin{itemize}
    \item Case 1.1: \(u_{S}\left(k+1, m^{k}_{k}, \text{BR}\left(m^{k}_{k}, I_{k}\right)\right) < u_{S}\left(k+1, m_{j}, \text{BR}\left(m_{j}, J_{k}\right)\right)\). Now we construct a \(j\)-truncated equilibrium \(\sigma^{j}\) based on \(\sigma^{k}\) by pooling types higher than \(k\) together. Since \(k \geq k^*\) and \(J_{k}\)-conditional belief first-order stochastic dominates \(J_{k^*}\)-conditional belief, we have
    \[
    u_{S}\left(k+1, m_{j}, \text{BR}\left(m_{j}, J_{k}\right)\right) \geq u_{S}\left(k+1, m_{j}, \text{BR}\left(m_{j}, J_{k^*}\right)\right).
    \]
    By the equilibrium condition of \(\sigma\) and A\ref{assumption:extreme-messages}, we have
    \begin{align*}
        & u_{S}\left(k+1, \sigma\right) = u_{S}\left(k+1, m_{j}, \text{BR}\left(m_{j}, J_{k^*}\right)\right) \\
        \geq & u_{S}\left(k+1, m^{l}, \sigma_{R}\left(m^{l}\right)\right) \geq u_{S}\left(k+1, m^{l}, \text{BR}\left(m^{l}, \left\{1\right\}\right)\right) \\
        > & u_{S}\left(k+1, m^{h}, \text{BR}\left(m^{h}, \left\{n\right\}\right)\right) \geq u_{S}\left(k+1, m^{h}, \text{BR}\left(m^{h}, J_{k}\right)\right)
    \end{align*}
    The above inequalities are due to the fact that the \(\left\{n\right\}\)-conditional belief first-order stochastic dominates \(J_{k}\)-conditional belief, and \(\left\{1\right\}\)-conditional belief is the worst belief. By the same argument, we have
    \[
    u_{S}\left(k+1, m^{k}_{k}, \text{BR}\left(m^{k}_{k}, I_{k}\right)\right) > u_{S}\left(k+1, m^{h}, \text{BR}\left(m^{h}, J_{k}\right)\right).
    \]
    Because \(u_{S}\left(k+1, m, \text{BR}\left(m, J_{k}\right)\right)\) is continuous in \(m\), there exists a message \(m^{j}_{j} \in \left[m_{j}, m^{h}\right)\) such that
    \begin{align*}
        & u_{S}\left(k+1, m^{j}_{j}, \text{BR}\left(m^{j}_{j}, J_{k}\right)\right) \\
            = &\max\left\{u_{S}\left(k+1, m_{j}, \text{BR}\left(m_{j}, J_{k^*}\right)\right), u_{S}\left(k+1, m^{k}_{k}, \text{BR}\left(m^{k}_{k}, I_{k}\right)\right)\right\},
    \end{align*}
    where \(m^{j}_{j} = m_{j}\) if and only if \(k = k^*\).
    
    The strategy-belief profile \(\sigma^{j}\) in the \(j\)-truncated game \(G^{j}_{S}\) is constructed based on the \(k\)-truncated equilibrium \(\sigma^{k}\) as follows:
    \begin{itemize}
        \item \(\forall t < k+1\), \(\sigma^{j}_{S}\left(t\right) = \sigma^{k}_{S}\left(t\right)\);
        \item \(\forall t \in J_{k}\), \(\sigma^{j}_{S}\left(t\right) = m_{j}^{j}\);
        \item \(\forall m \leq m^{k}_{k}\), \(\mu^{j}\left(\left.\cdot\right|m\right) = \mu^{k}\left(\left.\cdot\right|m\right)\) and \(\sigma^{j}_{R}\left(m\right) = \sigma^{k}_{R}\left(m\right)\).
        \item \(\mu^{j}\left(\left.\cdot\right|m^{j}_{j}\right) = p_{J_{k}}\) and \(\sigma^{j}_{R}\left(m^{j}_{j}\right) = \text{BR}\left(m^{j}_{j}, J_{k}\right)\).
        \item \(\forall m > m^{k}_{k}\) and \(m \neq m^{j}_{j}\), \(\mu^{j}\left(\left.\cdot\right|m\right) = p_{\left\{1\right\}}\) and \(\sigma^{j}_{R}\left(m\right) = \text{BR}\left(m, \left\{1\right\}\right)\).
    \end{itemize}
    Next we check that \(\sigma^{j}\) is a \(j\)-truncated equilibrium:
    \begin{itemize}
        \item \(t < k+1\):

        If \(m \leq m^{k}_{k}\), by the equilibrium condition of \(\sigma^{k}\) and the definition of \(\sigma^{j}\), we have
        \begin{align*}
                & u_{S}\left(t, \sigma^{j}\right) = u_{S}\left(t, \sigma^{k}\right)\\
            \geq & u_{S}\left(t, m, \text{BR}\left(m, \mu^{k}\right)\right) = u_{S}\left(t, m, \text{BR}\left(m, \mu^{j}\right)\right).
        \end{align*}
        If \(m = m^{j}_{j}\), by the equilibrium condition of \(\sigma^{k}\) and the reverse single-crossing condition (\(m^{k}_{k} <m_j \leq m_{j}^{j}\)), we have either
        \begin{align*}
            & u_{S}\left(t, \sigma^{j}\right) = u_{S}\left(t, \sigma^{k}\right) \\
            \geq & u_{S}\left(t, m_{j}, \text{BR}\left(m_{j}, J_{k^*}\right)\right) \geq u_{S}\left(t, m^{j}_{j}, \text{BR}\left(m^{j}_{j}, J_{k}\right)\right)
        \end{align*}
        when \(u_{S}\left(k+1, m^{j}_{j}, \text{BR}\left(m^{j}_{j}, J_{k}\right)\right) = u_{S}\left(k+1, m_{j}, \text{BR}\left(m_{j}, J_{k^*}\right)\right)\) or 
        \begin{align*}
            & u_{S}\left(t, \sigma^{j}\right) = u_{S}\left(t, \sigma^{k}\right) \\
            \geq & u_{S}\left(t, m^{k}_{k}, \text{BR}\left(m^{k}_{k}, I_{k}\right)\right) \geq u_{S}\left(t, m^{j}_{j}, \text{BR}\left(m^{j}_{j}, J_{k}\right)\right)
        \end{align*}
        when \(u_{S}\left(k+1, m^{j}_{j}, \text{BR}\left(m^{j}_{j}, J_{k}\right)\right) = u_{S}\left(k+1, m^{k}_{k}, \text{BR}\left(m^{k}_{k}, I_{k}\right)\right)\).

        If \(m > m^{k}_{k}\) and \(m \neq m^{j}_{j}\), by the equilibrium condition of \(\sigma^{k}\) and monotonicity (A\ref{assumption:monotonicity}), we have
        \begin{align*}
            & u_{S}\left(t, \sigma^{j}\right) = u_{S}\left(t, \sigma^{k}\right) \\
            \geq & u_{S}\left(t, m, \text{BR}\left(m, \mu^{k}\right)\right) \geq u_{S}\left(t, m, \text{BR}\left(m, \left\{1\right\}\right)\right).
        \end{align*}
        \item \(t \in J_{k}\):
        
        If \(m \leq m^{k}_{k}\), by the definition of \(m^{j}_{j}\) and the single-crossing condition, we have
        \begin{align*}
            & u_{S}\left(t, \sigma^{j}\right) = u_{S}\left(t, m^{j}_{j}, \text{BR}\left(m^{j}_{j}, J_{k}\right)\right) \\
            \geq & u_{S}\left(t, m^{k}_{k}, \text{BR}\left(m^{k}_{k}, I_{k}\right)\right)\\
            \geq & u_{S}\left(t, m, \text{BR}\left(m, \mu^{k}\right)\right) = u_{S}\left(t, m, \text{BR}\left(m, \mu^{j}\right)\right).
        \end{align*}

        If \(m > m^{k}_{k}\) and \(m \neq m^{j}_{j}\), by the equilibrium condition of \(\sigma^{k}\) and monotonicity, we have
        \begin{align*}
            & u_{S}\left(t, \sigma^{j}\right) = u_{S}\left(t, m^{j}_{j}, \text{BR}\left(m^{j}_{j}, J_{k}\right)\right) \\
            \geq & u_{S}\left(t, m^{k}_{k}, \text{BR}\left(m^{k}_{k}, I_{k}\right)\right)\\
            \geq & u_{S}\left(t, m^{l}, \text{BR}\left(m^{l}, \left\{1\right\}\right)\right) \geq u_{S}\left(t, m, \text{BR}\left(m, \left\{1\right\}\right)\right).
        \end{align*}
    \end{itemize}
    Hence, \(\sigma^{j}\) is a \(j\)-truncated equilibrium. Notice that by construction, we have
    \begin{align*}
        u_{S}\left(t, \sigma^{j}\right) &= u_{S}\left(t, \sigma^{k}\right) \geq u_{S}\left(t, m_{j}, \text{BR}\left(m_{j}, J_{k^*}\right)\right) \quad \forall t < k+1\\
        u_{S}\left(t, \sigma^{j}\right) &\geq u_{S}\left(t, m_{j}, \text{BR}\left(m_{j}, J_{k^*}\right)\right) \quad \forall t \in J_{k}.
    \end{align*}
    In particular, we have \(u_{S}\left(t, \sigma^{j}\right) \geq u_{S}\left(t, m_{j}, \text{BR}\left(m_{j}, \mu'\right)\right)\).
    \item Case 1.2: \(u_{S}\left(k+1, m^{k}_{k}, \text{BR}\left(m^{k}_{k}, I_{k}\right)\right) \geq u_{S}\left(k+1, m_{j}, \text{BR}\left(m_{j}, J_{k}\right)\right)\). Now we only need to show that there exists a \(k+1\)-truncated equilibrium
    \(\sigma^{k+1}\) such that
    \[
    u_{S}\left(t, \sigma^{k+1}\right) \geq u_{S}\left(t, m_{j}, \text{BR}\left(m_{j}, J_{k^*}\right)\right) \quad \forall t \in T^{k+1}
    \]
    and \(m^{k+1}_{k+1} < m_{j}\). If \(k+1 = j\), we prove the claim. Otherwise, we can replace \(k+1\), \(I_{k}\) and \(J_{k}\) by \(k+2\), \(I_{k+1} = I_{k} \cup \left\{k+1\right\}\) and \(J_{k+1}= J_{k} \setminus\left\{k+1\right\}\) respectively, repeat the same analysis for type \(k+2\) and so on until we reach type \(j\).
    
    We construct a \(k+1\)-truncated equilibrium \(\sigma^{k+1}\) based on the \(k\)-truncated equilibrium \(\sigma^{k}\) by letting type \(k+1\) pool with type \(k\). Denote \(l = \min\left\{t \in I_{k}\right\}\). Notice that
    \begin{align*}
        u_{S}\left(l, m^{k}_{k}, \text{BR}\left(m^{k}_{k}, I_{k+1}\right)\right) & >  u_{S}\left(l, m^{k}_{k}, \text{BR}\left(m^{k}_{k}, I_{k}\right)\right) = u_{S}\left(l, \sigma^{k}\right) \\
        u_{S}\left(l, m_{j}, \text{BR}\left(m_{j}, I_{k+1}\right)\right) & < u_{S}\left(l, m_{j}, \text{BR}\left(m_{j}, J_{k^*}\right)\right) \leq u_{S}\left(l, \sigma^{k}\right)
    \end{align*}
    The first inequality follows from the \(I_{k+1}\)-conditional belief first-order stochastically dominating the \(I_{k}\)-conditional belief. The second inequality follows from the \(J_{k^*}\)-conditional belief first-order stochastically dominating the \(I_{k+1}\)-conditional belief. The last equality follows from the property of \(\sigma^{k}\).

    Because \(u_{S}\left(l, m, \text{BR}\left(m, I_{k+1}\right)\right)\) is continuous in \(m\), there exists a \(m^{k+1}_{k+1} \in \left( m^{k}_{k}, m_{j}\right) \) such that
    \[
    u_{S}\left(l, m^{k+1}_{k+1}, \text{BR}\left(m^{k+1}_{k+1}, I_{k+1}\right)\right) = u_{S}\left(l, \sigma^{k}\right)
    \]
    The strategy-belief profile \(\sigma^{k+1}\) in the \(k+1\)-truncated game \(G_{\text{S}}^{k+1}\) is constructed based on the \(k\)-truncated equilibrium \(\sigma^{k}\) as follows:
    \begin{itemize}
        \item \(\forall t < l\), \(\sigma^{k+1}_{S}\left(t\right) = \sigma^{k}_{S}\left(t\right)\);
        \item \(\forall t \in I_{k+1}\), \(\sigma^{k+1}_{S}\left(t\right) = m^{k+1}_{k+1}\);
        \item \(\forall m < m^{k+1}_{k+1}\), \(\mu^{k+1}\left(\left.\cdot\right|m\right) = \mu^{k}\left(\left.\cdot\right|m\right)\) and \(\sigma^{k+1}_{R}\left(m\right) = \sigma^{k}_{R}\left(m\right)\).
        \item \(\mu^{k+1}\left(\left.\cdot\right|m^{k+1}_{k+1}\right) = p_{I_{k+1}}\) and \(\sigma^{k+1}_{R}\left(m^{k+1}_{k+1}\right) = \text{BR}\left(m^{k+1}_{k+1}, I_{k+1}\right)\).
        \item \(\forall m > m^{k+1}_{k+1}\), \(\mu^{k+1}\left(\left.\cdot\right|m\right) = p_{\left\{1\right\}}\) and \(\sigma^{k+1}_{R}\left(m\right) = \text{BR}\left(m, \left\{1\right\}\right)\).
    \end{itemize}
    Next we check that \(\sigma^{k+1}\) is a \(k+1\)-truncated equilibrium:
    \begin{itemize}
        \item \(t < l\):

        If \(m < m^{k+1}_{k+1}\), by the equilibrium condition of \(\sigma^{k}\) and the definition of \(\sigma^{k+1}\), we have
        \begin{align*}
                & u_{S}\left(t, \sigma^{k+1}\right) = u_{S}\left(t, \sigma^{k}\right)\\
            \geq & u_{S}\left(t, m, \text{BR}\left(m, \mu^{k}\right)\right) = u_{S}\left(t, m, \text{BR}\left(m, \mu^{k+1}\right)\right).
        \end{align*}
        If \(m = m^{k+1}_{k+1}\), by the equilibrium condition of \(\sigma^{k}\) and the reverse single-crossing condition (\(m^{k}_{k}<m^{k+1}_{k+1}\)), we have
        \begin{align*}
            & u_{S}\left(t, \sigma^{k+1}\right) \geq u_{S}\left(t, m^{k}_{k}, \text{BR}\left(m^{k}_{k}, I_{k}\right)\right) \\
            > & u_{S}\left(t, m^{k+1}_{k+1}, \text{BR}\left(m^{k+1}_{k+1}, I_{k+1}\right)\right).
        \end{align*}
        If \(m > m^{k+1}_{k+1}\), by the equilibrium condition of \(\sigma^{k}\) and monotonicity, we have
        \begin{align*}
            & u_{S}\left(t, \sigma^{k+1}\right) = u_{S}\left(t, \sigma^{k}\right)\\
            \geq & u_{S}\left(t, m, \text{BR}\left(m, \mu^{k}\right)\right) \geq u_{S}\left(t, m, \text{BR}\left(m, \left\{1\right\}\right)\right).
        \end{align*}
        \item \(t \in I_{k}\):
        
        If \(m < m^{k+1}_{k+1}\), by the equilibrium condition of \(\sigma^{k}\) and the single-crossing condition, we have
        \begin{align*}
            & u_{S}\left(t, \sigma^{k+1}\right) = u_{S}\left(t, m^{k+1}_{k+1}, \text{BR}\left(m^{k+1}_{k+1}, I_{k+1}\right)\right) \\
            \geq & u_{S}\left(t, m^{k}_{k}, \text{BR}\left(m^{k}_{k}, I_{k}\right)\right) = u_{S}\left(t, \sigma^{k}\right)\\
            \geq & u_{S}\left(t, m, \text{BR}\left(m, \mu^{k}\right)\right) = u_{S}\left(t, m, \text{BR}\left(m, \mu^{k+1}\right)\right).
        \end{align*}

        If \(m > m^{k+1}_{k+1}\), by the equilibrium condition of \(\sigma^{k}\) and monotonicity, we have
        \begin{align*}
            & u_{S}\left(t, \sigma^{k+1}\right) \geq u_{S}\left(t, \sigma^{k}\right)\\
            \geq & u_{S}\left(t, m, \text{BR}\left(m, \mu^{k}\right)\right) \geq u_{S}\left(t, m, \text{BR}\left(m, \left\{1\right\}\right)\right).
        \end{align*}

        \item \(t = k+1\):

        If \(m < m^{k+1}_{k+1}\), by the single-crossing condition and the previous results for \(t \in I_{k}\), we have
        \begin{align*}
            & u_{S}\left(k+1, \sigma^{k+1}\right) = u_{S}\left(k+1, m^{k+1}_{k+1}, \text{BR}\left(m^{k+1}_{k+1}, I_{k+1}\right)\right) \\
            \geq & u_{S}\left(k+1, m, \text{BR}\left(m, \mu^{k}\right)\right) = u_{S}\left(k+1, m, \text{BR}\left(m, \mu^{k+1}\right)\right).
        \end{align*}
        If \(m > m^{k+1}_{k+1}\), by the equilibrium condition of \(\sigma^{k}\) and the single-crossing condition (\(m^{k+1}_{k+1} > m^{l}\)), we have
        \begin{align*}
            & u_{S}\left(k+1, \sigma^{k+1}\right) = u_{S}\left(k+1, m^{k+1}_{k+1}, \text{BR}\left(m^{k+1}_{k+1}, I_{k+1}\right)\right) \\
            > & u_{S}\left(k+1, m^{l}, \text{BR}\left(m^{l}, \left\{1\right\}\right)\right) \geq u_{S}\left(k+1, m, \text{BR}\left(m, \left\{1\right\}\right)\right).
        \end{align*}
    \end{itemize}
    Hence, \(\sigma^{k+1}\) is a \(k+1\)-truncated equilibrium. Notice that by construction, we have
    \begin{align*}
        u_{S}\left(t, \sigma^{k+1}\right) &= u_{S}\left(t, \sigma^{k}\right) \geq u_{S}\left(t, m_{j}, \text{BR}\left(m_{j}, J_{k^*}\right)\right) \quad \forall t < l\\
        u_{S}\left(t, \sigma^{k+1}\right) &= u_{S}\left(t, m^{k+1}_{k+1}, \text{BR}\left(m^{k+1}_{k+1}, I_{k+1}\right)\right) \\
        &\geq u_{S}\left(t, m_{j}, \text{BR}\left(m_{j}, J_{k^*}\right)\right) \quad \forall t \in I_{k+1}.
    \end{align*}
    The second inequality follows from the fact that
    \begin{align*}
        u_{S}\left(l, \sigma^{k}\right) &\geq u_{S}\left(l, m_{j}, \text{BR}\left(m_{j}, J_{k^*}\right)\right) \\
        u_{S}\left(l, \sigma^{k}\right) &= u_{S}\left(l, m^{k+1}_{k+1}, \text{BR}\left(m^{k+1}_{k+1}, I_{k+1}\right)\right)
    \end{align*}
    and the single-crossing condition.

    In particular, we have \(u_{S}\left(k+1, \sigma^{k+1}\right) \geq u_{S}\left(t, m_{j}, \text{BR}\left(m_{j}, \mu'\right)\right)\).
    If \(k+1 = j\), we prove the claim. Otherwise, if
    \begin{align*}
    & u_{S}\left(k+2, m^{k+1}_{k+1}, \text{BR}\left(m^{k+1}_{k+1}, I_{k+1}\right)\right) \\
    \geq &u_{S}\left(k+2, m_{j}, \text{BR}\left(m_{j}, J_{k+1}\right)\right),
    \end{align*}
    we move to Case 1.2 and construct a \(k+2\)-truncated equilibrium based on \(\sigma^{k+1}\) as before.
    If
    \begin{align*}
    & u_{S}\left(k+2, m^{k+1}_{k+1}, \text{BR}\left(m^{k+1}_{k+1}, I_{k+1}\right)\right) \\
    < &u_{S}\left(k+2, m_{j}, \text{BR}\left(m_{j}, J_{k+1}\right)\right),
    \end{align*}
    we move to Case 1.1 and directly construct a \(j\)-truncated equilibrium \(\sigma^{j}\) based on \(\sigma^{k+1}\) as before.
\end{itemize}
Case 2: Type 1 pools with type \(j\) in the equilibrium \(\sigma\). Notice that we can construct a \(\left\{1\right\}\)-truncated equilibrium \(\sigma^{1}\) by letting the sender send the message \(m^{l}\) and assigning the \(\left\{1\right\}\)-conditional belief to the receiver for every message.
\begin{itemize}
    \item Case 2.1: \(u_{S}\left(1, m^{l}, \text{BR}\left(m^{l}, \left\{1\right\}\right)\right) > u_{S}\left(1, m_{j}, \text{BR}\left(m_{j}, \mu'\right)\right)\). Then \(m^{l} < m_{j}\), and we are in the same situation as Case 1 where \(k = 1\), \(I_{1} = \left\{1\right\}\), and \(J_{1} = \left\{2, \dots, j\right\}\). Hence, we can construct a \(j\)-truncated equilibrium \(\sigma^{j}\) based on \(\sigma^{1}\). 
    \item Case 2.2: \(u_{S}\left(1, m^{l}, \text{BR}\left(m^{l}, \left\{1\right\}\right)\right) \leq u_{S}\left(1, m_{j}, \text{BR}\left(m_{j}, \mu'\right)\right)\). Then we can directly construct a strategy-belief profile \(\sigma^{j}\) by pooling all types together as follows:
    \begin{itemize}
        \item \(\forall t \in T^{j}\), \(\sigma^{j}_{S}\left(t\right) = m_{j}\);
        \item \(\mu^{j}\left(\left.\cdot\right|m_{j}\right) = p_{T^{j}}\) and \(\sigma^{j}_{R}\left(m_{j}\right) = \text{BR}\left(m_{j}, T^{j}\right) = \text{BR}\left(m_{j}, \mu'\right)\).
        \item \(\forall m \neq m_{j}\), \(\mu^{j}\left(\left.\cdot\right|m\right) = p_{\left\{1\right\}}\) and \(\sigma^{j}_{R}\left(m\right) = \text{BR}\left(m, \left\{1\right\}\right)\).
    \end{itemize}
    We check that \(\sigma^{j}\) is a \(j\)-truncated equilibrium: for all \(t \in T^{j}\) and all \(m \neq m_{j}\), we have
    \[
    u_{S}\left(t, m_{j}, \text{BR}\left(m_{j}, T^{j}\right)\right) \geq u_{S}\left(t, m^{l}, \text{BR}\left(m^{l}, \left\{1\right\}\right)\right) \geq u_{S}\left(t, m, \text{BR}\left(m, \left\{1\right\}\right)\right),
    \]
    which follows from the single-crossing condition and A\ref{assumption:extreme-messages}. In particular, we have \(u_{S}\left(j, \sigma^{j}\right) \geq u_{S}\left(j, m_{j}, \text{BR}\left(m_{j}, \mu'\right)\right)\).
\end{itemize}

We have shown that if the unraveling condition \eqref{eq:unraveling} is violated at type \(j\), then there exists a \(j\)-truncated equilibrium \(\sigma^{j}\) such that
\[
u_{S}\left(j, \sigma^{j}\right) \geq u_{S}\left(j, m_{j}, \text{BR}\left(m_{j}, \mu'\right)\right) > u_{S}\left(j, \overline{\sigma}\right).
\]
However, this contradicts the fact that \(\overline{\sigma}\) is the LMSE (Lemma~\ref{lemma:LMSE}). Therefore, we conclude that the interpretation of \(\overline{m}\) in \(\overline{\sigma}\) and \(\sigma\) rather than triggers an unraveling.

For the LMSE \(\overline{\sigma}\) and any other equilibrium \(\sigma\), we have shown the existence of a message \(\overline{m}\) such that the interpretation of \(\overline{m}\) in \(\overline{\sigma}\) rather than \(\sigma\) triggers an unraveling. Hence, \(\overline{\sigma}\) is most persuasive.

\subsection{Proof of Theorem~\ref{thm:utility-uniqueness}}\label{thm:proof-utility-uniqueness}

We prove by contradiction. Suppose that there exists another most persuasive equilibrium \(\hat{\sigma}\) in the game \(G_{\text{S}}\) that is not payoff-equivalent for the sender to the LMSE \(\overline{\sigma}\) (Definition~\ref{def:payoff-equivalence}). Then, by the definition of most persuasive equilibrium (Definition~\ref{def:most-persuasive}), there exists a message \(\hat{m}\) such that the interpretation of \(\hat{m}\) in \(\hat{\sigma}\) rather than \(\overline{\sigma}\) triggers an unraveling. In particular, there exists a type \( i \in T\) who sends \(\hat{m}\) in the equilibrium \(\hat{\sigma}\) and strictly prefers \(\hat{\sigma}\) to \(\tilde{\sigma}\), i.e., \(u_{S}\left(i, \hat{\sigma}\right) > u_{S}\left(i, \overline{\sigma}\right)\). Now we show that the interpretation of \(\hat{m}\) in \(\hat{\sigma}\) rather than \(\overline{\sigma}\) can never trigger an unraveling irrespective of the ranking function \(f\), which contradicts the fact that \(\hat{\sigma}\) is most persuasive.
    
The first observation is that type \(i\) must pool with higher types in the equilibrium \(\hat{\sigma}\). Otherwise, \(\hat{\sigma}\) induces a \(i\)-truncated equilibrium \(\hat{\sigma}^{i}\) where \(u_{S}\left(i, \hat{\sigma}^{i}\right) > u_{S}\left(i, \overline{\sigma}\right)\), which contradicts the fact that \(\overline{\sigma}\) is the LMSE (Lemma~\ref{lemma:LMSE}). Secondly, if we denote by \(j > i\) the highest type pooling with type \(i\) in the equilibrium \(\hat{\sigma}\), then there must exist \( i < k \leq j \) such that \(u_{S}\left(k, \hat{\sigma}\right) < u_{S}\left(k, \overline{\sigma}\right)\). Let \(\overline{m}_{k}\) denote the message sent by type \(k\) in the LMSE \(\overline{\sigma}\). Following the notation of Definition~\ref{def:unraveling}, we replace \(\sigma\) and \(\overline{\sigma}\) by \(\overline{\sigma}\) and \(\hat{\sigma}\) respectively. For any ranking function \(f\), let
    \[
    t^{*} =\arg \max_{t \in T^{\hat{\sigma} < \overline{\sigma}}_{\hat{m}} \cap T^{\overline{\sigma}}_{\overline{m}_{k}}}  f\left(t\right).
    \]
Then, by the definition of \(t^{*}\), we have \(F^{\hat{\sigma} < \overline{\sigma}}_{\hat{m}}\left(t^{*}\right) \cap T^{\overline{\sigma}}_{\overline{m}_{k}} = \emptyset\). Our goal is to show that
\[
u_{S}\left(t^{*},\hat{\sigma}\right) < u_{S}\left(t^{*},\overline{m}_{k},\text{BR}\left(\overline{m}_{k},\overline{\mu}^*\right)\right) = u_{S}\left(t^{*},\overline{m}_{k},\text{BR}\left(\overline{m}_{k},U^{\overline{\sigma}}_{\overline{m}_{k}}\right)\right) .
\]
In other words, the interpretation of \(\hat{m}\) in \(\hat{\sigma}\) rather than \(\overline{\sigma}\) cannot trigger an unraveling, and the unraveling process has an early stop when reaching type \(t^{*}\).

Notice that
\begin{align}
    u_{S}\left(i, \hat{m}, \hat{\sigma}_{R}\left(\hat{m}\right)\right) = u_{S}\left(i, \hat{\sigma}\right) &> u_{S}\left(i, \overline{\sigma}\right) \geq u_{S}\left(i, \overline{m}_{k}, \overline{\sigma}_{R}\left(\overline{m}_{k}\right)\right) \label{eq:uniqueness-1} \\
    u_{S}\left(k, \hat{m}, \hat{\sigma}_{R}\left(\hat{m}\right)\right) = u_{S}\left(k, \hat{\sigma}\right) &< u_{S}\left(k, \overline{\sigma}\right) = u_{S}\left(k, \overline{m}_{k}, \overline{\sigma}_{R}\left(\overline{m}_{k}\right)\right) \label{eq:uniqueness-2}
\end{align}
First, we claim that \(\hat{m} < \overline{m}_{k}\). Otherwise, if \(\hat{m} = \overline{m}_{k}\), then by monotonicity, \eqref{eq:uniqueness-1} implies that the receiver must take a higher action in \(\hat{\sigma}\) than in \(\overline{\sigma}\) after observing the messages \(\hat{m}\) and \(\overline{m}_{k}\) respectively, which contradicts \eqref{eq:uniqueness-2}; if \(\hat{m} > \overline{m}_{k}\), by the single-crossing condition, \eqref{eq:uniqueness-1} implies that \(u_{S}\left(k, \hat{\sigma}\right) > u_{S}\left(k, \overline{\sigma}\right)\), which contradicts \eqref{eq:uniqueness-2} again. Hence, we have \(\hat{m} < \overline{m}_{k}\).

Next we claim that \(\min \left\{t \in T^{\hat{\sigma}}_{\hat{m}}\right\} \leq \min \left\{t \in T^{\overline{\sigma}}_{\overline{m}_k}\right\}\). Otherwise, \(i \in T^{\overline{\sigma}}_{\overline{m}_k}\). Given that \(u_{S}\left(k, \hat{\sigma}\right) < u_{S}\left(k, \overline{\sigma}\right)\) and \(u_{S}\left(i, \hat{\sigma}\right) > u_{S}\left(i, \overline{\sigma}\right)\), by the single-crossing condition, there exists \( i \leq i' < k\) such that: (1) for any type \(t \in T^{\hat{\sigma}}_{\hat{m}}\) such that \(t \leq i'\), we have \(u_{S}\left(t, \hat{\sigma}\right) \geq u_{S}\left(t, \overline{\sigma}\right)\); (2) for any type \(t \in T^{\hat{\sigma}}_{\hat{m}}\) such that \(t > i'\), we have \(u_{S}\left(t, \hat{\sigma}\right) < u_{S}\left(t, \overline{m}_k, \overline{\sigma}_{R}\left(\overline{m}_k\right)\right) \leq u_{S}\left(t, \overline{\sigma}\right)\). Let \(i'' = \max \left\{\left. t \in T \right| t < \min \left\{t \in T^{\hat{\sigma}}_{\hat{m}}\right\}\right\}\). We have
\[
    u_{S}\left(i'', \hat{\sigma}\right) \geq u_{S}\left(i'', \hat{m}, \hat{\sigma}_{R}\left(\hat{m}\right)\right) > u_{S}\left(i'', \overline{m}_k, \overline{\sigma}_{R}\left(\overline{m}_k\right)\right)= u_{S}\left(i'', \overline{\sigma}\right).
\]
Then, \(\hat{\sigma}\) induces a \(i''\)-truncated equilibrium \(\hat{\sigma}^{i''}\) where \(u_{S}\left(i'', \hat{\sigma}^{i''}\right) > u_{S}\left(i'', \overline{\sigma}\right)\), which contradicts the fact that \(\overline{\sigma}\) is the LMSE (Lemma~\ref{lemma:LMSE}).

Given that \(u_{S}\left(k, \hat{\sigma}\right) < u_{S}\left(k, \overline{\sigma}\right)\) and \(\min \left\{t \in T^{\hat{\sigma}}_{\hat{m}}\right\} \leq \min \left\{t \in T^{\overline{\sigma}}_{\overline{m}_k}\right\}\), by the single-crossing condition, there exists a cutoff type \(k' < k\) such that for any type \(t \in T^{\hat{\sigma}}_{\hat{m}} \cap T^{\overline{\sigma}}_{\overline{m}_{k}}\) and \(t \leq k'\), we have \(u_{S}\left(t, \hat{\sigma}\right) \geq u_{S}\left(t, \overline{\sigma}\right)\).\footnote{\(k' < \min \left\{t \in T^{\overline{\sigma}}_{\overline{m}_k}\right\}\) implies that \(u_{S}\left(t, \hat{\sigma}\right) < u_{S}\left(t, \overline{\sigma}\right)\) for any type \(t \in T^{\hat{\sigma}}_{\hat{m}} \cap T^{\overline{\sigma}}_{\overline{m}_{k}}\).} Then,
\begin{align*}
    T^{\hat{\sigma} \geq \overline{\sigma}}_{\hat{m}}  & = T^{\hat{\sigma} \geq \overline{\sigma}}_{\hat{m}} \cap \left\{\left.t \in T\right|t \leq k'\right\} \\
    U^{\overline{\sigma}}_{\overline{m}_{k}} & = T^{\overline{\sigma}}_{\overline{m}_{k}} \setminus \left(F^{\hat{\sigma} < \overline{\sigma}}_{\hat{m}}\left(t^{*}\right) \cup T^{\hat{\sigma} \geq \overline{\sigma}}_{\hat{m}}\right) = T^{\overline{\sigma}}_{\overline{m}_{k}} \cap \left\{\left.t \in T\right|t > k'\right\} \\
    u_{S}\left(t^{*}, \hat{\sigma}\right) & < u_{S}\left(t^*, \overline{\sigma}\right) \leq u_{S}\left(t^{*}, \overline{m}_{k}, \text{BR}\left(\overline{m}_{k}, U^{\overline{\sigma}}_{\overline{m}_{k}}\right)\right),
\end{align*}
which implies that the unraveling process stops early when reaching type \(t^{*}\).

Therefore, we have shown that the interpretation of \(\hat{m}\) in \(\hat{\sigma}\) rather than \(\overline{\sigma}\) can never trigger an unraveling irrespective of the ranking function, which contradicts the fact that \(\hat{\sigma}\) is most persuasive. Hence, the most persuasive equilibrium is unique up to payoff equivalence for the sender, which is determined by the LMSE.

\subsection{Proof of Theorem~\ref{thm:outcome-uniqueness}}\label{thm:proof-outcome-uniqueness}

Suppose that there are two most persuasive equilibria \(\hat{\sigma}\) and \(\overline{\sigma}\) in the game \(G_{\text{S}}\). By Theorem~\ref{thm:utility-uniqueness}, we know that they are payoff-equivalent for the sender, i.e., \(u_{S}\left(t, \hat{\sigma}\right) = u_{S}\left(t, \overline{\sigma}\right)\) for all \(t \in T\), and they are both LMSE. To show the uniqueness of the most persuasive equilibrium outcome, we only need to show that \(\hat{\sigma}_{S}\left(t\right) = \overline{\sigma}_{S}\left(t\right)\) for all \(t \in T\). When the sender's equilibrium strategy is pinned down, so does the equilibrium outcome.\footnote{\(\hat{\sigma}\) and \(\overline{\sigma}\) can differ in terms of the off-path beliefs and what happens following an off-path message.}

Notice that \(\hat{\sigma}\) and \(\overline{\sigma}\) must induce the same partition of the type space. If not, perturbing the prior will affect the posterior after each message, and thus the equality in payoffs will vanish, which implies that the most persuasive equilibrium outcome is generically unique. Now we show that the sender of the same type must send the same message in both equilibria. Suppose not. If \(\hat{\sigma}_{S}\left(j\right) = \hat{m}_{j} \neq \overline{m}_{j} = \overline{\sigma}_{S}\left(j\right)\) for some \(j \in T\), then 
\[
u_{S}\left(j, \hat{m}_{j}, \hat{\sigma}_{R}\left(\hat{m}_{j}\right)\right) = u_{S}\left(j, \overline{m}_{j}, \overline{\sigma}_{R}\left(\overline{m}_{j}\right)\right)
\]
implies that
\[
u_{S}\left(i, \hat{m}_{j}, \hat{\sigma}_{R}\left(\hat{m}_{j}\right)\right) \neq u_{S}\left(i, \overline{m}_{j}, \overline{\sigma}_{R}\left(\overline{m}_{j}\right)\right)
\]
if type \(j\) pools with any type \(i \neq j\) by the single-crossing condition, which contradicts the assumption that both equilibria generate the same equilibrium payoff for type \(i\). Hence, type \(j\) does not pool with any other type in both equilibria.

Consider the \(j\)-truncated game \(G_{\text{S}}^{j}\). Since type \(j\) does not pool with any other type, \(\overline{\sigma}\) and \(\hat{\sigma}\) induce two \(j\)-truncated equilibria \(\overline{\sigma}^{j}\) and \(\hat{\sigma}^{j}\). Strict quasi-concavity implies that \(u_{S}\left(j, m, \text{BR}\left(m, \left\{j\right\}\right)\right) > u_{S}\left(j, \overline{\sigma}^{j}\right) = u_{S}\left(j, \hat{\sigma}^{j}\right)\) for all \(m\) between \(\overline{m}_{j}\) and \(\hat{m}_{j}\). 

If there is no type lower than \(j\), i.e., \(j = 1\), then we can construct another \(j\)-truncated equilibrium \(\tilde{\sigma}^{j}\) by letting type \(j\) sending some message between \(\overline{m}_{j}\) and \(\hat{m}_{j}\). Then, type \(j\) achieves a higher payoff in \(\tilde{\sigma}^{j}\) than in \(\overline{\sigma}^{j}\), contradicting the assumption that \(\overline{\sigma}\) is the LMSE.

If there exist types lower than \(j\), then
\[
u_{S}\left(j-1, \overline{m}_{j}, \overline{\sigma}^{j}_{R}\left(\overline{m}_{j}\right)\right) \neq u_{S}\left(j-1, \hat{m}_{j}, \hat{\sigma}^{j}_{R}\left(\hat{m}_{j}\right)\right).
\]
Otherwise, the single-cross condition implies that \(u_{S}\left(j, \overline{\sigma}^{j}\right) \neq u_{S}\left(j, \hat{\sigma}^{j}\right)\), which contradicts our assumption. Since \(u_{S}\left(j-1, \overline{\sigma}^{j}\right) = u_{S}\left(j-1, \hat{\sigma}^{j}\right)\), we have either
\[
u_{S}\left(j-1, \overline{\sigma}^{j}\right) > u_{S}\left(j-1, \overline{m}_{j}, \overline{\sigma}^{j}_{R}\left(\overline{m}_{j}\right)\right)
\]
or
\[
u_{S}\left(j-1,\hat{\sigma}^{j}\right) > u_{S}\left(j-1, \hat{m}_{j}, \hat{\sigma}^{j}_{R}\left(\hat{m}_{j}\right)\right).
\]
Without loss of generality, we can take \(\overline{\sigma}^{j}\) for example. Given that
\[
u_{S}\left(j, m, \text{BR}\left(m, \left\{j\right\}\right)\right) > u_{S}\left(j, \overline{\sigma}^{j}\right)
\]
for all \(m\) between \(\overline{m}_{j}\) and \(\hat{m}_{j}\), continuity implies that there exists a message \(\tilde{m}_{j}\) close to \(\overline{m}_{j}\) such that
\begin{align*}
    u_{S}\left(j, \tilde{m}_{j}, \text{BR}\left(\tilde{m}_{j}, \left\{j\right\}\right)\right) &> u_{S}\left(j, \overline{\sigma}^{j}\right) \\
    u_{S}\left(j-1, \tilde{m}_{j}, \text{BR}\left(\tilde{m}_{j}, \left\{j\right\}\right)\right) &< u_{S}\left(j-1, \overline{\sigma}^{j}\right).
\end{align*}
By the single-crossing condition, we have \(u_{S}\left(t, \overline{\sigma}^{j}\right) > u_{S}\left(t, \tilde{m}_{j}, \text{BR}\left(\tilde{m}_{j}, \left\{j\right\}\right)\right)\) for all \(t < j\). Then, we can construct a \(j\)-truncated equilibrium \(\tilde{\sigma}^{j}\) based on \(\overline{\sigma}^{j}\) by letting \(\tilde{\sigma}^j\left(t\right) = \overline{\sigma}^{j}\left(t\right)\) for \(t < j\), and \(\tilde{\sigma}^j\left(j\right) = \tilde{m}_{j}\). Then, type \(j\) achieves a higher payoff in \(\tilde{\sigma}^{j}\) than in \(\overline{\sigma}^{j}\), contradicting the assumption that \(\overline{\sigma}\) is the LMSE.

Hence, the sender of the same type must send the same message in the two most persuasive equilibria \(\overline{\sigma}\) and \(\hat{\sigma}\), and they produce a unique lex max outcome, which is generically the unique most persuasive equilibrium outcome.

\section{Intuitive Explanation of the Criteria in Table~\ref{tab:comparison}}\label{sec:refinements}

We use the intuitive criterion as the benchmark and compare it with other refinements. We follow the two-step approach in Section~\ref{sec:persuasive}. The explanations are not meant to provide exact characterizations but to convey the essential intuition underlying each criterion. For ease of comparison, we begin by restating the intuitive criterion.

\subsection*{Intuitive Criterion \citep{choSignalingGamesStable1987}}
\begin{itemize}
    \item Step 1: Which types of the sender \emph{could benefit} by sending an off-path message \(m\)?
    
    We denote the set of such types as \(D\). Formally,
    \[
    D = \left\{\left. t \in T \right| u_{S}\left(t, \sigma\right) \leq \max_{a \in \text{BR}\left(m, \Delta\left(T\right)\right)} u_{S}\left(t, m, a\right)\right\},
    \]
    where \(\text{BR}\left(m, \Delta\left(T\right)\right) = \cup_{\mu \in \Delta\left(T\right)}\text{BR}\left(m, \mu\right)\).

    \item Step 2: If deviations only come from the set of types of the sender identified in Step 1, is the \emph{lowest} payoff from deviating higher than their equilibrium payoff for some type of the sender?

    Formally, if there exists \(t \in D\) such that
    \[
    \min_{a \in \text{BR}\left(m, \Delta\left(D\right)\right)} u_{S}\left(t, m, a\right) > u_{S}\left(t, \sigma\right),
    \]
    then this equilibrium \(\sigma\) fails the intuitive criterion.
\end{itemize}

\subsection*{D1 Criterion \citep{banksEquilibriumSelectionSignaling1987}}
\begin{itemize}
    \item Step 1: Which types of the sender  \emph{are most likely to benefit} by sending an off-path message \(m\)?
    \item Step 2: The same as Step 2 of the intuitive criterion.
\end{itemize}

Type \(t_1\) is more likely to benefit than type \(t_2\) by sending an off-path message \(m\), if whenever type \(t_2\) can weakly benefit by sending \(m\) under some action of the receiver, type \(t_1\) can strictly benefit by sending \(m\) under the same action. In the Spencian game, when the high-type worker pools with the medium-type worker in an equilibrium, the high-type worker is more likely to benefit by sending an off-path message than the medium-type worker. Then, in Step 2, we only consider the higher-type worker when checking whether the equilibrium fails the D1 criterion, even though the medium-type worker could also benefit from deviating. Hence, the D1 criterion is stronger than the intuitive criterion, and it uniquely selects the Riley outcome in monotone signaling games.\footnote{This explanation follows from \citet{munoz-garciaIntuitiveDivinityCriterion2011}.}

\subsection*{G-P Criterion \citep{grossmanPerfectSequentialEquilibrium1986}}

\begin{itemize}
		\item Step 1: The same as Step 1 of the intuitive criterion.
		\item Step 2: If deviations only come from a \emph{subset} \(\tilde{D} \subseteq D\) of the types of the sender identified in Step 1, is the \emph{expected} payoff from deviating higher than their equilibrium payoff \emph{only for} \(t \in \tilde{D}\)? Formally, if there exists \(\tilde{D} \subseteq D\) such that
		\begin{align*}
			u_{S}\left(t, m, \text{BR}\left(m, p_{\tilde{D}}\right)\right) &\geq u_{S}\left(t, \sigma\right) \quad \forall t \in \tilde{D}\\
			u_{S}\left(\tilde{t}, m, \text{BR}\left(m, p_{\tilde{D}}\right)\right) &> u_{S}\left(\tilde{t}, \sigma\right) \quad \exists \tilde{t} \in \tilde{D}\\
			u_{S}\left(t, m, \text{BR}\left(m, p_{\tilde{D}}\right)\right) &\leq u_{S}\left(t, \sigma\right) \quad \forall t \in T \setminus \tilde{D}
		\end{align*}
		where \(p_{\tilde{D}}\left(t\right) = p\left(\left.t\right|t \in \tilde{D}\right)\) is the \(\tilde{D}\)-conditional belief, then the equilibrium fails the G-P Criterion.
\end{itemize}

The G-P criterion is stronger than the intuitive criterion, because the belief applied in Step 2 is less pessimistic than that in the intuitive criterion. As a result, the G-P criterion uniquely select \(\sigma^{m_2}\) in Example~\ref{ex:hiding}. However, the G-P criterion suffers from the non-existence problem. In Example~\ref{ex:stiglitz}, the intuitive criterion selects the Riley outcome irrespective of the prior probability \(p\). Then, the only equilibrium outcome that could pass the G-P criterion is the Riley outcome. However, when \(p\) is close to zero, the Riley outcome fails the G-P criterion, because both the high-type and low-type workers could benefit from deviating to an education level close to zero, thereby obtaining an expected wage of \(2-p\).

\subsection*{Undefeated Equilibrium \citep*{mailathBeliefBasedRefinementsSignalling1993}}

\(\sigma'\) is defeated by \(\sigma\) if (1) there exists an \emph{off-path} message \(m\) in \(\sigma'\) that is \emph{on-path} in \(\sigma\) such that for \emph{every type} \(t\) that sends \(m\) in \(\sigma\), the \emph{equilibrium} payoff in \(\sigma\) is weakly higher than in \(\sigma'\); and (2) there exists one type \(t'\) sending \(m\) in \(\sigma\) whose equilibrium payoff is strictly higher than in \(\sigma'\).

An equilibrium is undefeated if it is not defeated by any other equilibrium. The undefeated equilibrium is typically not unique. In Example~\ref{ex:undefeated}, both the pooling equilibrium \(\sigma^{\text{Pooling}}\) and the LMSE \(\overline{\sigma}\) are undefeated. Neither the pooling equilibrium nor the LMSE defeats the other, because the medium-type worker prefers the pooling equilibrium, while the high-type worker prefers the LMSE. Similarly, in Example~\ref{ex:beer-quiche}, both pooling equilibria \(\sigma^{\text{Beer}}\) and \(\sigma^{\text{Quiche}}\) are undefeated, because the surly type prefers \(\sigma^{\text{Beer}}\), while the wimp type prefers \(\sigma^{\text{Quiche}}\).

\end{document}